\newtheorem{theorem}{Theorem}[section]
\newtheorem{definition}{Definition}
\newtheorem{lemma}[theorem]{Lemma}
\newtheorem{proposition}[theorem]{Proposition}
\definecolor{quantum}{rgb}{0.2,0.2,1}
\definecolor{laser}{rgb}{0.5,0.5,0}
\definecolor{microwave}{rgb}{1,0,0}
\newcommand{\ket}[1]{\left\vert#1\right\rangle}
\newcommand{\bra}[1]{\left\langle#1\right\vert}
\newcommand\blfootnote[1]{%
    \begingroup
    \renewcommand\thefootnote{}\footnote{#1}%
    \addtocounter{footnote}{-1}%
    \endgroup
}
\newcommand{\braket}[2]{\left\langle\left. #1 \right\vert #2 \right\rangle}
\newcommand{\ketbra}[2]{\left\vert \left. #1 \right\rangle \kern-0.2em \left\langle #2 \right .\right\vert}
\newcommand{\ceil}[1]{\left\lceil #1\right\rceil}
\newcommand{\N}{\mathbb{N}}
\newcommand{\Hil}{\mathcal{H}}
\newcommand{\floor}[1]{\left\lfloor #1\right\rfloor}
\newcommand{\todo}[1]{}
\newcommand{\op}[2]{|#1\rangle\langle #2 |}
\newcommand{\lnorm}[1]{\left\lVert #1 \right\rVert}
\newtheorem{problem}{Problem}
\crefname{problem}{Problem}{Problems}
\Crefname{Problem}{Problem}{Problem}
\renewcommand{\vec}[1]{\bm{#1}}
\renewcommand{\selectlanguage}[1]{}
\begin{document}

\title{Quantum random access memory: a survey and critique}%

\author{Samuel Jaques}
 \email{sejaques@uwaterloo.ca}
 \affiliation{Department of Combinatorics and Optimization, University of Waterloo, 200 University Ave W, Waterloo, Canada}
 \affiliation{Institute for Quantum computing, University of Waterloo, 200 University Ave W, Waterloo Canada}%

 \author{Arthur G. Rattew}
 \email{arthur.rattew@materials.ox.ac.uk}
 \affiliation{Department of Materials, University of Oxford, Parks Road, Oxford OX1 3PH, United Kingdom}

\begin{abstract}
Quantum random-access memory (QRAM) is a mechanism to access data (quantum or classical) based on addresses which are themselves a quantum state. QRAM has a long and controversial history, and here we survey and expand arguments and constructions for and against. 

We use two primary categories of QRAM from the literature: (1) active, which requires external intervention and control for each QRAM query (e.g. the error-corrected circuit model), and (2) passive, which requires no external input or energy once the query is initiated. In the active model, there is a powerful opportunity cost argument: in many applications, one could repurpose the control hardware for the qubits in the QRAM (or the qubits themselves) to run an extremely parallel classical algorithm to achieve the same results just as fast. We apply these arguments in detail to quantum linear algebra and prove that most asymptotic quantum advantage disappears with active QRAM systems, with some nuance related to the architectural assumptions. 

Escaping the constraints of active QRAM requires ballistic computation with passive memory, which creates an array of dubious physical assumptions, which we examine in detail. Considering these details, in everything we could find, all non-circuit QRAM proposals fall short in one aspect or another. 

In summary, we conclude that \emph{cheap, asymptotically scalable} passive QRAM is unlikely with existing proposals, due to fundamental obstacles that we highlight. These obstacles are deeply rooted in the requirements of QRAM, but are not provably inevitable; we hope that our results will help guide research into QRAM technologies that circumvent or mitigate these obstacles. Finally, circuit-based QRAM still helps in many applications, and so we additionally provide a survey of state-of-the-art techniques as a resource for algorithm designers using QRAM.
\end{abstract}

\maketitle

\section{Introduction}
Computers are machines to process data, and so access to data is a critical function. Quantum computers are no exception. Specialized memory-access devices (e.g., RAM) are now commonplace in classical computers because they are so useful. It thus seems intuitive that future quantum computers will have similar devices. 

Unfortunately for quantum computers, it's not clear that the analogy holds. Memory access, classical or quantum, requires a number of gates that grows proportional to the memory size. For a classical computer, manufacturing gates is generally a fixed cost and we care more about the runtime (though this assumption starts to break down with large scale, high-performance computing). In contrast, gates in most quantum technologies are active interventions, requiring energy or computational power to enact. 

The foundational work on QRAM, \cite{PRL:GioLLoMac08}, attempts to break from this model and imagines passive components to enact the memory, such that after a signal is sent into the device, it propagates without external control (i.e. ballistically) to complete the memory access. This would be more efficient, but might not be realistic.

QRAM has sparked significant controversy and yet numerous applications presume its existence. Because of this, results are scattered. Circuit-based QRAM, where one accepts a high gate cost, can be still be useful, so optimized techniques appear as intermediate results in subject-specific works. Here we attempt to survey and collect all such results. 

Similarly, there is a broad literature of attempts to construct QRAM (in theory or experiment) and many published criticisms of it. This paper aims to unify the existing criticisms and provide new ones, and consider all existing QRAM proposals in light of these criticisms. 

\subsection{Summary}\label{sec:summary}
To frame this paper, we first summarize existing arguments for QRAM and try to explain, conceptually, how to think about the nature of a QRAM device. 

\begin{figure*}
\resizebox{\textwidth}{!}{
\includegraphics{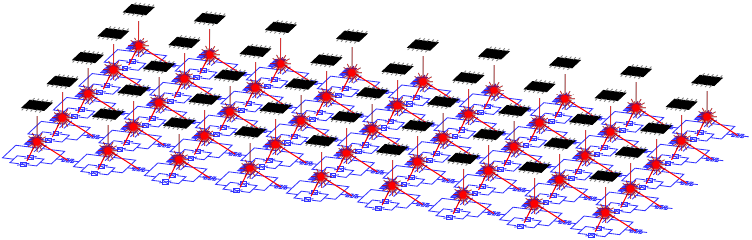}
}
\caption{An illustration of a quantum computer in the memory peripheral model~\cite{CRYPTO:JaqSch19}. Classical controllers are the chips on the top, which send signals (in red) to qubits (in blue).}\label{fig:qc-diagram}
\end{figure*}

Classically we are used to ``gates'' being static physical components that data propagates through, just as depicted in circuit diagrams. Applying that intuition to quantum computing is unjustified. In contrast, the memory peripheral framework of \cite{CRYPTO:JaqSch19} models quantum computing like \Cref{fig:qc-diagram}, where data (in the form of qubits) are static physical components, and gates are operations that a controller applies to the data. In this framework a quantum computer is a physical object (e.g., some Hilbert space) called a \emph{memory peripheral} that evolves independently under some Hamiltonian. There is also a \emph{memory controller} that can choose to intervene on the object, either by applying a quantum channel from some defined set, or modifying the Hamiltonian. This more accurately reflects today's superconducting and trapped-ion quantum computers, as well as surface code architectures of the future where classical co-processors will be necessary to handle error syndrome data, possibly in addition to scheduling and applying gates. 

While future devices may drastically reduce the overhead of each operation, there will still be some level of computational power necessary to manage each intervention in the device. As this paper will focus primarily on asymptotics, this distinction will matter less; e.g., a laser's microcontroller is not much different than any other application-specific integrated circuit.

Overall, this framework highlights a view of a quantum computer as a peripheral of a classical device. We imagine a quantum computer as a collection of quantum devices with many associated classical computers built into the architecture whose purpose is to manage the quantum devices.

The crux of QRAM is that if a query state is a superposition over all addresses, then for the device or circuit to respond appropriately, it must perform a memory access over \emph{all} addresses simultaneously. While ``it performs all possible computations at once'' is a classic misunderstanding of quantum computing, it applies more directly to QRAM: imagining the memory laid out in space, a QRAM access must transfer some information to each of the bits in memory if it hopes to correctly perform a superposition of memory accesses. This sets up an immediate contrast to classical memory, which can instead adaptively direct a signal through different parts of a memory circuit.

In the memory peripheral framework, we need to apply gates for every possible pattern of memory access, and since each gate requires some action from the controller -- be it time, computation, or energy -- the relevant cost is proportional to the total size of the memory. The problems with high time or energy costs are obvious, but \Cref{fig:qc-diagram} highlights that a computational cost is really an \emph{opportunity} cost. If we have a dedicated co-processor for every $O(1)$ qubits, could they do something besides run the quantum computer?

This captures arguments against particular uses of QRAM. If we are allowed to reprogram the control circuitry, then:
\begin{itemize}
\item
two-to-one collision finding is just as fast for the quantum computer or its controller~\cite{SHARCS:Bernstein09}
\item
claw finding is \emph{faster} for the controller than the quantum computer~\cite{CRYPTO:JaqSch19}
\item
the quantum advantage for matrix inversion drops from exponential to (at best) polynomial according to \cite{NP:Aaronson2015, ProcRoySoc:CHIP+18} (see \Cref{sec:quantum-linear-algebra} for more nuance)
\end{itemize} 

In short, a massive quantum computer implies a massive classical co-processor to control it, and that co-processor can quickly solve many problems on its own. 

This holds for what \cite{ProcRoySoc:CHIP+18} calls ``active'' QRAM. Proponents of QRAM can then argue that this is the wrong way to conceptualize QRAM. In classical computers, RAM is a separate device, manufactured differently than the CPU, designed specifically to perform memory accesses. From the CPU's perspective, it sends a signal into the RAM and waits for the RAM to process that signal on its own. An analogous quantum device -- what \cite{NP:Aaronson2015} denotes as ``passive'' QRAM -- would not incur the costs of external control, but would need incredible engineering to function correctly without that control. 

The memory peripheral framework also allows more complicated Hamiltonians and longer evolution to capture this situation. This reflects ``ballistic'' computation, where some environment is prepared and then allowed to propagate on its own. Today's boson sampling experiments are more like this: the classical control (e.g. motors or humans) must establish the optical layout and then inject a photon, but after this setup, computation will proceed without any further intervention.

The main technical contributions of this paper are a careful analysis of the requirements and assumptions of this other perspective. In short, we argue that for passive QRAM:
\begin{itemize}
\item
If the QRAM requires no intervention, it is ``ballistic'', and must be described by a single time-independent Hamiltonian;
\item 
No intervention means no active error correction, and hence each physical component must intrinsically have near-zero noise levels;
\item
Unless the device is carefully designed to limit error propagation, the physical component error rates must be problematically low;
\item
To be compatible with the rest of a fault tolerant computation, we need some way to apply the QRAM, which (in the passive case) acts only on physical qubits, to the logical qubits in an error corrected code;
\item
In particular, a QRAM gate cannot be teleported, so we must resort to other techniques which risk decohering the state.
\end{itemize}

In many ways the comparison to classical RAM is inaccurate, and passive QRAM faces daunting engineering challenges. In our opinion, these challenges are probably insurmountable, in which case quantum memory would always need to be active, and thus access to $N$ bits of QRAM would always have a total (parallellizable) cost proportional to $N$ or higher.

We stress that QRAM can still be a useful tool: in numerous applications in chemistry and quantum arithmetic, such as those in \Cref{sec:applications}, QRAM can offload expensive computations to a classical co-processor, providing a benefit that exceeds the overall access cost.

\subsection{Outline}
We first define our models of quantum computation and the definitions of QRAM in \Cref{sec:definitions}. There we summarize the different names and notions of this device from the literature, to unify the discussion.

\Cref{sec:applications} discusses the main uses of QRAM, and the requirements that each application places on the QRAM device, and the scale they require.

\Cref{sec:quantum-linear-algebra} analyses the implications of various QRAM models for the application of quantum linear algebra. We prove that in many regimes, an active QRAM cannot have quantum advantage over a parallel classical algorithm for a certain general linear algebra tasks.

Then in \Cref{sec:circuit-qram} we give state-of-the-art examples of \emph{circuit} QRAM from the literature. These are circuits to perform QRAM that are intended to work in the fault-tolerant layer by building the QRAM gate out of a more fundamental gate set. As expected, the cost is proportional to the number of bits of memory, and we show a tight lower bound for this.

\Cref{sec:gate-qram} then explores the concept of passive hardware-QRAM, i.e., a device tailor-made to perform QRAM accesses without intervention, with \Cref{sec:errors} focusing specifically on errors. These describe the precise errors listed in the introduction.

While some of these issues are quantum-specific, others seem to also apply to classical RAM, which we know is a feasible technology. In \Cref{sec:classical-ram}, we apply the same arguments to classical RAM. For each argument, either it does not apply because classical memory and its access is an easier task, or the analysis and real-world use of memory in large-scale devices already accounts for the costs we describe.

We survey approaches to the bucket-brigade in \Cref{sec:bucket-brigade}, noting how each different proposal fails in one or more of the criteria above. \Cref{sec:other-qram} describes 3 other architectures and one error correction method, all of which also fall short.

While \Cref{sec:definitions} and \Cref{sec:applications} will be useful for all readers, those interested in \emph{using} QRAM in their algorithms should focus on \Cref{sec:circuit-qram}. \Cref{sec:quantum-linear-algebra} is intended for readers focusing on quantum linear algebra. Those interested in the arguments against QRAM should read \Cref{sec:gate-qram}, \Cref{sec:errors}, and \Cref{sec:classical-ram}. Readers already familiar with this debate might focus on the original results, such as \Cref{thm:no-qracm-distillation-informal}, the critiques of specific schemes in \Cref{sec:bucket-brigade} and \Cref{sec:other-qram}, and the analysis of quantum linear algebra in \Cref{sec:quantum-linear-algebra}.

\tableofcontents

\section{Definitions and Notation}\label{sec:definitions}
We will frequently use asymptotic notation. Writing $f(n)\in O(g(n))$ means that $g(n)$ is an asymptotic upper bound for $f$, $f(n)\in \Omega(g(n))$ means $g(n)$ is an asymptotic \emph{lower} bound, and $f(n)\in \Theta(g(n))$ means both. $f(n)\in o(g(n))$ means that $\lim_{n\rightarrow\infty}\frac{f(n)}{g(n)}=0$. A tilde over any of these means we are ignoring logarithmic factors, e.g., $\tilde{O}(g(n))$ means $O(g(n)\log^k(n))$ for some constant $k$. We may write $f(n)=O(g(n))$ to mean the same as $f(n)\in O(g(n))$.

\subsection{QRAM Definitions}

We will follow the notation of Kuperberg~\cite{TQC:Kuperberg2013}, who proposes a taxonomy of four types of memory, based on the product of two criteria:
\begin{itemize}
\item
Does the device store classical or quantum data? (CM for classical memory or QM for quantum memory)
\item
Can the device access this data in superposition or not? (CRA for classical access or QRA for quantum access)
\end{itemize}
The ``R'' stands for ``Random'', though we will ignore this because a memory device does not need to gracefully handle random access patterns to fit the definitions.

This results in four types of memory:

\paragraph{CRACM (Classical access classical memory):} This is classical memory that can only be accessed classically, i.e., the memory we are all familiar with today. Kuperberg does not actually distinguish the \emph{speed} of memory access. Comparing quantum and classical, we do not really care. In this sense DRAM, flash memory, hard disk drives, and even magnetic tapes can all be considered CRACM. We discuss more properties of CRACM in \Cref{sec:classical-ram}.

\paragraph{CRAQM (Classical access quantum memory):} Here we have quantum memory, but which can only be accessed classically. The notion of ``access'' is somewhat fuzzy: obviously a classical controller cannot read quantum memory without destructively measuring it. Our notion of access will be more akin to ``addressable'', meaning the classical computer can find the required bits of quantum memory. This means essentially all qubits are CRAQM, since any quantum circuit will require a classical controller to find the necessary qubits to apply the gates for that circuit. 

Classically addressing qubits is a challenging and extremely important problem (e.g., \cite{QI:JPCL+19}), but it is fundamentally different and mostly unrelated to what this paper addresses. 

\paragraph{QRACM (Quantum access classical memory):} We formally define ``QRACM'' as follows:

\begin{definition}
Quantum random-access classical memory (QRACM) is a collection of unitarities $U_{QRACM}(T)$, where $T\in\{0,1\}^N$ is a table of data, such that for all states $\ket{i}$ in the computational basis, $0\leq i\leq N-1$,
\begin{equation}
U_{QRACM}(T)\ket{i}\ket{0} = \ket{i}\ket{T_i}.
\end{equation}
\end{definition}

Here $T$ is the classical memory and $\ket{i}$ is an address input. We call the register with $\ket{i}$ the \emph{address register} and the bit which transforms from $\ket{0}$ to $\ket{T_i}$ the \emph{output register} (recall $T_i\in\{0,1\}$). Since $U_{QRACM}(T)$ is unitary, superposition access follows by linearity. Our definition specifies a unitary action, but we do not require the device to be unitary: a quantum channel whose action is identical to (or approximates) conjugation by $U_{QRACM}(T)$ will still be considered QRACM (for example, the method in \cite{QUANTUM:BGMMB2019} uses measurement-based uncomputation).

In this definition, the table $T$ parameterizes the unitaries, since QRACM must take some classical data as input. This means that there is no single QRACM ``gate'': there is a \emph{family} of gates, parameterized by $T$. In fact, since $T$ has $N$ entries of $\{0,1\}$, there are $2^N$ possible memories. 

In some literature the table is written as a function $f$ so that $T_i=f(i)$. In this case, the QRAM might be written as an oracle $O_f$ implementing the mapping $O_f\ket{x}\ket{0} = \ket{x}\ket{f(x)}$. For some QRACM methods, the table does not need to be in classical memory if $f$ is efficiently computable.

We take $N$ as the number of elements in the table that the QRACM must access, which we will sometimes call \emph{words} (for cases where they are larger than single bits). This means the address register has $n:=\ceil{\lg N}$ qubits. Other papers choose a notation focusing on $n$, the number of qubits in the address, so that the number of words in the table is $2^n$ (and the number of possible tables is $2^{2^n}$). Using $n$ and $2^n$ emphasizes that, for certain algorithms, the size of the table will be exponential compared to other aspects of the input and running time. We will mainly use $N$ and $\ceil{\lg N}$ since the classical table must be input to the algorithm, and therefore anything $O(N)$ is linear in the input size, but we use $n$ throughout as shorthand for $\ceil{\lg N}$.

Of course most applications require the elements of $T$ to be more than single bits, but concatenating single-qubit QRACM gives multi-bit QRACM.

\paragraph{QRAQM (Quantum access quantum memory)}: We formally define this as:

\begin{definition}
Quantum random-access quantum memory (QRAQM) is a collection of unitaries $U_{QRAQM}(N)$, where $N\in\N$, such that for all states $\ket{i}$ with $0\leq i\leq N-1$ and all $N$-qubit states $\ket{T_0,T_1,\dots, T_{N-1}}$ (referred to as the \emph{data register}, where $T_0,\dots,T_{N-1}$ is a bitstring),
\begin{equation}
U_{QRAQM}\ket{i}\ket{0}\ket{T_0,\dots,T_{N-1}} = \ket{i}\ket{T_i}\ket{T_0, \dots, T_{N-1}}
\end{equation}
\end{definition}
The difference now is that the memory itself is a quantum state, meaning we could have a superposition of different tables. Here the gate is implicitly parameterized by the table size, and needs to suffice for all possible states of that size.

This definition only shows a unitary that can \emph{read} the quantum memory, but one can show that unitarity forces 
\begin{equation}
U_{QRAQM}\ket{i}\ket{1}\ket{T_0,\dots,T_{N-1}} = \ket{i}\ket{1\oplus T_i}\ket{T_0, \dots, T_{N-1}}
\end{equation}
and from this, one can conjugate $U_{QRAQM}(N)$ with Hadamard gates on all qubits in the output and data registers to construct a write operation. 

The literature still has some ambiguity about this definition. For example, we could define a SWAP-QRAQM such that 
\begin{align}
U_{S-QRAQRM}\ket{i}\ket{\psi}&\ket{\phi_0}\dots\ket{\phi_{N-1}}\nonumber\\
=\ket{i}\ket{\phi_i}&\ket{\phi_0}\dots\ket{\phi_{i-1}}\ket{\psi}\ket{\phi_{i+1}}\ket{\phi_{N-1}}.
\end{align}
SWAP-QRAQM is equivalent to the previous read-and-write QRAQM, if some ancilla qubits and local gates are used.

\subsection{Routing and Readout}
Imprecisely, QRACM and QRAQM have two tasks: the first is routing and the second is readout. 

By analogy, consider retrieving data classically from a network database. A request for the data will contain some address, and this will need to be appropriately routed to a physical location, such as a specific sector on a specific drive in a specific building. Once the signal reaches that location, the data must be copied out before it can be returned. For example, this could involve reading the magnetization of the hard drive. 

Difficulties can arise from either routing or readout, but we stress that both are necessary. \cite{TQC:Kuperberg2013} states ``Our own suggestion for a QRACM architecture is to express classical data with a 2-dimensional grid of pixels that rotate the polarization of light. (A liquid crystal display has a layer that does exactly that.) When a photon passes through such a grid, its polarization qubit reads the pixel grid in superposition.'' This seems to satisfy the requirements of the readout, but does not address routing: how do we generate the photon in such a way that its direction is precisely entangled with the address register, so that for state $\ket{i}$, the photon travels to the $i$th pixel in this grid?

Generally, the routing problem seems more difficult. If we have a device that routes data such that the physical location of some aspect of the device (such as the photon described above) becomes entangled with the input state given as address, then readout becomes relatively simple. A readout with the polarization mechanism described above could give QRACM, or a series of parallel controlled SWAP gates could give QRAQM. Thus, we mainly focus on routing in this paper. 

For that reason, we also will use the term ``QRAM'' to refer to a routing technology that would work for either QRACM or QRAQM with only modest changes to the readout mechanism (the bucket-brigade routing from~\cite{PRL:HarHasLloy09} is a perfect example).

\subsection{Alternate Notation}
``QRAM'' is used inconsistently throughout the literature. It is sometimes used for QRAQM or QRACM (e.g. \cite{ARXIV:PhaChaGho2023} uses ``QRAM'' for both). It is also sometimes meant to specifically refer to an operation that implements QRACM, but via a single gate. 

``QROM'' has been used to refer to QRACM implemented as a circuit (e.g., from Clifford+T gates) (e.g., \cite{PRX:BGB+2018}). QRACM must be read-only, so using QRAM to refer to QRAQM and QROM to refer to QRACM would be a consistent alternative notation; however, since the literature uses QRAM to refer to both kinds, we opt against this. 

In classical computing, devices such as hard drives or tapes \emph{can} implement the functionality of random-access memory, though the access times are longer and more variable than devices such as DRAM or flash memory. Analogously, different proposals achieve the same required function of QRACM -- to access memory in superposition -- but with radically different costs and access times. 

To be more precise than this, we will use \emph{circuit}-QRAM to refer to proposals to construct superposition access using a standard gate set, especially Clifford+T. 

We will use \emph{hardware}-QRAM to refer to proposals for QRAM consisting of a specialized device. Following \cite{NP:Aaronson2015,ProcRoySoc:CHIP+18}, we further categorize into \emph{active} hardware-QRAM, which requires $\Omega(N)$ active interventions from the classical controller for each access, and \emph{passive} hardware-QRAM, which requires $o(N)$ active interventions. 

If we imagine memory access by routing an address through a binary tree, the tree must have $N-1$ nodes (if $N$ is a power of 2), and each node must do some computation on the incoming signal. The difference between active and passive hardware-QRAM is whether that computation requires some external process to make it happen, or whether it proceeds on its own with no external intervention. 

\paragraph{Related Work.} The concurrent survey in \cite{ARXIV:PhaChaGho2023} explains some approaches to QRAQM and QRACM, though they do not consider the physical costs in the same way as our work. Specifically, they claim classical RAM requires $O(2^n)$ ``gate activations'' and quantum RAM requires $O(n)$, without defining a notion of ``gate activations''. Arguably the real situation is reversed: when each gate in a QRAM circuit requires external intervention then it costs $O(2^n)$, and classical RAM only needs to dissipate energy from $O(n)$ gates (see \Cref{sec:classical-ram} for a more nuanced picture).

\emph{State preparation} is a slight generalization of QRAM, with a circuit that performs
\begin{equation}
    \ket{x}\ket{0}\mapsto \ket{x}\ket{\psi_x}
\end{equation}
for some collection of states $\ket{\psi_x}$ for $x\in \{0,\dots, N-1\}$. Typically we assume a unitary $U_x$ for each $x$, such that $U_x\ket{0}=\ket{\psi_x}$. QRACM is the special case where $\ket{\psi_x}=\ket{f(x)}$ (e.g. a specific computational basis state).

The techniques for state preparation and QRAM are generally the same, e.g. the techniques of \cite{PRL:ZhaLiYua2022} are similar to bucket-brigade QRAM (\Cref{sec:circuit-bb}) and \cite{TCIC:STY+2023} is like a select-swap QRAM (\Cref{sec:select-swap}) using phase rotations.

\section{Applications}\label{sec:applications}
Mainly as motivation for later discussions, we note some leading applications of QRAM. A key point for all of these applications is that the competing classical algorithms parallelize almost perfectly, and hence the active QRAM opportunity cost arguments apply fully. 

\paragraph{Optimizing calculations.} Many quantum algorithms perform complicated classical functions in superposition, such as modular exponentiation in Shor's algorithm or inverse square roots in chemistry problems. Recent techniques~\cite{PRX:BGB+2018,TCHES:BBHL2020,ARXIV:Gidney19,Q:GidEke21,PQC:Haner20,QUANTUM:BGMMB2019} choose instead to classically pre-compute the result for a large portion of inputs, and then use QRACM to look up the result. 

The scale of QRACM here is around $2^{15}$ to $2^{25}$ bits of memory. Further, the overall error rate can afford to be fairly high. For example, using windowed elliptic curve point addition in Shor's algorithm requires only about 17 applications of the QRACM gate, so even if each one has much higher error than other fundamental gates, the overall algorithm will still succeed. 

Systematic QRACM errors are also less important in this context, as the inputs are generally large superpositions, so mistakes in a small proportion of inputs will only affect a small fraction of the superposition.

The referenced works already assign an $O(N)$ cost for QRACM look-ups, meaning that the arguments against cheap QRACM in this paper have no effect on their conclusions.

Cheap QRACM would also not drastically reduce the costs. First, for factoring, each table is distinct. Since the classical controller requires at least $N$ operations to construct a table of $N$ words, asymptotically the cost of using QRACM in this context is already proportional to $N$ just to \emph{construct} each QRACM table. While this classical pre-processing is likely much cheaper than a T-gate, it limits how much computation can be offset.

In contrast, quantum chemistry that uses QRACM to look up the value of a function of Coulumb potential~\cite{QUANTUM:BGMMB2019} will only have to access one table. Thus, cheap QRACM would reduce the cost of these algorithms, but it would only be a polynomial factor.

\paragraph{Dihedral Hidden Subgroup.} Kuperberg's algorithm for solving the dihedral hidden subgroup problem relies on QRACM~\cite{TQC:Kuperberg2013}. In this application, the QRACM table is only read once by the quantum computer. Hence, it is asymptotically irrelevant whether access costs $\Omega(N)$ for the quantum computer or less, since the classical controller has an $\Omega(N)$ cost to construct the table.

For cryptographic applications, the size of each QRACM access ranges from about $2^{18}$ bits to $2^{51}$, depending on the parameters~\cite{EC:BonSch2020,JCE:CCJR21,EC:Peikert2020}. The robustness to noise is unknown.

\paragraph{Exponential Quantum Cryptanalysis.} Algorithms with exponential run-time or space requirements are used in the security analysis for cryptography intended to be quantum-safe. Specifically, QRACM is needed in algorithms in collision finding~\cite{SIGACT:BraHoyTap1997} and lattice sieving~\cite{EPRINT:AlbShe2022,EPRINT:Heiser2021,SAC:Laarhoven2017,DCC:LaaMosPol2015}, and QRAQM is necessary for attacks based on quantum random walks, such as claw-finding~\cite{MFoCS:Tani2007}, information set decoding~\cite{ARXIV:KacTil2017,PQC:Kirshanova2018}, subset-sum~\cite{PQC:BJLM2013}, and other lattice sieving methods~\cite{AC:ChaLoy2021,EC:BCSS2022}. As NIST selected lattice cryptography for standardization in 2022~\cite{NIST2022}, lattice sieving is critical.

For the asymptotically ``fastest'' quantum lattice sieving against the smallest proposed parameters of the post-quantum schemes to be standardized~\cite{EC:BCSS2022}, the QRACM needs $2^{49}$ bits. Against mid-scale parameters one needs $2^{78}$ bits. 
For collision search against the ubiquitous hash function SHA-256, one needs $2^{93}$ bits of QRACM.

In addition to the hefty memory requirements, the accuracy of the memory must be extraordinarily high. \cite{ALP:RegSch08} and \cite{NJP:AGJMS15} show that Grover's algorithm needs exponentially small QRAQM error, but one can also see that most of these cryptography attacks require finding a \emph{single} element of the memory. If that element is destroyed because of noise, the algorithm fails. 

This is the most extreme application of QRAM with the most stringent requirements on the device. We will refer to these algorithms somewhat imprecisely as \emph{database QAA}, to indicate that we are using QAA (quantum amplitude amplification)~\cite{AMS:BHMT2002}, or a quantum random walk, to search over some classical database.

\paragraph{Context}\label{sec:context}
There may be a disconnect in the discourse around the feasibility of QRAM. Proponents and hardware designers may only be considering small-scale regimes of up to $2^{30}$ bits of QRAM, in which case the arguments presented here and elsewhere \emph{may} be surmountable: classical computational components in 2004 had error rates of $2^{-30}$ to $2^{-40}$~\cite{REPORT:Tezzaron04}, so perhaps quantum components could reach the same level. 

However, at the scale of e.g. $2^{20}$ bits, our conclusions are also less relevant; we know (\Cref{sec:circuit-qram}) how to perform $N$-bit QRAM with a circuit of $O(N)$ gates, so any algorithm using $2^{n}$ bits of QRAM will face only a $2^n$ overhead if physical QRAM assumptions are wrong. For something like $2^{20}$ bits, this does not make a large difference, especially considering the extreme uncertainties in future quantum computing overheads and architectures. 

In contrast, at the scale of $2^{50}$ bits, cheap passive QRAM causes a drastic change in algorithm costs, which has incentivized the research into algorithms that use QRAM. Yet, even minute physical costs become relevent if they scale with memory size. For perspective, if we lose a single visible photon's worth of energy for every bit in memory for each access, accessing $2^{50}$ bits costs about 1 kJ. 

As we move into cryptographic scales, QRAM becomes clearly absurd. A single photon's worth of energy per bit means 1 \emph{giga}-Joule per access if we have $2^{80}$ bits. Even more extreme: in a Fabry-Perot interforometer, the momentum change from a single microwave photon on a 10 g mirror would shift the mirror's position by a portion of $2^{-98}$ of the cavity's length. If the memory needs $2^{-128}$ precision (as one proposed QRAM technology, see \Cref{sec:phase-gate}), then this shift is a billion times larger than the precision needed.

Most of our conclusions in this paper suggest that an $N$-bit QRAM access will need energy/computation/some other cost to scale as $\Omega(N)$ (i.e., be active), and/or each physical component will need error rates of $O(1/N)$. These are asymptotic claims, and it's beyond what we can say today to claim that at at small-scale memory sizes, these terms will dominate other more relevant costs. However, at machine learning scales and beyond, we must consider these factors.

\section{Case Study: Quantum Linear Algebra}\label{sec:quantum-linear-algebra}

Certain statistical and/or machine learning problems can be framed as linear algebra, where we transform some vectors by matrices. Quantum linear algebra aims to accelerate these transformations; however, if the input data are large databases, this requires QRAM. The following references provide an overview of some of the techniques used in, and applications of, quantum machine learning, in some cases discussing their assumptions regarding the necessity of QRAM~\cite{biamonte2017quantum, gilyen2019quantum, pistoia2021quantum, liu2023towards}. Moreover, the following references provide an overview of, and detail important techniques in, quantum linear algebra with some discussing the creation of the QRAM data-structures relevant for linear algebra~\cite{PRL:HarHasLloy09, rebentrost2014quantum, kerenidis2016quantum, low2019hamiltonian, gilyen2019quantum, martyn2021grand, lin2022lecture}.

We now detail our input model and formally define the general linear algebra task we consider. We then discuss the necessity of QRAM (i.e. for ``unstructured'' input data), and evaluate the opportunity cost of QRAM by giving algorithms for the classical co-processors to solve the defined linear algebra problem. We conclude that without passive hardware-QRAM, quantum linear algebra algorithms provide no asymptotic advantage, unless we are at a scale large enough that the classical co-processors cannot share access to the same memory, but small enough that signal latency is negligible. 

Ref~\cite{NP:Aaronson2015} made these opportunity cost arguments for matrix inversion, \cite{TALK:Steiger16} applied them to an approach to ``Quantum PageRank'', and \cite{ProcRoySoc:CHIP+18} briefly notes that these arguments generalize. This section expands and generalizes these ideas, incorporating questions of physical layout.

\subsection{Input Model and Matrix Functions}

Formally, we assume we are given some matrix $H\in \mathbb{C}^{N\times N}$ such that $H=H^{\dagger}$, via query access to the elements of $H$, and some vector $\bm{v}\in \mathbb{C}^{N}$ again via query access to its elements. The query access is provided via oracles which return the output associated with any given input. For example, the oracle for matrix $H$ may be given by
\begin{align}
    O_H\ket{x}\ket{y}\ket{0} = \ket{x}\ket{y}\ket{H_{xy}}.
\end{align}
In the case where $H$ is sparse (i.e. $H$ has at most $d$ non-zero elements in any given row or column), we also assume we are given another oracle specifying the locations of the non-zero elements. Moreover, we restrict our attention to the case where the matrix under consideration is both square and Hermitian, noting that non-square and non-Hermitian matrices $\tilde{H}$ can be embedded into a Hermitian matrix $H := \begin{pmatrix} 0 & \tilde{H}\\ \tilde{H}^{\dagger} & 0\end{pmatrix}$, as is commonly done in the literature~\cite{PRL:HarHasLloy09}. In this embedding, the eigenvalues  of $H$ are directly related to the $\pm$ singular values of $\tilde{H}$, and its eigenvectors are a simple function of the right and left singular vectors of $H$. Finally, we define a function $f$ of a Hermitian matrix $H$ as a function of its eigenvalues, i.e. when $H\ket{\lambda_i} = \lambda_i\ket{\lambda_i}$, $f(H) := \sum_j f(\lambda_j)\op{\lambda_j}{\lambda_j}$. This differs slightly from some definitions, which define the function on the singular values.
We restrict our attention to general set of linear algebra problems which are equivalent to computing the vector $f(H)\bm{v}$.  We further restrict $f$ to be a degree-$k$ polynomial, with small loss of generality since polynomials can approximate most functions of interest on the domain $[-1, 1]$. We give the formal problem statement in \cref{problem:polynomial_eigenvalue_transform}.

\begin{problem}[Polynomial Eigenvalue Transform Problem]\label{problem:polynomial_eigenvalue_transform}
Given a polynomial function $f : [-1, 1] \mapsto \mathbb{C}$ of degree at most $k$, such that $\forall x, |f(x)| \le 1$, an $N\times 1$ initial vector $\bm{v}$ and a $d$-sparse, $N\times N$, Hermitian matrix $H$, specified by an oracle $O_H$ (and an additional location-oracle in the case that $d \in o(N)$), such that $\lnorm{H}_2\le 1$, compute $f(H)\bm{v}/\lnorm{f(H)\bm{v}}_2$. 
\end{problem}

Many quantum linear algebra tasks can be cast in the framework of \cref{problem:polynomial_eigenvalue_transform}, such as matrix inversion, Gibbs sampling, power iteration, and  Hamiltonian simulation~\cite{martyn2021grand}. In essence we are describing the Quantum Singular Value Transform (QSVT) framework~\cite{gilyen2019quantum}, which can be seen as a unifying framework for nearly all quantum algorithms as summarized in~\cite{martyn2021grand}. The QSVT is defined on the singular values rather than the eigenvalues, but in the case of Hermitian matrices, the singular values are just the absolute eigenvalues, and so the problems are practically equivalent (with some additional tedium). 

Of course, QRAM is not always necessary to offer query access to $H$ or $\bm{v}$. In some cases, there exist circuits with $\text{polylog}(N)$ depth which can implement the mappings, such as in~\cite{vedral1996quantum, bhaskar2015quantum, zhang2022quantum, rattew2022preparing}. With entries that are not efficiently computable, an oracle to $H$ needs QRAM access to the $Nd$ non-zero entries of $H$. We assume a ``wide'' QRAM such as a bucket-brigade or fanout-and-swap, which need $Nd$ quantum registers but provide $O(\log(Nd))$ circuit-depth per access. Our analysis holds if the QRAM is implemented with fewer quantum registers but with longer access times, as that benefits the classical co-processors in our comparison.

\subsection{Dequantization}

Tang~\cite{tang2019quantum} first introduced dequantization proving that the quantum algorithm for recommendation systems of Kerenidis and Prakash~\cite{kerenidis2016quantum} does not have exponential quantum advantage over a randomized classical algorithm making a similar input assumption. The argument starts by noting that the $Nd$ non-zero elements of the input matrix must be stored somewhere (e.g., by preparing a QRAM gate or data structure), and this creates a $\Omega(dN)$ set-up cost. For example, the nodes of a bucket-brigade QRAM must be initialized with the appropriate data.

A similar amount of classical precomputation can create a data structure that allows efficient $\ell_2$-norm sampling from all columns of the input matrix. Even if data is added in an online setting to the quantum memory (amortizing the cost), with similar cost a classical data structure can also be maintained with online updates. Combined with a large body of classical randomized numerical linear algebra techniques (e.g. \cite{kannan2017randomized, chepurko2022quantum}), one can obtain classical algorithms for \cref{problem:polynomial_eigenvalue_transform} with polylogarithmic dependence on the dimensions of the input.

Dequantization has some limitations. Most approaches require either the matrix to be low-rank~\cite{tang2019quantum, gilyen2018quantum, bakshi2023improved}, or with some caveats, for the matrix to be sparse and the polynomial function being applied to have at most a constant degree~\cite{bakshi2023improved}. 
If any classical algorithm could implement matrix inversion on sparse matrices (which is a special case of QSVT) with a polynomial dependence on the condition number (i.e., without a meaningful rank constraint), then since matrix inversion was shown to be BQP-complete~\cite{PRL:HarHasLloy09}, such a classical algorithm could efficiently simulate any polynomial time quantum algorithm. This would obviously be an extremely surprising result, and so we cannot expect a classical dequantized QSVT-styled algorithm to have rank-independence, in general. 
Moreover, dequantization also leaves room for some polynomial quantum advantage, with QSVT requiring $O(k)$ QRAM queries for a degree-$k$ polynomial, compared to an $O(k^9)$ cost for leading classical techniques~\cite{bakshi2023improved}. In general, see Figure 1 of \cite{chia2022sampling} to compare complexities of dequantization. New classical techniques could close these gaps, but currently there are many algorithms with at least a quartic speed-up over their dequantized counterparts, so if cheap QRAM existed, even early fault-tolerant devices might see an advantage~\cite{babbush2021focus}.

Critically, dequantization arguments allow cheap \textit{access} to the QRAM, and only consider the opportunity cost of the quantum algorithm's input assumptions (e.g. the cost to construct the QRAM data structure). Instead we focus on the access cost. 
If the QRAM requires classical control or co-processing (e.g. laser pulses enacting the gates, or classical resources dedicated to error-correction), these polynomial quantum advantages over their dequantized counterparts vanish (and indeed the quantum algorithms would potentially even be exponentially slower than their dequantized counterparts) -- such arguments even apply for some algorithms that cannot be dequantized.

\subsection{QRAM versus Parallel Classical Computation}

In this section we discuss the possibility of a general asymptotic quantum speedup in~\cref{problem:polynomial_eigenvalue_transform}, by considering the opportunity cost of the classical control for the QRAM. We assumed the QRAM has $O(Nd)$ logical qubits. With active error correction for each qubit, syndrome measurements and corrections will require substantial classical co-processing, with $O(1)$ classical processors per logical qubit. We assume each of these processors has constant-sized local memory with fast and efficient access. It seems unlikely that any \textit{error-corrected} QRAM architecture could use \textit{asymptotically} fewer classical resources.

Even without error correction, only passive hardware-QRAM avoids costs proportional to $Nd$. Advances in quantum computing could drive the constant of proportionality to be smaller, but asymptotically the arguments in this section will still hold, short of a significant breakthrough in the manufacturing of passive QRAM (see \Cref{sec:gate-qram} for a comparison of active vs. passive QRAM).

We then further consider some frequently neglected constraints on large scale parallel computing, such as signal latency, connectivity, and wire length, which gives us three regimes of scale. The regime that considers total wire length but neglects the speed of light is probably the least realistic regime that we consider, and we suspect that the number of connections between processors/memory is more relevant than the total length of wires. Nevertheless, we include this regime as it is the only one we found which led to potential quantum advantage.

To start the comparison, we give a lower-bound on the number of QRAM accesses that any quantum algorithm will need to solve \cref{problem:polynomial_eigenvalue_transform}, then show a classical algorithm that solves the problem with the same number of matrix-vector multiplications. We can then compare parallel classical matrix-vector multiplication to a single QRAM access.

\begin{theorem}[Quantum Polynomial Eigenvalue Transform]\label{theorem:quantum_polynomial_eigenvalue_transform}
Given a degree $k$ polynomial on the interval $x \in [-1, 1]$, with $f(x) = \sum_{j=0}^k a_j x^j$, a $d$-sparse Hermitian matrix $H \in \mathbb{C}^{N\times N}$ and an $\ell_2$-normalized vector $\ket{\psi}\in\mathbb{C}^N$, no \textit{general} quantum algorithm can prepare the state $f(H)\ket{\psi}/\lnorm{f(H)\ket{\psi}}_2$ with asymptotically fewer than $\Omega(k)$ QRAM queries to $H$, each query with cost $T$.
\end{theorem}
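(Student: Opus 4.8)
The plan is to establish the $\Omega(k)$ lower bound on QRAM queries by a hybrid/adversary-style argument, reducing from the known hardness of distinguishing polynomials of different degree via their action on carefully chosen inputs. First I would observe that a "general" quantum algorithm here means one that works for \emph{every} valid instance $(f,H,\bm v)$ of \cref{problem:polynomial_eigenvalue_transform}, accessing $H$ only through $O_H$ (and the location oracle); in particular it cannot exploit structure in $H$ beyond what the oracles reveal. The key is to exhibit a family of instances on which producing $f(H)\ket\psi/\lnorm{f(H)\ket\psi}_2$ requires "resolving" the $k$-th power of $H$, and then invoke a polynomial-degree or hybrid argument: with fewer than $\Omega(k)$ queries, the algorithm's output state is (approximately) independent of the degree-$k$ coefficient $a_k$, so it cannot be correct for both $f$ and a nearby polynomial $\tilde f$ of degree $k$ that differ only in $a_k$ but whose normalized target states are far apart.

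The concrete construction I would use is the standard one: take $H$ to be (a block containing) a shift/adjacency-type operator — e.g.\ a single $d$-sparse Jordan block or a path-graph adjacency matrix on $\sim k+1$ vertices, suitably rescaled so $\lnorm H_2\le 1$ — and $\ket\psi=\ket{0}$ a corner basis state. Then $H^j\ket 0$ is supported on basis state $\ket j$ (or a known small neighbourhood), so $f(H)\ket 0=\sum_{j=0}^k a_j H^j\ket0$ has a component on $\ket k$ with amplitude proportional to $a_k$ that is reachable \emph{only} after the oracle has been "walked" $k$ steps. A query to $O_H$ from a state supported on $\{\ket 0,\dots,\ket{t}\}$ can extend the support to $\{\ket 0,\dots,\ket{t+1}\}$ at most, so by a light-cone / hybrid argument the reduced state on the $\ket k$ component after $q<k$ queries is identical for all choices of $a_k$. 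Choosing two degree-$k$ polynomials agreeing in $a_0,\dots,a_{k-1}$ but with $a_k$ versus $-a_k$ (both satisfying $|f|\le1$ on $[-1,1]$, which is possible after an $O(1)$ rescaling, e.g.\ via scaled Chebyshev polynomials) yields normalized targets with constant trace distance, contradicting correctness. This gives $q\in\Omega(k)$; multiplying by the per-query cost $T$ is immediate and is really just bookkeeping for the later comparison with classical matrix–vector multiplication.

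The main obstacle I anticipate is making the "general algorithm" clause and the normalization do real work rather than hand-waving. Two subtleties: (i) the target is the \emph{normalized} vector $f(H)\bm v/\lnorm{f(H)\bm v}_2$, so I must ensure the two instances have targets that differ by a constant in trace distance even after normalization — this needs a lower bound $\lnorm{f(H)\ket 0}_2=\Omega(1)$ and a genuine change in direction, which the Chebyshev-type choice handles but requires a short explicit estimate; and (ii) the algorithm may use the location oracle and ancillas freely, so the light-cone argument must be phrased at the level of the joint state of (address, data, workspace) registers, showing that query number $t$ can only couple in information about $H_{xy}$ for $\ket x$ within distance $t$ of the support of $\ket\psi$ — standard but needs care with the $d$-sparsity bookkeeping. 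I expect the cleanest writeup routes (i) through a reduction: any algorithm beating $\Omega(k)$ queries on \cref{problem:polynomial_eigenvalue_transform} would distinguish these two oracle-instances in $o(k)$ queries, contradicting the $\Omega(k)$ lower bound for such "parity-of-a-walk" style problems (equivalently, the optimality of QSVT's query complexity, already implicit in \cite{gilyen2019quantum}). If an off-the-shelf optimality statement for QSVT query complexity can be cited, the proof collapses to verifying that \cref{problem:polynomial_eigenvalue_transform} is at least as hard as that canonical problem, which is the path I would take to keep the argument short.
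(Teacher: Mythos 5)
Your proposal takes a different route from the paper, and as written it has a fatal gap. The paper proves the bound by reduction \emph{from unstructured search}: it builds a $(1,9,0)$-block-encoding of $\frac{1}{\sqrt{N}}\op{m}{+^n}$ using $O(1)$ calls to a Grover-type oracle, dilates to a Hermitian matrix with eigenvalues $\pm\frac{1}{\sqrt N}$, and applies a degree-$\tilde O(\sqrt N)$ polynomial approximation of the sign function to boost the marked state to constant amplitude; implementing that polynomial transform with $o(k)$ queries would then solve search in $o(\sqrt N)$ queries, contradicting BBBV. Your closing remark --- ``cite the optimality of QSVT's query complexity and reduce to a canonical hard problem'' --- is essentially this, so you correctly identified the viable fallback. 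But the argument you actually develop does not work, for two reasons.

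First, your hybrid is over the polynomial $f$, not over the oracle. The coefficients $a_0,\dots,a_k$ are explicit classical input, not hidden behind $O_H$; the algorithm may hard-code them, so its output can depend on $a_k$ arbitrarily even with \emph{zero} queries. A query lower bound must come from an indistinguishability argument over the object accessed by queries, namely $H$. Worse, in your construction $H$ is a fixed, publicly known path graph with canonical vertex labels, so for either choice of sign of $a_k$ the target state $f(H)\ket 0$ is an explicitly computable vector that the algorithm can prepare directly with no queries at all. Second, the light-cone step is false for oracle algorithms: a query $O_H\ket{x}\ket{y}\ket{0}=\ket{x}\ket{y}\ket{H_{xy}}$ may be made in superposition over \emph{all} index pairs, so a single query ``touches'' every entry of $H$; nothing confines the algorithm's knowledge after $t$ queries to a radius-$t$ ball around the support of $\ket\psi$. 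A locality-style argument of the kind you want can be salvaged, but only by randomly permuting the vertex labels of the path and reducing from parity (this is exactly the no-fast-forwarding lower bound of Berry--Ahokas--Cleve--Sanders for Hamiltonian simulation); that machinery, plus the Chebyshev bookkeeping needed to keep the two normalized targets at constant distance, is absent from your writeup. As it stands the proposal does not establish the theorem.
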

This follows immediately from \Cref{lemma:quantum_polynomial_eigenvalue_transformation_lower_bound}. The result of \cite{gilyen2019quantum} almost immediately implies this bound, but the following derivation is included to enhance the clarity of our discussion. 

The main idea is that we can efficiently construct a block-encoding (a method to encode non-unitary matrices as the top-left block of a unitary) of a rank one projector from a uniform superposition to the marked state of an unstructured search problem (specified via an oracle) with a singular value equal to $\frac{1}{\sqrt{N}}$ (where $N$ is the dimension of the space). We can do this with only $O(1)$ queries to the unstructured search oracle; however, immediately applying the block encoding to an initial state has a low probability of success. We thus embed this in a rank-2 Hermitian matrix and apply a polynomial approximation to the sign function on the eigenvalues of this Hermitian matrix. The degree of this polynomial is $\tilde{O}(\sqrt{N})$, and it gives a constant probability of success in measuring a solution to the unstructured search problem. Consequently, if we could implement this polynomial with $o(\sqrt{N})$ queries, we could solve the unstructured search with less than $\Omega(\sqrt{N})$ complexity. Since unstructured search has an $\Omega(\sqrt{N})$ lower bound, this creates a lower bound of $\Omega(k)$ for any general quantum algorithm implementing an eigenvalue transform of a degree-$k$ polynomial.

\begin{table*}
\begin{tabular}{l l  l  l  l}
\hline
\hline
     Scale & \multicolumn{2}{ c }{Assumptions} &\multicolumn{2}{c }{Quantum advantage}\\
     \cline{2-5}
     & Free wires\hspace{1em} & Instant communication\hspace{1em} & Sparse Matrices\hspace{1em} & Dense Matrices \\
     \hline
     Small & Yes & Yes & None & None\\
     Medium & No & Yes & $\tilde{O}((Nd)^{1/2})$ & None \\ 
     Large & No & No & None & None\\
     \hline
\end{tabular}
\caption{Comparing classical versus quantum algorithms for \cref{problem:polynomial_eigenvalue_transform} for $N\times N$ matrices, under different architectural assumptions with active QRAM. It may appear strange that the potential quantum speedup in the Medium-Scale regime increases as $d$ increases, and then suddenly vanishes when $d\in \Omega(N)$. We suspect that a classical algorithm for the medium-scale regime for sparse matrices with $d \gg 1$ could be improved.}\label{tab:QLA-comparison}
\end{table*}

\begin{theorem}[Classical Parallel Polynomial Eigenvalue Transform]\label{lemma:classical_polynomial_eigenvalue_transform}
Given a degree $k$ polynomial on the interval $x \in [-1, 1]$, with $f(x) = \sum_{j=0}^k a_j x^j$, a $d$-sparse Hermitian matrix $H \in \mathbb{C}^{N\times N}$, and an $\ell_2$-normalized vector $\bm{v} \in\mathbb{C}^N$, a parallel classical algorithm can exactly compute the vector $f(H)\bm{v} / \lnorm{f(H)\bm{v}}_2$ with $O(k)$ matrix-vector multiplications and $k$ vector-vector additions.
\end{theorem}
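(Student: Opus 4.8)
The plan is to give a direct, constructive parallel algorithm that evaluates $f(H)\bm{v}$ via Horner's scheme and then normalizes. First I would write $f(x) = \sum_{j=0}^k a_j x^j$ in nested form, $f(x) = a_0 + x(a_1 + x(a_2 + \cdots + x a_k)\cdots)$, so that evaluating $f(H)\bm{v}$ reduces to the recurrence $\bm{w}_k = a_k \bm{v}$ and $\bm{w}_{j} = a_j \bm{v} + H\bm{w}_{j+1}$ for $j = k-1, \dots, 0$, with $\bm{w}_0 = f(H)\bm{v}$. This uses exactly $k$ matrix-vector products $H\bm{w}_{j+1}$ (each followed by a scalar-multiply-and-add, i.e.\ one vector-vector addition after absorbing the scalar $a_j$ into the precomputed vector $a_j\bm{v}$, or $k$ AXPY-type operations), matching the claimed $O(k)$ matrix-vector multiplications and $k$ vector-vector additions. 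The final step is to compute $\lnorm{\bm{w}_0}_2$ and divide, which is one more norm computation and a scalar-vector scaling; since this is $O(1)$ additional matrix-vector-equivalent work it does not change the count. Exactness (as opposed to approximate) is immediate because every operation is an exact arithmetic operation on the given data — there is no polynomial approximation step anywhere, in contrast to the quantum QSVT route.

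The only substantive point is the ``parallel'' qualifier and the implicit claim that each matrix-vector multiplication and each vector addition is cheap when parallelized; this is what makes the theorem a genuine opportunity-cost statement rather than a triviality. Here I would invoke the fact, stated in the surrounding text, that the QRAM is assumed to hold $O(Nd)$ logical qubits each with $O(1)$ dedicated classical processors, so the comparison model grants the classical side $\Theta(Nd)$ processors. A $d$-sparse matrix-vector product $H\bm{w}$ has $O(Nd)$ nonzero scalar multiplications; distributing the (at most $d$) nonzeros of each row to its own processor and using a standard parallel reduction (a binary tree of additions) to sum each row gives depth $O(\log d) \subseteq O(\log N)$ per matrix-vector product, hence $O(k\log N)$ total depth, i.e.\ $\tilde O(k)$. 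A vector-vector addition is trivially depth $O(1)$ with $N$ processors, and the final $\ell_2$-norm is another $O(\log N)$-depth reduction. So the whole computation runs in $\tilde O(k)$ parallel time on $\Theta(Nd)$ classical processors — directly comparable, modulo log factors, to the $\Omega(k)$ QRAM queries of \Cref{theorem:quantum_polynomial_eigenvalue_transform} each costing $T$.

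The main obstacle — really the only place any care is needed — is matching the input-access conventions so that the classical algorithm is legitimately using the same data the quantum algorithm gets, and not smuggling in extra power. Concretely: in the sparse case the quantum algorithm accesses $H$ through $O_H$ plus a location oracle listing nonzero positions, and the classical co-processors accessing the same underlying data must be able to enumerate, for each row $x$, its $\le d$ nonzero column indices and the corresponding entries $H_{xy}$; this is exactly what lets each processor know which product to form, so the $O(Nd)$-processor, $O(k\log N)$-depth bound goes through. I would also note briefly that when $d \in \Omega(N)$ the ``location oracle'' is vacuous and the dense bound $O(Nd) = O(N^2)$ processors with $O(k\log N)$ depth applies verbatim. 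Beyond this bookkeeping there is nothing deep: the theorem is essentially the observation that classical polynomial evaluation of a matrix polynomial is Horner's rule, and Horner's rule is embarrassingly parallel at the granularity of individual matrix entries, which is precisely the granularity at which an active QRAM pays its control cost.
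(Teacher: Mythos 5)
Your proposal is correct and follows essentially the same route as the paper's proof: reduce $f(H)\bm{v}$ to $k$ sequential matrix-vector products with $k$ accumulating vector additions, then normalize with a parallel reduction, with the parallel per-multiplication cost deferred to the sparse matrix-vector lemmas. The only difference is that you evaluate the polynomial via Horner's nested form while the paper accumulates the powers $H^j\bm{v}$ into a running sum; both use identical resources and the distinction is immaterial.
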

\begin{proof}
By definition, 
\begin{equation}
    f(H) = \sum_{j=0}^{N-1} \sum_{l=0}^{k}a_l \lambda_j^l \op{\lambda_j}{\lambda_j}.
\end{equation}
Applying the completeness of the eigenvectors of $H$, we immediately get $f(H) = \sum_{j=0}^k a_j H^j$. This means we must simply compute $f(H)\bm{v} = \sum_{j=0}^k a_j H^j\bm{v}$. This can be done with $k$ matrix-vector multiplications by computing the powers $H^1\bm{v}$, $H^2\bm{v}$,$\dots$,$H^k\bm{v}$, and adding $a_jH^j\bm{v}$ to a running total vector after each multiplication. We do not need to store all the powers $H^j\bm{v}$, since we can delete each one once we add it to the total and compute the next power of $H$. Once the unnormalised vector $f(H)\bm{v}$ has been constructed, we can normalize it by computing the $\ell_2$ norm (which requires $O(N)$ parallelizable operations, and $O(\log N)$ joining steps), and then by diving each element in the output vector by this norm, which requires a further $O(N/P)$ steps. We assume that the cost of matrix-vector multiplications dominates the cost of this normalization.
\end{proof}

From now on we ignore the cost of vector-vector additions (which parallelize perfectly), since they are likely negligible compared to matrix-vector multiplications. Thus, our comparison comes down to to the expense of matrix-vector multiplication. In our computational model, the QRAM with $Nd$ registers implies a proportional number of classical processors, so we now consider highly parallel matrix-vector multiplication at different scales. \Cref{tab:QLA-comparison} summarizes our conclusions.

\paragraph{Small-Scale Regime:} At this scale we ignore all latency and connectivity concerns. The QRAM access can then be done in $\Theta(\log(Nd))$ time (i.e., circuit depth). We assume the classical processors can access a shared memory of size $Nd$, also with $O(\log(Nd))$ access time.

A standard parallel matrix multiplication algorithm (see \cref{lemma:classical_matrix_vector_multiplication}) allows the the $P = \Theta(dN)$ processors to multiply the matrix and vector in $\tilde{O}(1)$ time, matching the assumed time of the QRAM access. Together with \cref{theorem:quantum_polynomial_eigenvalue_transform} and \cref{lemma:classical_polynomial_eigenvalue_transform}, the classical algorithms achieves $\tilde{O}(k)$ scaling, while the quantum algorithm achieves a $\tilde{\Omega}(k)$ lower-bound for the general linear algebra task, so \textit{there can be no asymptotic quantum advantage} at this scale.

\paragraph{Medium-Scale Regime:} With a large matrix, shared memory access is unrealistic. One problem is that it requires $\Omega((Nd)^2)$ wires to connect each processor to each memory element. The QRAM does not need so many wires; a bucket-brigade QRAM needs only $O(Nd)$ wires, for example.

Thus, in this regime we only allow each processor to have a small number of wires (and count the total length of wires). Connecting the processors to each other and restricting access to local memory requires different algorithms. \cref{lemma:matrix_multiplication_via_hypercube_sort} shows how to multiply a sparse matrix and vector using $O(1)$ sorts. With $O(\log(Nd))$ connections per processor (and thus $\tilde{O}(Nd)$ wires), the processors can form a hypercube and sort in polylogarithmic time. In this case, the time for QRAM access and classical matrix-vector multiplication match, again negating any quantum advantage. 

However, we should also consider the \emph{length} of the wires. By ``wire'' we mean any communication infrastructure, under the assumptions that (a) two connections cost twice as much as one connection, and (b) a connection which is twice as long will cost twice as much. This means the architecture needs more resources to have longer connections. A real architecture must embed into the 3-dimensional Euclidean space we live in, and we further assume it embeds into 2-dimensional space to account for heat dissipation. Each classical processor has a finite size, so the total width of this machine must be $\Omega((Nd)^{1/2})$. This means the length of wires in this hypercube scales to $\tilde{\Omega}((Nd)^{3/2})$, which is also infeasible. 

The total width of the QRAM must also be $\Omega((Nd)^{1/2})$ by the same reasoning, but the QRAM need only have $O(Nd)$ total wire length if it is well-designed (e.g., an H-tree).

To use $O(Nd)$ wires classically, we will assume the classical processors are in a two-dimensional mesh with nearest-neighbour connectivity. In this case a sort takes time $O((Nd)^{1/2})$, so the time for matrix-vector multiplication is $\tilde{O}((Nd)^{1/2})$, compared to $\tilde{O}(1)$ for QRAM access. Thus, we have a square-root time advantage for the QRAM in this case.

This only applies to sparse matrices, however. If the matrix is dense, the QRAM implies $\Omega(N^2)$ classical co-processors, which can multiply the matrix by a vector in time $O(\log(N))$, as we show in \cref{lemma:dense_matrix_vector_multiplication_with_local_memory}. This means the quantum advantage can only potentially apply to sparse matrices.

Strangely the quantum advantage increases with $d$ until vanishing when $d=\Omega(N)$. Likely this is because the classical hardware budget increases with the QRAM size, which increases with $d$, and we have not considered optimized classical algorithms for large $d$. The sparse case is harder for the classical machine because the memory access for small $d$ could be random, creating a large routing problem to send data from arbitrary memory cells to each processor.

\paragraph{Large-Scale Regime:} At large enough scales, we must account for the time for signals to propagate through the computers, even at lightspeed. This does not change the asymptotic time for the 2-dimensional mesh, which only has short local connections. The QRAM must have a width of $\Omega((Nd)^{1/2})$, so each QRAM access now takes time $\Omega((Nd)^{1/2})$. Thus, the classical and quantum algorithms for \cref{problem:polynomial_eigenvalue_transform} both take time $\tilde{\Theta}(k(Nd)^{1/2})$, and the asymptotic quantum advantage disappears.

{\vspace{1em}}

Which regime is realistic? The scale of QRAM necessary for these algorithms would depend on the application, but for context, Google's ``IMAGEN'', a text-to-image diffusion model, used $2^{43}$ bits of data for its training set~\cite{ARXIV:CWSL+22}.

The small-scale regime might hold up to $Nd\approx 2^{29}$, as \cite{IJHPC:BWYF19} ignore memory connectivity and compute sparse matrix-vector multiplication at that size in 0.03 s with a single CPU and GPU. Larger problems will certainly face bandwidth issues, moving us to at least the medium-scale regime. The large-scale regime might seem far-fetched, where lightspeed signal propagation is the limiting factor, but consider that light would take 173 clock cycles to travel across the Frontier supercomputer, which has ``only'' $2^{56}$ bits of memory~\cite{IEEESPEC:Choi22}. Frontier also has enough wire that any two nodes are at most 3 hops away~\cite{IEEESPEC:Choi22}.

Further, we might be somewhere in between. We can divide a matrix into a block matrix, and use small-scale techniques for the blocks themselves and a larger-scale algorithm for the full matrix. 

In practice, Google's TPUs are 2-dimensional grids of processors with only nearest-neighbour connectivity, optimized for dense matrix multiplication~\cite{WEB:Google22}. Instead of the techniques we describe, they use a ``systolic'' method to multiply an $N\times N$ matrix that takes time proportional to $N$ with $N^2$ processors. Their architecture is optimized for \emph{throughput}, as it can simultaneously multiply the same matrix by many vectors. The throughput asymptotically approaches $O(1)$ cycles per matrix-vector multiplication, the best one can do with $N^2$ processors.

Overall, this becomes a battle of constant factors. The quantum advantage for sparse matrix linear algebra in the medium-scale regime depends on the constant factor overheads of quantum computing being less important than the constant factor of signal latency.

\subsection{Quantum Linear Algebra with Noisy QRAM}\label{subsec:linear_algebra_with_noisy_qram}
The opportunity cost arguments made in the last section apply for any active QRAM system, but they are strongest for error-corrected architectures (which require significant classical co-processing). In contrast, passive QRAM systems (potentially such as certain implementations of bucket-brigade) suffer no such opportunity cost, as once the QRAM query is initiated, the system requires no further energy input or intervention (although they will necessarily be noisy). As such, if sufficient progress can be made in the construction of a passive QRAM system (despite the challenges we identify in \Cref{sec:gate-qram}), then for certain linear algebra algorithms that can tolerate some noise in their QRAM queries, it is possible that practically relevant quantum advantage could be realized.

For example, with the bucket brigade architecture, the overall probability of an error per query to a QRAM with $N$ memory cells scales as $\sim p \log(N)^2$, where $p$ is the physical probability of an error (see \cref{sec:bb-errors},\cite{PRX:HLGJ21}). This immediately rules out an asymptotic advantage for any quantum algorithm asymptotically improving error-rate dependence (for arbitrarily small errors), but may be practically useful in cases where either the overall algorithm is relatively insensitive to QRAM access errors, or if the physical error rate can be made to approach the necessary precision without error correction or external intervention.

In some linear algebra tasks, a small constant error per QRAM access \textit{may} be fine. For instance, in some machine learning applications the underlying data is already noisy, so additional noise may be acceptable. In other cases, such as the training and deployment of large language models (LLMs), it has been demonstrated that models can still function with reasonable accuracy when the weights are quantized down to $16$-bits~\cite{kalamkar2019study}, $8$-bits~\cite{dettmers2022llm}, or \textit{even down to $2$-bits}~\cite{frantar2022gptq}. Clearly a two-bit quantization introduces substantial noise in the evaluation of a model, and if LLMs can still function in this regime, it is possible they could be similarly resilient to noise in a QRAM query to their parameters.

Moreover, in reinforcement learning, additional noise could assist an agent to explore its environment, rather than just exploit its existing policy~\cite{sutton2018reinforcement}. 

For matrix inversion with a structured matrix (i.e. a matrix not stored in QRAM) on an input vector constructed from QRAM, for well-conditioned matrices the output of the matrix inversion is relatively insensitive to small perturbations in the input vector. For matrices in QRAM, for some classes of matrices the inversion could be insensitive to perturbations in the structure of the matrix itself~\cite{el2002inversion}.

We recommend further exploration into error-corrected quantum linear algebra algorithms using noisy QRAM systems. We do not rule out the possibility of practical quantum advantage in such cases. Of course, at some scale the uncorrected noise will overwhelm nearly any algorithm, but with sufficient noise resilience  (in both e.g. the bucket-brigade QRAM and in the algorithm itself), at realistic scales such quantum algorithms could still be of significant commercial and scientific value.

\section{Circuit QRAM}\label{sec:circuit-qram}
Here we highlight four approaches to circuit QRAM and give brief descriptions of each. For the full details, follow the references. All of these approaches have overall costs of $\tilde{\Theta}(N)$ gates to access $N$ bits of memory, even with logarithmic circuit depth. This may feel unreasonably high; however, in a bounded fan-in circuit model, this is essentially optimal.

As \Cref{sec:applications} shows, there are many applications where QRAM is a useful tool, even with this high cost. Circuit QRAM is a valuable tool in quantum algorithm analysis.

\subsection{Lower Bounds}
The key point of these lower bounds is that with $N$ bits of memory in a QRACM, there are $2^N$ possible QRACM gates, depending on the values in those bits. However, the number of different quantum circuits is roughly exponential in the product of the number of qubits and the circuit depth, which one can see because we have a finite set of choices of gate for each qubit at each time step. Thus, we need $N\approx $depth$\times$width. This is analogous to a classic result of Shannon about classical circuits~\cite{BELL:Shannon49}.

More precisely, we prove the following Lemma in the appendix:
\begin{lemma}\label{lem:num-operations}
The set of all distinct quantum circuits on $W$ qubits of depth $D$ using $G$ gates from a fixed set of $g$ possible gates of fanin at most $k$,  has size at most $2^{\gamma G}$, where $\gamma = k \cdot \lg (\tfrac{DWg}{G\sqrt{k}})$.
\end{lemma}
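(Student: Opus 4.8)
The plan is to bound the number of distinct circuits by counting the ways we can lay out $G$ gates into a $W \times D$ grid of "slots" (qubit $\times$ time-step), choosing for each placed gate its position, its type, and which of its $\le k$ input wires it acts on. First I would fix a combinatorial description of a circuit: a circuit is determined by (i) which subset of $G$ gates from the budget are actually used — but since the bound is an upper bound I can just assume all $G$ are used — (ii) for each gate, the time-step $t \in \{1,\dots,D\}$ at which it fires, (iii) the set of $\le k$ qubits (out of $W$) it touches, and (iv) which of the $g$ gate types it is. A crude count gives at most $\left(D \cdot \binom{W}{k} \cdot g\right)^G$ circuits, but this is wasteful: a single time-step can hold many disjoint gates, and more importantly the $\binom{W}{k}$ factor should be amortized because the $G$ gates in depth $D$ cannot all independently range over all $\binom{W}{k}$ wire-tuples if $G$ is large.

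The key step is to organize the count so that the total "address space" being encoded is $kG$ symbols (each of the $G$ gates contributes its $\le k$ wire-labels), and these $kG$ symbols, together with $O(G)$ bits for gate types and time-steps, must all be specifiable. The sharpest way to see the claimed form is an entropy/Stirling argument: distributing $G$ gates (each needing a location in a $DW$-sized space, with the $k$ wires per gate) and using that $\binom{DW}{G}$-type factors contribute $G \lg(DW/G)$ rather than $G \lg(DW)$. Concretely, I would bound the number of circuits by something like $\binom{DW}{G} \cdot g^G \cdot (\text{ways to pick the other } k-1 \text{ wires per gate})$, and then feed in the standard estimate $\binom{m}{r} \le (em/r)^r$, i.e. $\lg\binom{DW}{G} \le G\lg(eDW/G)$, and similarly handle the wire-choice and gate-type factors so that everything collapses into $2^{kG \lg(DWg/(G\sqrt{k}))}$ after absorbing constants like $e$ and the $\sqrt{k}$ normalization (the $\sqrt{k}$ presumably arising from a $k$-fold refinement of the $\binom{\cdot}{\cdot}$ estimate, e.g. $\binom{DW}{G}^k$-style terms or a $(k!)^{-G/2}$-type Stirling saving).

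Then I would just verify the algebra: take logs, collect the $kG$ coefficient, check that the argument of the logarithm comes out to $DWg/(G\sqrt k)$, and confirm the inequality direction is preserved at each use of $\binom{m}{r}\le(em/r)^r$ and of Stirling. Finally I'd note the bound is vacuous (trivially true) in the degenerate regime where $DWg/(G\sqrt k)<1$, so we may assume we are in the regime where the logarithm is positive.

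The main obstacle I expect is pinning down exactly where the $\sqrt{k}$ in the denominator comes from and matching constants cleanly — the gross structure ($2^{\Theta(kG\log(DW/G))}$) is easy, but getting precisely $G\sqrt{k}$ rather than $G$ or $Gk$ inside the log requires being careful about whether the $k$ wires of each gate are ordered or unordered and applying Stirling to the $(k!)^G$ (or $(k!)^{G/2}$) factor at the right strength; I would handle this by treating each gate's wire-tuple as an unordered $k$-subset and tracking the $k!$ carefully through the Stirling bound $k! \ge (k/e)^k \sqrt{2\pi k}$, which is plausibly the source of both the $\sqrt k$ and the suppression of lower-order terms.
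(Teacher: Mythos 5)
Your plan is, in substance, the paper's argument: both are slot-counting bounds over the $D\times W$ grid of (qubit, time-step) positions, giving a binomial coefficient times a gate-type factor. The bookkeeping differs. The paper first proves an injective reduction from fanin-$k$ circuits to fanin-$1$ circuits: each $\ell$-qubit gate is split into $\ell$ distinguishable single-qubit ``role'' gates (a CNOT becomes a control-marker and a target-marker), enlarging the gate set to at most $Wg$ and the gate count to at most $\min\{kG,DW\}$, after which unary circuits are counted trivially as (choose slots)$\times$(choose types), yielding $\binom{DW}{\min\{kG,DW\}}(Wg)^G$. All $k$ legs of each gate are thus charged jointly to the $\binom{DW}{\cdot}$ factor. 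You instead anchor each gate at one of $\binom{DW}{G}$ slots and pay separately for the remaining $k-1$ wires; made precise as $\binom{DW}{G}\,g^G\,(kW^{k-1})^G$ (the factor $k$ recording which leg is the anchor), this is an equally valid injection and gives a bound of the same gross form $2^{O(kG\lg(DWg))}$ — in some regimes slightly tighter, since only one factor of $\lg(DW/G)$ appears rather than $k$ of them.

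Where you are right to be uneasy is the $\sqrt{k}$. Neither your route nor the paper's lands on the exact expression $2^{kG\lg(DWg/(G\sqrt{k}))}$: the appendix stops at $\binom{DW}{\min\{kG,DW\}}(Wg)^G$, and applying $\binom{m}{r}\le(em/r)^r$ to that gives exponent $kG\lg\bigl(\tfrac{eDW}{kG}\bigr)+G\lg(Wg)$, which is not dominated by the stated exponent uniformly in $W,g,k$; your conjectured Stirling source for the $\sqrt{k}$ does not produce it either. So the precise constant inside the logarithm is not derived anywhere and should be treated as cosmetic — the application (\Cref{thm:circuit-qram-lower-bound}) only needs $N\le kG\cdot O(\lg(DWg))$ to force $G$ to grow essentially linearly in $N$. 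With that caveat, your count proves the substance of the lemma; the only genuine care needed is the injectivity of your encoding (each gate must get a distinct, canonically chosen anchor slot, e.g.\ its lowest-index wire, which works because gates in one time step act on disjoint qubits).
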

Intuitively, one can obtain a simple non-rigorous version of the above result with the following argument. Suppose we have a circuit with $G$ gates, and that the gate set contains $M=2^m$ possibilities, including the identity. Then, there are $M^{G}= 2^{m G}$ possible circuits. The above argument makes this notion rigorous and general, e.g., covering cases where some gates may be multi-qubit controlled gates.

We will prove the lower bound even for approximations of QRACM, which we define as follows: Given a table $T$ with $N$ rows, let $U_{QRACM}(T)$ be the QRACM circuit implementing $T$. A circuit $C$ $\epsilon$-approximately implements a $U_{QRACM}(T)$ if $\lnorm{C - U_{QRACM}(T)}_2 \le \epsilon$.

\begin{theorem}\label{thm:circuit-qram-lower-bound}
Let $0 \le \epsilon \le 1/\sqrt 2$. Let $0 < \tau \le 1$ be the fraction of possible QRACM tables that circuits with at most $G$ gates from the gate set in~\cref{lem:num-operations} is able to $\epsilon$-approximately implement. Moreover, define $\gamma$ as per~\cref{lem:num-operations}. Then,
\begin{enumerate}
    \item A given circuit $C$ can $\epsilon$-approximate at most 1 QRACM table.
    \item The number of gates $G$ to $\epsilon$-approximately implement a fraction $\tau$ of all possible QRACM tables is lower-bounded by $G \ge \frac{\log \tau + N}{\gamma}$. 
    That is, $G \in \tilde{\Omega}\left(N + \log\tau \right)$.
    \item To approximately implement all possible QRACM tables with circuits using at most $G$ gates, $N\in\tilde{O}(G)$, and equivalently, $G \in \tilde \Omega (N)$. 
\end{enumerate}
\end{theorem}
\begin{proof}
We are given two $N$-row QRAM tables $T$ and $T'$, enacted by the unitaries $U(T)$ and $U(T')$ respectively. Assume that the tables differ at index $j$, i.e., $T[j] \neq T'[j]$. 
Then,
\begin{align}
    \lnorm{(U(T) - U(T'))\ket{j}_n\ket{0}_1} &= \sqrt 2.
\end{align}
Consequently, $\lnorm{U(T) - U(T')}_2 \ge \sqrt{2}$.

Now suppose that a given unitary circuit $C$ $\epsilon$-approximates both $U(T)$ and $U(T')$, i.e., $\lnorm{U(T) - C}_2 \le \epsilon$ and $\lnorm{U(T') - C}_2 \le \epsilon$. Then, 
\begin{align}
    \sqrt{2} &\le 
    \lnorm{U(T) - C + C  - U(T')}_2 \le 2 \epsilon. 
\end{align}
Thus, $1/\sqrt{2} \le \epsilon$. Thus, it is not possible for one circuit to approximate more than one QRAM table, unless $\epsilon > 1/\sqrt{2}$. Importantly, \textbf{allowing tables to be approximated to a constant error does not provide any meaningful asymptotic advantage}.

Using the gate set defined in~\cref{lem:num-operations}, \cref{lem:num-operations} shows that there are $2^{k G (\lg(\frac{DWg}{G\sqrt{k}}))}$ possible circuits using $G$ gates. Assuming that $\epsilon \le 1/\sqrt{2}$, each distinct circuit can only approximate at most a single QRAM table. Let $\gamma := k (\lg(\frac{DWg}{G\sqrt{k}}))$. Consequently, a set of $2^{\gamma G }$ distinct circuits can approximate at most $2^{\gamma G}$ distinct QRAM tables. 

Noting that there are $2^N$ possible QRAM tables with $N$ rows, the fraction of $\epsilon$-approximable tables with at most $G$ gates is bounded by $\frac{2^{\gamma G}}{2^N} = 2^{\gamma G - N}$. Let $0 < \tau \le 1$ represent the fraction of QRAM tables that circuits with $G$ gates are able to approximate. Then,
\begin{align}
    \tau &\le 2^{\gamma G - N} \implies  \frac{\log \tau + N}{\gamma} \le G. 
\end{align}
Thus, for circuits with $G$ gates to be able to implement a $\tau$ fraction of all possible $N$ row QRAMs, $G \in \tilde\Omega(N + \log\tau)$.  
\end{proof} 
We will now briefly outline some special cases of this result, to help build intuition. If we wish to:
\begin{itemize}
    \item Cover all possible tables, then $\tau = 1$, so $G \in \tilde\Omega(N)$.
    \item Cover a constant fraction of possible tables then $\tau \in \Theta(1)$, so $G \in \tilde\Omega(N)$. E.g., if we wish to cover exactly half of all possible QRAM tables, $\tau = 1/2$, so $\frac{N-1}{\gamma} \le G$. If we wish to cover a quarter of all possible tables, $\frac{N-2}{\gamma} \le G$, etc. 
    \item Cover an exponentially small fraction of tables, e.g., $\tau \in \Theta(2^{-N})$, so $G \in \tilde\Omega(1)$. Let $c > 0$ be some constant such that $\tau = c 2^{-N}$. Then, we find $\log_2(c)/\gamma \le G$. 
\end{itemize}
This demonstrates how this result generally applies, even in cases where QRACM circuits can be efficiently constructed, as such efficient QRACM circuits belong to the exponentially small fraction of circuits where the gate-complexity lower-bound is efficient.

\begin{figure*}
\centering
\begin{tikzpicture}[scale=1.1]
\node at (0,0.8) {$\ket{\text{addr}}$};
\node at (0,-0.5) {$\ket{\psi_{in}}$};
\node at (0,0) {$\ket{0}$};
\draw (0.5,0) -- (10.5,0);
\draw (0.5,0.8) -- (10.5,0.8);
\draw (0.5,-0.5) -- (10.5,-0.5);
\node[right] at (10.5,-0.5) {$\ket{\psi_{out}}$};
\node[right] at (10.5,0.8) {$\ket{\text{addr}}$};
\node[right] at (10.5,0) {$\ket{0}$};

\foreach \i in {0,...,4}{
	\node at (1+2*\i,0) {$\oplus$};
	\draw (1+2*\i,0) -- (1+2*\i,0.8);
	\node[fill=white,draw=black] at (1+2*\i,0.8) {$?\atop = \i$};
	
	\node at (2+2*\i,0) {$\oplus$};
	\draw (2+2*\i,0) -- (2+2*\i,0.8);
	\node[fill=white,draw=black] at (2+2*\i,0.8) {$?\atop = \i$};
	
	\draw[fill=black]  (1.5+2*\i,0) circle (0.05);
	\draw (1.5+2*\i,0) -- (1.5+2*\i,-0.5);
	\node[rectangle,fill=white,draw=black] at (1.5+2*\i,-0.5) {$\oplus T[\i]$};
}
\end{tikzpicture}
\caption{Circuit diagram for naive QRACM. Here $? \atop =j$ means a circuit that checks if the input equals $j$, and flips the target output if so.}\label{fig:unary-circuit}
\end{figure*}
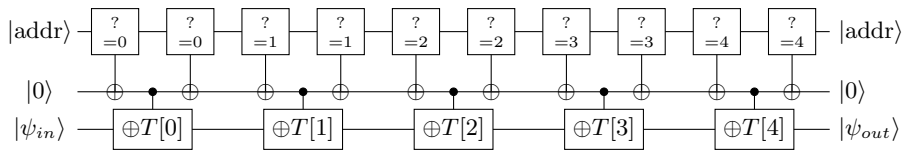

\begin{algorithm}[H]
\begin{algorithmic}[1]
\STATE \textbf{Input:} A qubit address register, an output qubit, and a classical table $T$ of $N$ bits
\FOR {$i=0$ to $N$}
	\IF {$T_i = 0$}
		\STATE Do nothing
	\ELSE
		\STATE Apply a circuit to check if the address register equals $i$ and write the result to an ancilla $\ket{b}$
		\STATE Apply a CNOT from $\ket{b}$ to the output qubit
		\STATE Uncompute $\ket{b}$
	\ENDIF
\ENDFOR
\end{algorithmic}
\caption{Simple implementation of a naive QRACM access.}\label{alg:unary-circuit}
\end{algorithm}

Notably, the lower bound applies to \emph{circuits}, meaning measurements are forbidden. With feedback from measurements, our simple counting arguments fall apart quickly: consider the set of operations, each defined by a table of size $N$, where one measures the entire input, classically finds that address in the table, then applies an $X$ if the bit at that address is 1. It's easy to see that there are $2^N$ distinct ``operations'' of this type, but the number of quantum gates to implement each one is only $\lg(N)+1$. However, these are not QRACM. To be useful, the measurements need to be mostly independent of the input state, but we leave it to future work to incorporate this into the counting arguments. If the feedback from measurements is simple enough to replace it with quantum control without drastically increasing the resource requirements, the bounds will hold. \cite{ARXIV:YuaZha2022} provide a similar lower bound through different techniques which are also restricted to measurement-free circuits.

This lower bound implies that variational approaches to QRAM (\cite{PRL:NZB+2022,ARXIV:PhaLiGho2022}) cannot scale better than other circuits. We can discretize the continuous parameters to get a finite gate set, which increases the parameter $g$ in \Cref{lem:num-operations}, but does not appreciably change the asymptotics until the precision of the parameters becomes exponentiation in $N$, the table size. Such precision seems absurd: if the duration of a laser pulse were specified to such a high precision, it would take $\approx N$ bits just to transmit the duration of the pulse to the laser!

\subsection{Unary Encoding}
In a QRACM access, the memory is entirely classical, so the control program can access all of it. The control program can dynamically construct a circuit to represent a particular memory. A naive strategy is to ``check-and-write'' for each address, as in \Cref{alg:unary-circuit}/\Cref{fig:unary-circuit}:

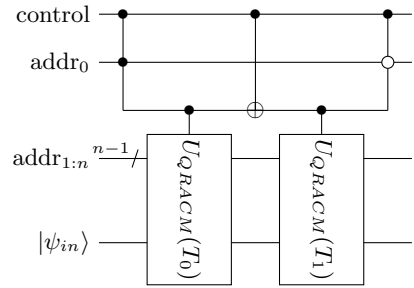
\begin{figure}
    \begin{tikzpicture}[scale=1.0]
\draw (0,0) -- (5.2,0);
\draw (0,0.8) -- (5.2,0.8);
\draw (0.4,-0.8) -- (4.8,-0.8);
\draw (0,-1.6) -- (5.2,-1.6);
\draw (0,-3.0) -- (5.2,-3.0);
\node[left] at (0,0.8) {control};
\node[left] at (0,0) {addr$_0$};
\node[left] at (0,-1.6) {addr$_{1:n}$};
\node[left] at (0,-3.0) {$\ket{\psi_{in}}$};
\draw (0.55,-1.75) -- (0.65,-1.45);
\node[left] at (0.65,-1.45) {$\scriptstyle{n-1}$};
\draw (0.4,0.8) -- (0.4,-0.8);
\node at (0.4,0) {$\bullet$};
\node at (0.4,0.8) {$\bullet$};
\node at (1.5, -0.8) {$\bullet$};
\draw (1.5,-0.8) -- (1.5,-1.2);
\draw[fill=white] (0.8,-1.2) rectangle (2.2,-3.7) node[pos=0.5,rotate=270] {$U_{QRACM}(T_0)$};
\node at (2.6,0.8) {$\bullet$};
\node at (2.6,-0.8) {$\oplus$};
\draw (2.6,0.8) -- (2.6,-0.8);
\draw (3.7,-0.8) -- (3.7,-1.2);
\draw[fill=white] (3.0,-1.2) rectangle (4.4,-3.7) node[pos=0.5,rotate=270] {$U_{QRACM}(T_1)$};
\node at (3.7,-0.8) {$\bullet$};
\draw (4.8,0.8) -- (4.8,-0.8);
\node at (4.8,0.8) {$\bullet$};
\draw[fill=white] (4.8,0.0) circle (0.1);
\end{tikzpicture}
\caption{Recursive controlled unary QRACM from \cite{PRX:BGB+2018}.}\label{fig:recursive-qracm}
\end{figure}

The cost is $O(N\lg N)$ gates and the depth can be as low as $O(N\lg \lg N)$ if we use auxiliary qubits for the address comparison. This is already nearly optimal. 

 \cite{NSR:ParPetRhe19} present approximately this version, while the optimized version of~\cite{PRX:BGB+2018} use an efficient recursive structure of \emph{controlled} QRACM. The base case, a one-element controlled look-up, is just a CNOT. For the recursive case, see \Cref{fig:recursive-qracm}, where $T_b$ is a table of size $N/2$ defined from a table $T$ by taking all elements whose address starts with the bit $b\in\{0,1\}$. This uses $\lg N$ ancilla qubits and a low-T AND gate to give a T-count of $4N-8$, with a total gate cost of $O(N)$.

Attractive features of this circuit are its low qubit count and its simplicity. Even better, if the table has multi-bit words, then the base case -- a controlled look-up in a one-element table -- is just a multi-target CNOT gate. This has constant depth with only Clifford gates and measurements, so the overall costs for an $m$-bit table are $O(Nm)$ gates in depth $O(N)$.

\subsection{Bucket-Brigade}\label{sec:circuit-bb}
Bucket-brigade QRAM~\cite{PRL:GioLLoMac08} was a groundbreaking approach to QRAM that laid the basis for almost all hardware-QRAM since. It is not often considered in the circuit model (though see~\cite{PRA:PalOumBas20,PRX:HLGJ21,TQE:MatGheMos2020}).

The core idea is a binary tree, and we imagine ``sending in'' address bits. At each node in the tree, the first bit to reach that node sets the state of the node to either ``left'' or ``right'', depending on the whether the bit was $0$ or $1$. Any subsequent bit reaching that node is directed either left or right, accordingly.

In the circuit model, we copy the description of~\cite{PRX:HLGJ21}, shown in \Cref{fig:bucket-brigade-circuit}. Each node of the routing tree has two qubits: a routing qubit $\ket{r_i}$ and a control qubit $\ket{c_i}$, all of which are initialized to $\ket{0}$. 

To perform a memory access, we start by swapping the top (most significant) bit of the address into the control qubit at the root node of the tree. We then swap the second-highest address bit into the \emph{routing} qubit at the root node, then apply the routing circuit shown in \Cref{fig:bucket-brigade-route}. This swaps the routing qubit (which now contains the state of the second-highest address qubit) into the control qubit of either the left or right node in the next level, depending on the state of the control qubit of the root node. 

We then repeat this process: we swap the $i$th address bit into the routing qubit at the root, then apply the routing circuit to all nodes at the $0$th level, then the $1$st level, up level $i-1$. Then, at the $i$th level, we apply the control circuit in Figure~\ref{fig:bucket-brigade-control}. This puts the address qubit into the control, so it can properly route the next address qubit.

At this point, there are many possible approaches to read the actual data. For QRACM, one could use $X$ gates to the $i$th leaf of the table if $T[i]=1$. We then apply the routing circuit to all nodes in each layer, but iterating from the bottom layer to the top. This will swap the state of the qubit at the desired address back to the root of the tree, where we can copy it out with a final CNOT. Then the entire circuit must be uncomputed.

To create QRAQM, if we treat the table of memory as the bottom layer of the tree, the same SWAP technique as for QRACM will enact a swap QRAQM.

\def\coffx{-12}
\def\coffy{7.5}
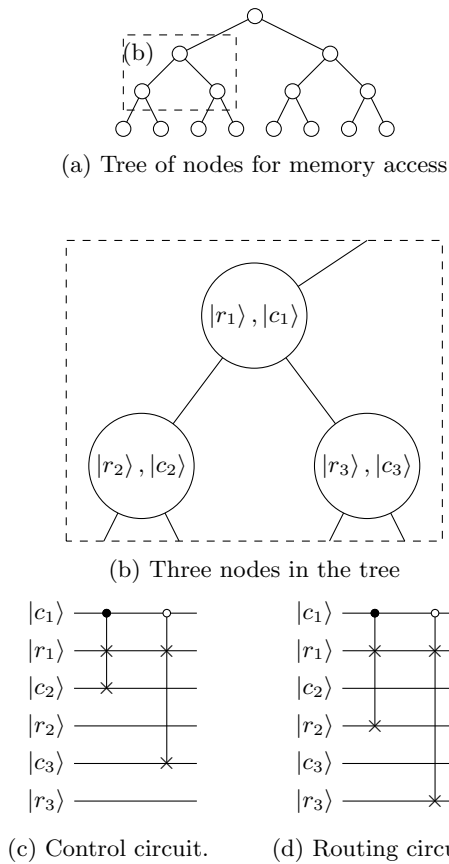
\begin{figure}
\begin{subfigure}{0.5\textwidth}
\begin{subfigure}{\textwidth}
\begin{tikzpicture}[scale=1.2]

\draw (5,5) -- (4,4.5);
\draw (5,5) -- (6,4.5);
\draw (4,4.5) -- (3.5,4);
\draw (4,4.5) -- (4.5,4);
\draw (6,4.5) -- (5.5,4);
\draw (6,4.5) -- (6.5,4);
\draw (3.5,4) -- (3.25,3.5);
\draw (3.5,4) -- (3.75,3.5);
\draw (4.5,4) -- (4.25,3.5);
\draw (4.5,4) -- (4.75,3.5);
\draw (5.5,4) -- (5.25,3.5);
\draw (5.5,4) -- (5.75,3.5);
\draw (6.5,4) -- (6.25,3.5);
\draw (6.5,4) -- (6.75,3.5);

\draw[fill=white] (5,5) circle (0.1);
\draw[fill=white] (4,4.5) circle (0.1);
\draw[fill=white] (6,4.5) circle (0.1);
\draw[fill=white] (3.5,4) circle (0.1);
\draw[fill=white] (4.5,4) circle (0.1);
\draw[fill=white] (5.5,4) circle (0.1);
\draw[fill=white] (6.5,4) circle (0.1);
\draw[fill=white] (3.25,3.5) circle (0.1);
\draw[fill=white] (3.75,3.5) circle (0.1);
\draw[fill=white] (4.25,3.5) circle (0.1);
\draw[fill=white] (4.75,3.5) circle (0.1);
\draw[fill=white] (5.25,3.5) circle (0.1);
\draw[fill=white] (5.75,3.5) circle (0.1);
\draw[fill=white] (6.25,3.5) circle (0.1);
\draw[fill=white] (6.75,3.5) circle (0.1);
\draw[dashed] (3.25,3.75) rectangle (4.75,4.75);
\node at (3.45,4.5) {(b)};
\end{tikzpicture}
\caption{Tree of nodes for memory access}\label{fig:bucket-brigade-tree}
\end{subfigure}
\vspace{2em}

\begin{subfigure}{\textwidth}
\begin{tikzpicture}[scale=1.2]
\draw[dashed] (2.5,5.5) rectangle (7.5,9.5);
\draw (5,8.5) -- (3.5,6.5);
\draw (5,8.5) -- (6.5,6.5);
\draw (5,8.5) -- (6.5,9.5);
\draw (3.5,6.5) -- (3,5.5);
\draw (6.5,6.5) -- (7,5.5);
\draw (3.5,6.5) -- (4,5.5);
\draw (6.5,6.5) -- (6,5.5);
\draw[fill=white] (5,8.5) circle (0.7) node (n1) {$\ket{r_1},\ket{c_1}$};
\draw[fill=white] (3.5,6.5) circle (0.7) node(n2) {$\ket{r_2},\ket{c_2}$};
\draw[fill=white] (6.5,6.5) circle (0.7) node(n3) {$\ket{r_3},\ket{c_3}$};
\end{tikzpicture}
\caption{Three nodes in the tree}\label{fig:bucket-brigade-subtree}
\end{subfigure}
\end{subfigure}

\begin{subfigure}{\columnwidth}
\begin{subfigure}{0.4\textwidth}
\begin{tikzpicture}
\node at (\coffx,\coffy) (c1) {$\ket{c_1}$};
\node at (\coffx,\coffy-0.5) (r1) {$\ket{r_1}$};
\node at (\coffx,\coffy-1.0) (c2) {$\ket{c_2}$};
\node at (\coffx,\coffy-1.5) (r2) {$\ket{r_2}$};
\node at (\coffx,\coffy-2.0) (c3) {$\ket{c_3}$};
\node at (\coffx,\coffy-2.5) (r3) {$\ket{r_3}$};
\draw (c1) -- ($(c1) + (2,0)$);
\draw (r1) -- ($(r1) + (2,0)$);
\draw (c2) -- ($(c2) + (2,0)$);
\draw (r2) -- ($(r2) + (2,0)$);
\draw (c3) -- ($(c3) + (2,0)$);
\draw (r3) -- ($(r3) + (2,0)$);
\draw (\coffx+0.8,\coffy) -- (\coffx+0.8,\coffy-1.0);
\draw (\coffx+1.6,\coffy) -- (\coffx+1.6,\coffy-2.0);

\draw[fill=black] (\coffx+0.8,\coffy) circle (0.05);
\draw[fill=white] (\coffx+1.6,\coffy) circle (0.05);
\draw[fill=black] (\coffx+0.8,\coffy) circle (0.05);
\node at (\coffx+0.8,\coffy-0.5) {$\times$};
\node at (\coffx+0.8,\coffy-1.0) {$\times$};
\node at (\coffx+1.6,\coffy-0.5) {$\times$};
\node at (\coffx+1.6,\coffy-2.0) {$\times$};
\end{tikzpicture}
\caption{Control circuit.}\label{fig:bucket-brigade-control}
\end{subfigure}
\begin{subfigure}{0.4\textwidth}
\begin{tikzpicture}
\node at (\coffx,\coffy) (c1) {$\ket{c_1}$};
\node at (\coffx,\coffy-0.5) (r1) {$\ket{r_1}$};
\node at (\coffx,\coffy-1.0) (c2) {$\ket{c_2}$};
\node at (\coffx,\coffy-1.5) (r2) {$\ket{r_2}$};
\node at (\coffx,\coffy-2.0) (c3) {$\ket{c_3}$};
\node at (\coffx,\coffy-2.5) (r3) {$\ket{r_3}$};
\draw (c1) -- ($(c1) + (2,0)$);
\draw (r1) -- ($(r1) + (2,0)$);
\draw (c2) -- ($(c2) + (2,0)$);
\draw (r2) -- ($(r2) + (2,0)$);
\draw (c3) -- ($(c3) + (2,0)$);
\draw (r3) -- ($(r3) + (2,0)$);
\draw (\coffx+0.8,\coffy) -- (\coffx+0.8,\coffy-1.5);
\draw (\coffx+1.6,\coffy) -- (\coffx+1.6,\coffy-2.5);

\draw[fill=black] (\coffx+0.8,\coffy) circle (0.05);
\draw[fill=white] (\coffx+1.6,\coffy) circle (0.05);
\draw[fill=black] (\coffx+0.8,\coffy) circle (0.05);
\node at (\coffx+0.8,\coffy-0.5) {$\times$};
\node at (\coffx+0.8,\coffy-1.5) {$\times$};
\node at (\coffx+1.6,\coffy-0.5) {$\times$};
\node at (\coffx+1.6,\coffy-2.5) {$\times$};
\end{tikzpicture}
\caption{Routing circuit.}\label{fig:bucket-brigade-route}
\end{subfigure}
\end{subfigure}
\caption{Schematic of the overall tree structure of circuit bucket-brigade QRACM. In the tree structure in (a), the lines between nodes are only to help exposition and do not represent any physical device (unlike in hardware bucket-brigade QRACM; see Section~\ref{sec:bucket-brigade}). (b) shows how each node consists of 2 qubits, and labels them to show where the circuits in (c) and (d) apply their gates.}\label{fig:bucket-brigade-circuit}
\end{figure}

\paragraph{Cost.} Both the control and routing circuits require 2 controlled-SWAPs at each node. For the $i$th address bit, we must apply one of these circuits to \emph{every} node from the root down to the $i$th layer. The layered structure of gate applications gives a total count of $2(n\cdot 2^0+(n-1)2^1+(n-2)2^2+\dots +2^n)=O(N\log N)$, where $N=2^n$. 

Naively the depth is $O(\log^2 N)$, since each address bit requires looping over all previous layers. However, as soon the $i$th address bit has been routed into the 2nd level of the tree, address bit $i+1$ can be swapped into the root and start that layer. Thus, the depth is $O(\log N)$.

This circuit requires $2(N-1)$ ancilla qubits just for the routing tree, and may require more for the data itself, depending on how the the final read is designed.

Overall, the costs are log-linear, but worse than unary QRACM in constant factors. While the depth is only logarithmic in $N$ (the best known), there are many approaches to a log-depth circuit QRAM \cite{JCE:CCJR21,ARXIV:CarThei18,ARXIV:PBKNKB2022}.

\paragraph{Error scaling.}
An advantage of the bucket-brigade approach is the favourable error scaling, proven in \cite{PRX:HLGJ21} and summarized in \Cref{sec:bb-errors}. Each node only needs errors below $O(1/\text{polylog}(N))$, rather than $O(1/N)$.

In a surface code architecture, this favourable error scaling means we could reduce the code distance for the qubits in the bucket-brigade QRACM. This might noticeably decrease overall depth, physical qubits, and perhaps allow easier magic state distillation. Of course, it would need to exceed the losses from the less efficient circuit design compared to (e.g.) a unary QRACM. Further, it would only benefit applications where the depth of the QRACM access is more important the qubit count, since the bucket-brigade and its error scaling depend on a fast, wide circuit. Depth-limited cryptographic attacks are an appealing candidate application. Quantifying where this advantage lies is a fascinating question for future work.

\subsection{Select-Swap}\label{sec:select-swap}
\cite{ARXIV:LowKliSch2018} describe a technique that reduces T-count even further, at the expense of more ancilla qubits (which can be in an arbitrary state, however). The technique appears also in \cite{ARXIV:CDSSBZ2022} and \cite{QUANTUM:BGMMB2019}. This method is analogous to ``paging'' in classical memory. In most classical memory devices, it is just as efficient to retrieve larger chunks of memory (called ``pages'') as it is to retrieve single bits. Hence, the memory device retrieves a page and a second device will extract the specific word within that page. Analogously, this QRACM circuit uses a QRACM circuit on a larger page of the classical table, then uses a QRAQM access on that page. 

Recall that the T-gate cost of unary QRACM (and depth) is independent of the length of each word in the memory table. From any table $T$ with $N$ entries in $\{0,1\}$, we can construct a table $T'$ with $N/2^\ell$ entries in $\{0,1\}^\ell$, simply by grouping $\ell$ consecutive entries -- a page -- as one word in $T'$. Thus, to begin access to $T$, we start with a unary QRACM access to $T'$, writing the output in an $2^\ell$-bit auxiliary register. We do not need the last $\ell$ bits of the address to do this, since the page we retrieve has every element that the last $\ell$ bits could address.

Once this is done, we want to extract the precise bit that the last $\ell$ bits address. Since the required page is already in the auxiliary register, we use a QRAQM access to do this. The address in the page for this QRAQM access is the last $\ell$ bits of the original address, and we write the output to the original output register.

After this, we uncompute the QRAQM access and uncompute the original unary QRACM access.

To see why this works, notice that after the first QRACM access, the $2^\ell$ auxiliary bits are now in a superposition, with the contents of these bits entangled with the high bits of the address register. Thus, a \emph{single} QRAQM access extracts the correct output for each of these states in superposition. 

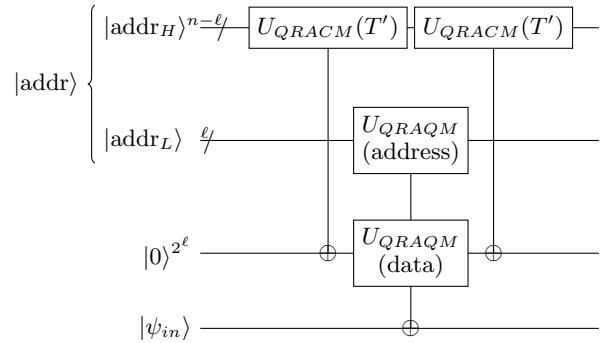
\begin{figure}[H]
\begin{tikzpicture}[scale=1.2]
\node[left] at (2.5,0) {$\ket{\psi_{in}}$};
\node[left] at (2.5,1) {$\ket{0}^{2^\ell}$};
\node[right] at (1.1,2.5) {$\ket{\text{addr}_L}$};
\node[right] at (1.1,4) {$\ket{\text{addr}_H}$};
\draw [decorate, decoration = {brace}] (1.1,2.2) --  (1.1,4.3);
\node[left] at (1,3.25) {$\ket{\text{addr}}$};

\node at (2.6,2.5) {$^\ell\hspace{-0.3em}/$};
\node at (2.6,4) {$^{n-\ell}\hspace{-0.3em}/$};

\draw (2.5,0) -- (6.8,0);
\draw (2.5,1) -- (6.8,1);
\draw (2.5,2.5) -- (6.8,2.5);
\draw (2.5,4) -- (6.8,4);

\draw (3.2,4) -- (3.2,1);
\node at (3.2,1) {$\oplus$};
\node[fill=white,draw=black,rectangle,align=center,rotate=270] at (3.2,4) {$U_{QRACM}(T')$};

\draw (4.3,2.5) -- (4.3,0);
\node at (4.3,0) {$\oplus$};
\node[fill=white,draw=black,rectangle,align=center] at (4.3,1) {$U_{QRAQM}$\\ (data)};
\node[fill=white,draw=black,rectangle,align=center] at (4.3,2.5) {$U_{QRAQM}$\\ (address)};

\draw (5.4,4) -- (5.4,1);
\node at (5.4,1) {$\oplus$};
\node[fill=white,draw=black,rectangle,align=center,rotate=270] at (5.4,4) {$U_{QRACM}(T')$};

\end{tikzpicture}
\caption{Select-swap QRACM access. $T'$ is the table formed from $T$ by combining each page of $2^\ell$ consecutive bits of $T$ into one word in $T'$.}\label{fig:partial-write}
\end{figure}

\begin{figure*}
\centering
\begin{tikzpicture}[scale=1.0]

\draw [decorate, decoration = {brace}] (0,6.2) -- (3.5,6.2) ;
\draw [decorate, decoration = {brace}] (3.6,6.2) -- (4.4,6.2);
\draw [decorate, decoration = {brace}] (4.5,6.2) -- (8,6.2);
\node[above] at (1.75,6.25) {(A)};
\node[above] at (4,6.25) {(B)};
\node[above] at (6.25,6.25) {(C)};

\draw[decorate,decoration = {brace}] (-1,-2.3) -- (-1,3.7);
\draw[decorate,decoration = {brace}] (-1,3.8) -- (-1,5.5);
\draw[decorate,decoration = {brace}] (-1,5.6) -- (-1,6);
\node[above,rotate=90] at (-1.05,0.7) {memory};
\node[above,rotate=90] at (-1.05,4.75) {address};
\node[left,rotate=0] at (-1.05,5.8) {output};

\foreach \i in {0,...,7} {
	\node[left] at (0,3.5-0.8*\i) {$\ket{x_\i}$};
	\node[right] at (8,3.5-0.8*\i) {$\ket{x_\i}$};
	\draw (0,3.5-0.8*\i) -- (8,3.5-0.8*\i);
	\node at (1,3.5-0.8*\i) {$\times$};
	\node at (7,3.5-0.8*\i) {$\times$};
}
\node[left] at (0,4.3) {$\ket{i_0}$};
\node[left] at (0,4.8) {$\ket{i_1}$};
\node[left] at (0,5.3) {$\ket{i_2}$};
\node[left] at (0,5.8) {$\ket{\psi_{in}}$};

\foreach \i in {0,...,3} {
	\draw (0,4.3+\i*0.5) -- (7.7,4.3+\i*0.5);
}

\foreach \j in {0,1}{
	\foreach \i in {0,...,3} {
		\node at (0.6+6.8*\j,3.8-1.6*\i) {$\oplus$};
		\draw[fill=black] (1+6*\j,3.8-1.6*\i) circle (0.05);
		\draw (1+6*\j,3.8-1.6*\i) -- (1+6*\j,2.7-1.6*\i);
		\node at (1.4+5.2*\j,3.8-1.6*\i) {$\oplus$};
		\node at (2.2+3.6*\j,3.5-1.6*\i) {$\times$};
		\node[left] at (0.3+7.9*\j,3.8-1.6*\i) {$\scriptscriptstyle{\ket{0}}$};
		\draw (0.3,3.8-1.6*\i) -- (7.7,3.8-1.6*\i);
	}
	\draw[fill=black] (0.6+6.8*\j,4.3) circle (0.05);
	\draw (0.6+6.8*\j,4.3) -- (0.6+6.8*\j,3.8-1.6*3);
	
	\draw[fill=black] (1.4+5.2*\j,4.3) circle (0.05);
	\draw (1.4+5.2*\j,4.3) -- (1.4+5.2*\j,3.8-1.6*3);
}
\foreach \j in {0,1}{
	\foreach \i in {0,...,1} {
		\node at (1.8+4.4*\j,3.8-3.2*\i) {$\oplus$};
		\draw[fill=black] (2.2+3.6*\j,3.8-3.2*\i) circle (0.05);
		\draw (2.2+3.6*\j,3.8-3.2*\i) -- (2.2+3.6*\j,1.9-3.2*\i);
		\node at (2.6+2.8*\j,3.8-3.2*\i) {$\oplus$};
		\node at (3.4+1.2*\j,3.5-3.2*\i) {$\times$};
	}
	\draw[fill=black] (1.8+4.4*\j,4.8) circle (0.05);
	\draw (1.8+4.4*\j,4.8) -- (1.8+4.4*\j,3.8-1.6*2);
	
	\draw[fill=black] (2.6+2.8*\j,4.8) circle (0.05);
	\draw (2.6+2.8*\j,4.8) -- (2.6+2.8*\j,3.8-1.6*2);
	
	\draw[fill=black] (3.4+1.2*\j,5.3) circle (0.05);
	\draw (3.4+1.2*\j,5.3) -- (3.4+1.2*\j,0.3);
}

\draw[fill=black] (4,3.5) circle (0.05);
\draw (4,3.5) -- (4,5.8);
\node at (4,5.8) {$\oplus$};

\end{tikzpicture}
\caption{Fanout-and-swap QRAQM for $N=8$. (A) uses a binary tree of swaps to move $\ket{x_i}$ into the register for $\ket{x_0}$, where (B) copies it out (replacing this CNOT with a SWAP would produce a SWAP-QRAQM). Then (C) restores the memory register to its original state.}\label{fig:fanout-and-swap}
\end{figure*}
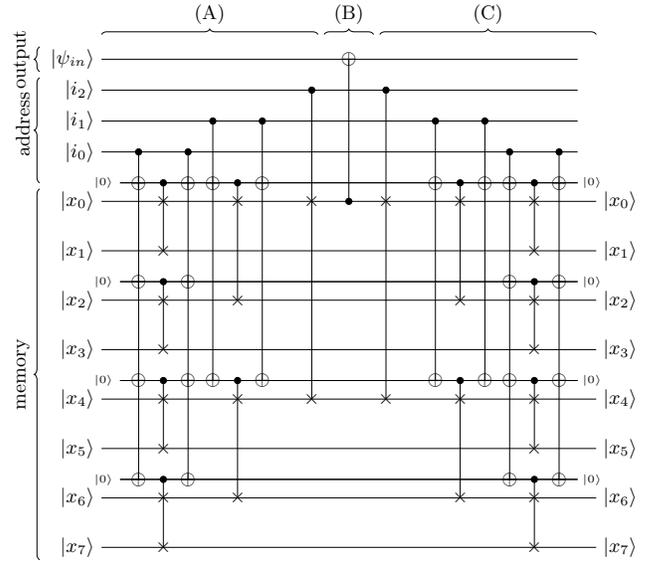

\paragraph{Cost.} The cost of the first unary QRACM access is $O(N/2^\ell)$ T-gates, in depth $O(N/2^\ell)$ generally. For the second QRAQM acess, it needs $\tilde{O}(2^\ell)$ gates and has depth $O(\ell)$. To minimize T-gates, a page size of $2^\ell\approx \sqrt{N}$ gives a total T-cost and depth of $O(\sqrt{N})$. 

However, notice that the total gate cost is still proportional to $N$, since writing each word of $T'$ requires $2^\ell$ CNOT gates. Thus, the majority of the gates in this circuit are used for multi-fanout CNOTs. If these are cheap, then this circuit is cheap overall.

The qubit requirements for the T-gate-optimal parameters are also proportional to $\sqrt{N}$. Using other parameters, this approach provides a nice extrapolation between bucket-brigade and unary QRACM, able to reach almost any combination of depth and width whose product is $\Theta(N)$. Even better, this method can use dirty qubits (i.e., qubits which may hold data necessary for other parts of the algorithm) for all but one word in each page.

The circuit in the appendix of \cite{QUANTUM:BGMMB2019} achieves this asymptotic cost, but they use a measurement-based uncomputation in both the QRAQM and QRACM access. This means they measure each ancilla qubit in the $\{\ket{+},\ket{-}\}$ basis, giving a binary string $\vec{v}$. If the ancilla qubits were in state $\ket{b}$ for $b\in\{0,1\}^m$, then that state acquires a phase of $-1$ if and only if $\vec{v}\cdot \vec{b}$ has odd parity (and extend this linearly). They clean up this phase with a second look-up. Since this phase is a single bit, this is cheaper than a forward computation when the words of $T$ are larger than one bit. For single-bit words they also reduce uncomputation costs by $\frac{1}{2}$.

This optimizes the non-asymptotic gate counts. They use a QRAQM circuit based on controlled swaps, rather than the bucket-brigade; see \Cref{fig:fanout-and-swap} for an analogous technique that keeps depth and T-count small by using high-fanout CNOT gates. The circuits in \cite{QUANTUM:BGMMB2019} achieve the lowest T-count of any QRACM circuit to date. 

\cite{ARXIV:PBKNKB2022} provide a T-depth optimized method to convert binary to unary encodings which could likely be combined with \cite{QUANTUM:BGMMB2019}. \cite{ARXIV:CDSSBZ2022} decompose the SWAP gates to apply T-gates in parallel to achieve a similarly low T-depth.

\subsection{Parallel QRAM}
Taking a step back to the algorithm calling QRAM, many algorithms call the gate repeatedly and thus could benefit from parallelization. We define such a gate as follows:
\begin{definition}
A parallel quantum random access classical memory is a set of unitaries $U_{PQRACM}(T)$, parameterized by a table $T\in\{0,1\}^N$, such that for any $k$ address registes $\ket{i_1}\dots\ket{i_k}$, with $1\leq i_j\leq N$ for all $1\leq j\leq k$, 
\begin{align}
U_{PQRACM}&\ket{i_1}\ket{0}\ket{i_2}\ket{0}\dots\ket{i_k}\ket{0}\nonumber\\
=&\ket{i_1}\ket{T_{i_1}}\ket{i_2}\ket{T_{i_2}}\dots\ket{i_k}\ket{T_{i_k}}.
\end{align}
\end{definition}
We can define parallel QRAQM similarly, with a shared external register $\ket{T_0}\dots \ket{T_{N-1}}$.

Clearly one can enact this by independently applying QRAM to each address register, for a total gate cost of $O(kN)$. However, since the calls share the same table, the cost can be reduced. \cite{ROYSOCA:BBGH+13} solve this with sorting networks.

To briefly explain: they augment the table $T$ to a table $T'$ by adding the address of each element to that element, so each entry is $(i, T_i, 1)$. Then they augment the address registers to $(i_j,0,0)$ and write this entire table into quantum memory. After this, they sort all of the memory registers \emph{and} the address registers together, using an ordering based on the first address register, breaking ties with the final flag register (which is $1$ for memory elements and $0$ for query elements).

Sorting with this ordering ensures that the quantum memory is laid out as
\begin{equation}
\dots\ket{i,T_i,1}\ket{i+1,T_{i+1},1}\ket{i+2,T_{i+2},1}\ket{i+2,0,0}\dots
\end{equation}
where one can see that a query with address $i+2$ has been placed adjacent to the memory element at address $i+2$, which the query needs to access. From there, a careful series of controlled CNOTs copies each memory element into any adjacent queries. This takes some care: at this point the control program does not know whether a given register of qubits contains a memory element or a query, so it must use a control made from the last flag bit of both words as well as whether they have the same address. Moreover, multiple queries may access the same memory element, so the copying must carefully cascade memory elements forwards whenever adjacent registers have the same adders.

Once this is all done, they unsort the network. Now the address registers have their requested data. A quantum sort must keep a set of ancilla qubits recording the outcome of the comparisons of the sort, for reversibility, so these are used and uncomputed to reverse the sort. In some sense, the newly copied data just ``follows'' the rest of the sort.

The cost of this scheme is primarily the cost of the sorting network, which in turn is governed by the connectivity of the quantum computer architecture. \cite{ROYSOCA:BBGH+13} emphasize that with a connectivity that they call the ``hypercube'' (defined to mean that if we address each register by a bitstring, each is connected to registers whose address differs in only one bit), sorting $N$ memory elements and $k$ queries requires only $O((N+k)\log(N+k))$ gates and depth $O(\text{polylog}(N+k))$. However, with a two-dimensional nearest-neighbour architecture (e.g., a surface code with no flying qubits), it requires $O((N+k)^{3/2})$ gates and $O((N+k)^{1/2})$ depth. It remains debatable whether hypercube connectivity is feasible; we discuss this some in \Cref{sec:latency}.

\subsection{Data Structures}
Classical computers use RAM to store and access data in larger data structures such as linked lists, hash tables, etc. Analogous quantum data structures would inherently use QRAQM, such as the quantum radix trees in \cite{PQC:BJLM2013}. However, with a QRAQM access cost proportional to $N$ in the circuit model, bespoke data structure circuits likely exist. 

As an example, \cite[Section 4]{CRYPTO:JaqSch19} performs insertion into a sorted list by making $N$ copies of an input with a fanout gate, then simultaneously comparing \emph{all} elements of the list to a copy of the input, then shifting all elements of the array. This requires $N$ comparisons, but no explicit QRAQM access. \cite{ARXIV:Gidney22} takes a similar approach to build a quantum dictionary.

We encourage algorithms requiring quantum data structures to use similar custom circuits to operate on the data, rather than building them from QRAQM accesses.

\section{Hardware QRAM}\label{sec:gate-qram}
Here we consider a specialized device for QRAM, in the model of \emph{hardware-QRAM}, that is not built from general-purpose qubits. The gate lower bounds for circuit QRAM are unpalatable for certain applications, but there is no reason that all parts of a future quantum computer are built as circuits in a small gate set. Like GPUs or RAM in classical computers today, it's easy to imagine specialized devices for specific tasks. The goal of hardware-QRAM is something much more efficient than implementing QRAM access as a circuit. 

This section asks whether such a specialized device can be built and run efficiently. Not all tasks can have specialized hardware; it's unreasonable to expect a sub-module which solves 3-SAT in quantum polynomial time. Our question is whether QRAM access is a simpler problem. 

To give a quick sense that QRACM is generally difficult, consider that if the table is $T=[0,0,0,1]$, then QRACM access to this table is equivalent to a Toffoli gate. Hence, Clifford+QRAM is a universal gate set.

In this section we lay out several issues that any proposed QRAM device must address to be competitive with circuit-based QRAM.

\subsection{Active vs. Passive QRAM}\label{sec:active-vs-passive}
Recall that in the memory peripheral model, we cost quantum operations by the number of interventions by the classical controller. Here we extend that slightly to include the energy cost of the external controller. For example, we would count the cost to run a laser.

An \emph{active} QRAM gate is one with cost in this sense that is at least proportional to $N$ for each QRAM access. Active hardware-QRAM will still fail asymptotically in the same way as active circuit QRAM. Recalling the scale of memory in \Cref{sec:applications}, even if the constant in this asymptotic term is tiny, it quickly adds up when the memory sizes are $2^{50}$ bits or larger. 

There may be a constant-factor advantage to active hardware-QRAM, but \Cref{sec:circuit-qram} shows that the constants for circuit QRAM are already low. Moreover, when comparing a quantum algorithm using QRAM to a classical algorithm (such as with machine learning), the quantum algorithm must overcome a great deal of overhead to have advantage (e.g., Google's quantum team estimates that a quadratic advantage is insufficient in the near term~\cite{PRXQ:BMJ+2021}; error correction nearly eliminates the advantage of quantum lattice sieving~\cite{AC:AGPS2020}). If QRAM has even a small linear cost, algorithms like database HHL and database QAA will only have -- at best -- a constant-factor advantage over classical.

For these reasons, we discount any active hardware-QRAM, though we cannot rule out a constant-factor advantage from such a device.

Notice that passive QRAM (with $o(N)$ cost for each access) needs both the readout and routing stages to be passive. Even if our readout is quite low-energy -- say, if we can read directly from a non-volatile classical memory -- if the routing stage has cost $\Omega(N)$, the entire QRAM access would have cost $\Omega(N)$.

\subsection{Path Independence}
Accessing memory in superposition means that we must not destroy any of the states in superposition, which means that no external system, including the controller enacting the memory, can learn \emph{anything} about the memory access itself. A major obstacle here is the path in which the memory access travels.

In contrast, consider one aspect of classical memory access that QRAM cannot emulate. If there is a tree of routing nodes as in the bucket-brigade QRAM, a classical computer can keep all of the nodes in a low- or zero-power ``idle'' mode, where a node activates only when it receives a signal from its parent node. Such a device would consume energy only in proportion to $O(\log N)$, not $O(N)$. QRAM cannot do this because in general it needs to be able to act on an arbitrary superposition of addresses, and we cannot know the support of the given superposition non-destructively. If we learn that a path was taken, this collapses the state.
Thus, if a QRAM gate requires any active intervention (such as a controlled swap like the circuit QRAM), then the controller \emph{must} apply that intervention to \emph{all} nodes, immediately bringing the (parallelizable) cost to $\Omega(N)$.

In their description of bucket-brigade QRAM, \cite{PRL:GioLLoMac08} describe an ``active'' path through the tree for each address. If a node requires some external intervention (such as energy) to perform its role in the device, then \emph{all} nodes must receive that energy or intervention. That is, the ``activation'' of a node is all-or-nothing: if we cannot build a device where all nodes can route data passively (such as \cite{THESIS:Cadellans15} attempts), then all nodes must be active during a memory access.

\subsection{Hamiltonian Construction}\label{sec:hamiltonian-construction}

If a QRAM gate is \emph{completely} passive, i.e., requires no external intervention, then this implies that the Hamiltonian describing the entire QRAM device must be constant throughout the operation of the gate, since any change to the Hamiltonian requires external intervention. This means there must be some Hamiltonian $H_{QRACM}(T)$ and time $t$ such that (let $\hbar = 1$)
\begin{equation}\label{eq:time-indie-ham}
U_{QRACM}(T) = e^{i H_{QRACM}(T)t}.
\end{equation}

We now ask how difficult it will be to establish such a Hamiltonian.
If we need to reconstruct the Hamiltonian every time we access the QRAM, then the number of terms (and hence the number of interventions) gives a lower bound on the cost.

If changing the system's Hamiltonian needs $\Omega(N)$ interventions (e.g., if we need photon pulses on all nodes of a routing tree), then our QRAM is active. We could instead try to design a system where we can evolve to the desired Hamiltonian with fewer interventions, but we will now sketch the limitations of this approach. 

Firstly, we show that a Hamiltonian implementing QRACM access must be fairly complex, by recalling previous results on simulating Hamiltonians with circuits, and using the previous circuit lower bound from \Cref{thm:circuit-qram-lower-bound}.

\begin{lemma}\label{lem:hamiltonian-cost}
    Let $H=\sum_{j=0}^{n-1} a_jh_j$ be a Hamiltonian where each $h_j$ is an $m$-qubit Pauli string, with $n$ such terms, acting on $\mathsf{W}$ qubits. Let $\vec a := \begin{pmatrix}a_0 & ... a_{n-1} \end{pmatrix}^T$. If $e^{-i H t}$ enacts a QRACM unitary for a memory with $N$ rows (i.e., with $\log_2 N$ address bits) then to be able to implement any possible QRACM table, $N\in \tilde{O}(\mathsf W n \lnorm{\vec a}_1 t)$. 
\end{lemma}
\begin{proof}
    We will first show how existing results allow for any such Hamiltonian to be simulated in the circuit model. Each Pauli string is unitary, so it is a trivial $(1,0,0)$-block-encoding of itself. We can straightforwardly construct a state preparation unitary $U$ with $U\ket{0}=\sum_{j=0}^{n-1}\sqrt{a_j}\ket{j} / \lnorm{\sum_{j=0}^{n-1}\sqrt{a_j}\ket{j}}_2$, using $O(n)$ circuit depth. Let $b := \ceil{\log_2(n)}$. Then, using \cite[Lemma 52]{gilyen2018quantum} we can obtain a $(\lnorm{\vec a}_1,b,0)$-block-encoding of $H$ with $O(n)$ circuit of depth, and then \cite[Theorem 58]{gilyen2018quantum} gives us a $(1,b+2,\epsilon)$-block-encoding of $e^{iHt}$ with a total circuit depth of $O(n\log_2(n)(\lnorm{\vec a}_1 t + \ln(1/\epsilon))$ using $O(\log_2(n))$ auxiliary qubits. 

    Clearly, this block-encoding is an $\epsilon$-approximate implementation of the QRACM gate $e^{i H t}$ (as per~\cref{thm:circuit-qram-lower-bound}). Moreover, this circuit has $\tilde{O}(\mathsf W n (\lnorm{\vec a}_1 t + \ln(1/\epsilon))$ total gates. Picking a fixed constant accuracy $\epsilon \le 1/2$, we get the total gate count of $\tilde{O}(\mathsf W n \lnorm{\vec a}_1 t)$. For a fixed precision,~\cref{thm:circuit-qram-lower-bound} gives the lower-bound of $G \in \tilde \Omega (N)$, and so we can conclude that circuits constructed in this way could only approximate all possible QRACM tables if the table size is at most $N\in\tilde{O}(\mathsf{W}n\lnorm{\vec a}_1 t)$, meaning the Hamiltonians constructed in this way can also implement only these QRACM tables.

\end{proof}

\Cref{lem:hamiltonian-cost} shows that if our computational device pays a cost for each term it adds to the Hamiltonian, and the device must construct the Hamiltonian for each QRAM access, the cost scales just as badly as active QRAM.

One way to escape this limitation would be to construct a device where we only need to construct the Hamiltonian \emph{once}, so the number of interventions ($n$) does not govern the cost to \emph{use} the QRAM. One approach would be a Hamiltonian such that there is an ``idle'' eigenstate and we only need to perturb the system to start it. For example, one of the proposals for bucket-brigade QRAM imagines excitable atoms at each node, excited by input photons. Adding a photon would start the system on the desired evolution.

Feynman's original proposal for a quantum computer~\cite{OPTICA:Feynman1985} (with Kitaev's refinement~\cite{BOOK:KitSheVya2002}) was a large Hamiltonian that evolved on its own -- i.e., ballistic computation. Practical difficulties of this scheme mean it has not been experimentally explored (to the best of our knowledge). Some interesting questions that bear on QRAM include:
\begin{enumerate}
\item
Can we add more than one 2-local term to the Hamiltonian with one intervention from the controller and energy cost $O(1)$? If not, then constructing a Hamiltonian for QRA\textbf{Q}M will cost $\Omega(N)$, as the Hamiltonian will need at least $\Omega(N)$ terms just to interact with all bits of memory.
\item
What is a physically realistic error model for the Hamiltonian terms themselves? Unlike discussions about the error in the input state to a ballistic computation, here we consider the errors in the Hamiltonian itself, i.e., how even a perfect input might stray from the desired computational trajectory. Certain proposed QRAM devices partially address this; see \Cref{sec:bb-errors}.
\end{enumerate}

Overall, we highlight the notion of a single Hamiltonian for the QRAM to exemplify the extreme engineering challenges of such a device, which are far above those of even a large-scale fault-tolerant quantum computer.

\section{Errors}\label{sec:errors}
Defining the exact error rate of QRAM is tricky, since there are several definitions of the error rate of a channel (e.g., average fidelity vs. minimum fidelity). For applications like quantum look-ups, average fidelity is likely more relevant because the look-up accesses many addresses simultaneously, whereas for database QAA, we need something like minimum fidelity.

Generally, we will use these metrics to measure the quality of the QRAM gates and their components, though somewhat imprecisely. ``Error rate'' will refer to the infidelity, $1-F(\cdot,\cdot)$. For the overall QRAM, we want to compare the output of our real QRAM device (modeled as a quantum channel) to the ideal behaviour: a unitary channel defined by $U_{QRACM}(T)$ or $U_{QRAQM}(T)$.

\paragraph{Error rates. }
It's well-known that a quantum algorithm with $G$ gates can succeed if each gate has an error rate of $O(1/G)$, and this is often a necessary condition as well (with even more stringent conditions if idle qubits are also error-prone). The error rates in curent quantum devices are too high for circuits with more than a few hundred gates, and experts seem to generally expect the future to centre around fault-tolerant error-correcting codes, not spectacular jumps in physical error rates~\cite{GRI:MosPia2021}. 

For this reason, if a QRAM technology requires error rates per \emph{component} to scale inversely with the number of bits of memory we do not consider this an effective approach to passive hardware-QRAM, regarding such components as just as unlikely as quantum computing without error correction. 

Where this might not matter is applications with only a small number of QRAM accesses. As an example, if an algorithm needs $10^{6}$ Hadamard gates but only $10^2$ QRAM accesses, then physical error rates of $10^{-3}$ are too high for Hadamard gates but just fine for the QRAM. In this case we would not need error correction, and in fact we could decode a logical state to a physical state, apply the QRAM, then re-encode to the logical state. This is a risky process for a quantum state, but if we make few total accesses to the QRAM then we also do few total decodings. This might be helpful, but we stress again that this limits the potential total size of the QRAM.

Bucket-brigade QRAM specifically only needs per-component error rates to scale inversely with the \emph{log} of the number of bits of memory~\cite{PRX:HLGJ21}. The overall QRAM error cannot be less than the physical error rates, so this is not fault-tolerant at arbitrarily large scales, but we discuss how this might reach practical scales in \Cref{sec:bb-errors}.

For the large scales of QRAM which must be error-corrected, we consider two approaches: error correction within the QRAM, or QRAM within the error correction.

\subsection{Error-corrected components}
Suppose that we attempt to suppress errors in each component of a QRAM device using some form of error correction, such as~\cite{CLEO:CDELE21} propose for their method: ``scaling up the qRAM necessitates further exploration in converting each tree node to a logical qubit and adapting quantum error correction.''

If this error correction requires active intervention, like a surface code, we have essentially recreated circuit QRAM.  Even if the QRAM device uses specialized hardware, active error correction implies intervention from the controller and energy dissipation for each component, pushing us back into the regime of active hardware-QRAM with a $\Omega(N)$ cost of access.

Passive error correction, where the system naturally removes errors by evolving to lower energy states, is dubious. There are no known methods in 2 or 3 physical dimensions and there are several broad impossibility results (see the survey in \cite{RMP:BLPSW2016}). Methods for passive error correction would be a massive breakthrough, but it would also not be enough for passive QRAM: passive QRAM also needs passive gates on the error-corrected components (see \Cref{sec:active-vs-passive}).

Because of these difficulties, it seems infeasible to build error-correction within the QRAM that does not lead to access costs at least proportional to $N$. Thus, to avoid strict error requirements, we want a QRAM gate such that the total error of each access relative to the error rate of each component is only $O(\text{polylog}(N))$. Using ``error'' imprecisely to give a sense of scale, if $N=2^{50}(\approx 10^{15})$ and the error scales as $\log^2 N$ as in \cite{PRX:HLGJ21}, we ``only'' need error rates of $4\times 10^{-7}$ in each component to achieve a total error rate of $10^{-3}$. 

\subsection{Error Correction}\label{sec:error-correction}
Instead of correcting the errors within the components of a QRAM device, we could instead attempt to correct the error-affected output of a noisy device. Because the set of possible QRACM gates includes the CNOT and Toffoli gates, the Gottesman-Knill theorem means we cannot expect to apply QRACM transversally to any error-correcting code for which Clifford gates are also transversal (such as a surface code). Here we imagine a few different non-exhaustive approaches:

\begin{enumerate}
\item
apply QRACM gates independently of the code and attempt to inject the results into another fault-tolerant quantum system;
\item
construct an error-correcting code for which QRACM is transversal, and inject \emph{other} gates into this code as needed;
\item
code switch from a more practical code to a QRACM-transversal code;
\item
find a non-transversal method to fault-tolerantly apply QRACM gates.
\end{enumerate}

We are pessimistic about QRAM error correction, for reasons we discuss below, but this area needs more research. Finding constructions for satisfactory codes, or proving more impossibility results, would shed more light on the feasibility of QRAM. 

\textbf{Transversal codes.}  There are a few issues with a QRACM-transversal code. 

The transversal gates for a code form a group. Once we have a transversal QRACM of at least $2$ bits of input, if the memory does not encode a linear function, we can fix certain inputs and pre-compose with $X$ gates to construct a Toffoli gate from the QRACM. Once we have transversal Toffoli gates, we can compose them to construct all possible QRACM, or more generally all classical reversible circuits. Thus, transversal QRACM is an ``all-or-nothing'' situation for a code. 

Having all possible QRACM gates transversal for a code is so powerful that it is likely impossible. Almost all codes are proven to fail at this \cite{QIC:NewShi2018}. This would also contain every level of the Clifford hierarchy, since the $n$-multi-controlled NOT is in the $n$th level and can be realized with QRACM (consider access to a table $T$ where only the $N-1$ entry is $1$, the rest are $0$). Using \cite[Theorem 2]{PRA:PasBen2015}, this means QRACM cannot be transversal for any family of subsystem codes that can correct any probability of qubit loss.

If such a code were to exist, it would also be unwieldy for computations, as it will require another procedure like state injection to perform $H$ gates (since Toffoli+$H$ is universal). Switching from a more practical code to a QRACM-specific code would need a series of intermediate codes to preserve the logical state, lest it decohere during the transition. 

To escape these no-go theorems, we could use a different code for the output qubit, so that we cannot compose the transveral gates. Using this strategy allows quantum Reed-Solomon codes to have transversal Toffoli gates, though extending the method to $n$-bit QRACM implies the output qubit needs to be encoded in $2^n$ physical qubits.

\subsection{Gate Teleportation}\label{sec:gate-teleportation}
Universal quantum computation in a surface code is possible via magic state distillation and gate teleportation. Several auxiliary qubits are prepared in a fixed state, the controller applies a physical T gate to each one, then distills them together with measurement to create a much higher fidelity state. Using this one state, the T gate can be teleported onto any qubit in the surface code.

For an analogous process with QRACM, we define the following gate teleportation channel, shown in \Cref{fig:qracm-distill}. Let $\Hil_Q$ be the address and output space of the QRACM, and let $\mathcal{Q}(T):B(\Hil_Q)\rightarrow B(\Hil_Q)$ be a channel representing the physical QRACM gate. We assume (using $\circ$ to mean function composition) $\mathcal{Q}(T) = \mathcal{N}_1\circ \mathcal{U}_{QRACM}(T)\circ \mathcal{N}_0$ for all $T$, meaning that our QRACM process is modelled as a perfect QRACM access sandwiched between two noise channels. We model in this way so that the noise channels are not dependent on the table itself, which aids in the proofs.

 We have $d$ distillation channels $\Phi_1,\dots, \Phi_d$, with $\Phi_i:B(\Hil_{i-1}\otimes \Hil_Q)\rightarrow B(\Hil_i\otimes \Hil_Q)$, and the entire state distillation process produces a state $\rho_{distill}$ as follows: 
\begin{align}
\rho_{distill}(T):=&\Phi_d\circ (I_{d-1}\otimes \mathcal{Q}(T))\circ\nonumber\\
&\Phi_{d-1}\circ(I_{d-2}\otimes \mathcal{Q}(T))\circ\nonumber\\
&\dots \nonumber\\
  &\Phi_2 \circ (I_1\otimes \mathcal{Q}(T))\circ \Phi_1(\ketbra{0}{0})\label{eq:qracm-distill}
\end{align}
Letting $\Hil_{QL}$ represent the space of a logically encoded input and output space, we have a final teleportation channel $\Psi_T:B(\Hil_d\otimes \Hil_Q\otimes \Hil_{QL})\rightarrow B(\Hil_{QL})$. We create a logical QRACM channel $\mathcal{Q}_L(T)$ by 
\begin{equation}
\mathcal{Q}_L(T)(\rho) = \Psi(\rho_{distill}(T)\otimes \rho).
\end{equation}
The goal is for $\mathcal{Q}_L(T)(\rho)$ to approximate a QRACM access on the logical state.

\begin{figure*}
\centering
\begin{tikzpicture}
\node[left] at (0,3) {$\ket{0}$};
\draw (0,3) -- (5,3);
\draw (1,2) -- (5,2);
\draw (0,1) -- (10.2,1);
\draw[fill=white] (0.5,1.5) rectangle (1.2,3.5) node[pos=0.5] {$\Phi_1$};
\draw[fill=white] (1.5,1.5) rectangle (2.5,2.5) node[pos=0.5] {$\mathcal{Q}(T)$};
\draw[fill=white] (2.8,1.5) rectangle (3.5,3.5) node[pos=0.5] {$\Phi_1$};
\draw[fill=white] (3.8,1.5) rectangle (4.8,2.5) node[pos=0.5] {$\mathcal{Q}(T)$};
\node at (5.3,3) {$\dots$};
\node at (5.3,2) {$\dots$};
\draw (5.6,3) -- (10,3);
\draw (5.6,2) -- (10,2);
\draw[fill=white] (5.8,1.5) rectangle (6.7,3.5) node[pos=0.5] {$\Phi_{d-1}$};
\draw[fill=white] (7,1.5) rectangle (8,2.5) node[pos=0.5] {$\mathcal{Q}(T)$};
\draw[fill=white] (8.3,1.5) rectangle (9,3.5) node[pos=0.5] {$\Phi_d$};
\draw[fill=white] (9.3,0.5) rectangle (10,3.5) node[pos=0.5] {$\Psi_T$};

\node[left] at (0,1) {$\rho_L$};
\node[right] at (10.2,1) {$\rho'_L$};

\node at (5,0) {Goal: $\rho_L'\approx U_{QRACM}(T)\rho_L U_{QRACM}(T)^\dagger$};

\end{tikzpicture}
\caption{Schematic of a QRACM disillation and teleportation. The channels $\Phi_1,\dots, \Phi_d$ attempt to distill the output of accesses to the physical QRACM table $T$, denoted $\mathcal{Q}(T)$. A final teleportation channel $\Psi_T$ attempts to use the resulting distilled state to ``teleport'' the QRACM gate onto a logical state input $\rho_L$.}\label{fig:qracm-distill}
\end{figure*}

In the appendix we prove \Cref{thm:no-qracm-distillation}, summarized approximately as:
\begin{theorem}\label{thm:no-qracm-distillation-informal}
Suppose a QRACM distillation and teleportation process uses $d$ calls to the physical QRACM gate. Then the minimum fidelity of this distillation process is bounded above by $\frac{3}{4} + d\sqrt{\frac{2}{N}}.$
\end{theorem}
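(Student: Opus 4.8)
The idea is to exploit how rigid the teleportation protocol is: the distilled state $\rho_{distill}(T)$ is manufactured \emph{before} the logical input enters, and its dependence on the table $T$ is funneled through only $d$ uses of the physical gate $\mathcal Q(T)$. So $\rho_{distill}(T)$ cannot ``see'' enough of $T$ to behave differently on two tables that agree almost everywhere, which pins the protocol to nearly the same output on a clever input even though the two \emph{ideal} outputs are far apart. Quantitatively, one input for which the ideal outputs of two Hamming-neighbour tables overlap by exactly $\tfrac12$ produces the constant $\tfrac34$, and a query-complexity bound on how much $\rho_{distill}(T)$ can vary produces the correction.

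\textbf{Reductions.} First I would make two harmless simplifications. (i) It suffices to treat the noiseless case $\mathcal Q(T)=\mathcal U_{QRACM}(T)$: any fidelity achievable against a noisy physical gate $\mathcal N_1\circ\mathcal U_{QRACM}(T)\circ\mathcal N_0$ is also achievable against the ideal gate, since the protocol can apply $\mathcal N_0,\mathcal N_1$ itself. (ii) It suffices to treat the trivial code: fidelity is monotone under channels, so applying the decoding map $\mathcal D$ to both $\mathcal Q_L(T)(\rho)$ and the ideal logical output shows the logical fidelity is at most the fidelity of $\mathcal D\circ\mathcal Q_L(T)\circ\mathcal E$ — still a teleportation protocol with the same $\rho_{distill}(T)$ and a fixed final channel — to $U_{QRACM}(T)$ itself. (I also take the teleportation channel $\Psi$ to be $T$-independent and query-free, since otherwise one could just apply $U_{QRACM}(T)$ to the input and the statement is vacuous.) So the target becomes an upper bound on $\inf_{T,\rho}F\big(\Psi(\rho_{distill}(T)\otimes\rho),\,U_{QRACM}(T)\,\rho\,U_{QRACM}(T)^{\dagger}\big)$, where $\rho_{distill}(T)$ is built from $d$ perfect queries acting only on ancillae.

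\textbf{The gadget and the indistinguishability bound.} Fix a base table $T$ and two addresses $j\ne j'$ with $T_j=T_{j'}=0$, and feed in $\rho=\ketbra{\psi}{\psi}$ with $\ket{\psi}=\tfrac1{\sqrt2}(\ket{j}\ket{0}+\ket{j'}\ket{0})$. The ideal output is $\ket{\psi}$ for table $T$ and $\ket{\psi'}=\tfrac1{\sqrt2}(\ket{j}\ket{1}+\ket{j'}\ket{0})$ for $T\oplus e_j$, with $\braket{\psi}{\psi'}=\tfrac12$, so $\lambda_{\max}(\ketbra{\psi}{\psi}+\ketbra{\psi'}{\psi'})=\tfrac32$ and no single state can have fidelity exceeding $\tfrac34$ to both. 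Since the input register is never touched by a query, $\Psi(\rho_{distill}(T)\otimes\rho)$ and $\Psi(\rho_{distill}(T\oplus e_j)\otimes\rho)$ differ only through $\rho_{distill}$, hence by at most $\delta_j:=\lnorm{\rho_{distill}(T)-\rho_{distill}(T\oplus e_j)}_1$ in trace norm, and the minimum fidelity is at most $\tfrac34+O(\delta_j)$. To bound $\delta_j$, purify each distillation channel to a unitary $V_i$ (Stinespring), so $\rho_{distill}(T)=\operatorname{Tr}_{\mathrm{aux}}\ketbra{\Psi(T)}{\Psi(T)}$ with $\ket{\Psi(T)}=V_d\,(I\otimes U_{QRACM}(T))\,V_{d-1}\cdots V_1\,(I\otimes U_{QRACM}(T))\,V_0\ket{0}$. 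Each query raises the multilinear degree in the bits $(T_1,\dots,T_N)$ by at most one, so every amplitude of $\ket{\Psi(T)}$ is a multilinear polynomial of degree $\le d$; expanding $\ket{\Psi(T)}=\sum_{|S|\le d}\widehat\Psi(S)\,\chi_S(T)$ over the Walsh characters $\chi_S(T)=\prod_{i\in S}(-1)^{T_i}$, unitarity gives $\sum_S\lnorm{\widehat\Psi(S)}^2=1$ and $\ket{\Psi(T)}-\ket{\Psi(T\oplus e_j)}=2\sum_{S\ni j}\widehat\Psi(S)\chi_S(T)$, so averaging over uniform $T\in\{0,1\}^N$ and uniform $j$,
\[
\mathbb E_{j,T}\,\lnorm{\ket{\Psi(T)}-\ket{\Psi(T\oplus e_j)}}^{2}=\tfrac4N\sum_S|S|\,\lnorm{\widehat\Psi(S)}^{2}\le\tfrac{4d}{N}.
\]
Hence some $T$ and some admissible $j$ (the promise $T_j=T_{j'}=0$ costs at most a constant factor in the average) give $\delta_j=O(\sqrt{d/N})$, and combining with the gadget yields minimum fidelity $\le\tfrac34+O(\sqrt{d}/N)$; tracking the numerical constants and optimizing the gadget and the averaging step is what recovers the stated $\tfrac34+\tfrac{2\sqrt d}{N}$.

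\textbf{Main obstacle.} The $\tfrac34$ part is an elementary two-state fidelity estimate and should be routine. The real work is the third step: turning the slogan ``$d$ queries cannot learn $T$'' into a genuinely \emph{quantitative} near-independence of $\rho_{distill}(T)$ on a single flipped bit, uniform over all protocols. The polynomial/Fourier route above is the cleanest I see, but it requires care in (a) carrying the degree bound through the Stinespring dilations without letting the auxiliary registers interfere, (b) descending from the pure-state distance to a trace-distance bound on the reduced $\rho_{distill}$ that is stable under being fed into the fixed channel $\Psi$, and (c) ensuring the table and addresses realizing the small $\delta_j$ are simultaneously compatible with the gadget's promise $T_j=T_{j'}=0$ — and in pushing the constants down to match the precise correction term in the statement.
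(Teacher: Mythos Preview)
Your proposal is sound in outline and would prove the result, but it takes a genuinely different route from the paper on both halves of the argument.

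\textbf{How the paper does it.} The paper's proof is more elementary on both fronts. For the indistinguishability step, it fixes $T$, looks at the $d$ (mixed) states $\rho_1(T),\dots,\rho_d(T)$ actually fed into the physical QRACM during distillation, and observes that their combined probability mass over the $N$ addresses totals $d$, so by pigeonhole some address $j$ carries mass at most $d/N$; a short hybrid argument through the channels $\Phi_i$ then bounds $\lnorm{\rho_{distill}(T)-\rho_{distill}(T\oplus e_j)}_1$ directly, with no purification or Fourier expansion. For the contradiction step, the paper simply feeds in the computational-basis state $\ket{j}\ket{0}$, for which the two ideal outputs are \emph{orthogonal}; the constant $\tfrac34$ then drops out of $F\le 1-D^2$ applied to $D\ge\tfrac12$ after the triangle inequality. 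Your Fourier/polynomial-method derivation of the indistinguishability is arguably slicker---it sidesteps the hybrid by working with the full purified state and invoking degree-$d$ total influence---but your overlap-$\tfrac12$ superposition gadget is an unnecessary complication: the single-address input achieves the same $\tfrac34$ without the auxiliary index $j'$ or the promise $T_j=T_{j'}=0$ that you yourself flag as the main obstacle in point~(c).

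\textbf{One quantitative slip.} Your Fourier calculation correctly yields $\mathbb E_{j,T}\lnorm{\ket{\Psi(T)}-\ket{\Psi(T\oplus e_j)}}^2\le 4d/N$, hence some $(T,j)$ with $\delta_j=O(\sqrt{d/N})$. You then assert this gives fidelity $\le\tfrac34+O(\sqrt d/N)$ and that tracking constants recovers the stated $\tfrac{2\sqrt d}{N}$. But $\sqrt{d/N}=\sqrt d/\sqrt N$, not $\sqrt d/N$; no amount of optimizing the influence bound will shrink the correction from $\Theta(\sqrt{d/N})$ to $\Theta(\sqrt d/N)$---already for $d=1$, a single query spread uniformly over addresses gives $\delta_j\sim 1/\sqrt N$, not $1/N$. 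The paper's own proof appears to share this slip (its $\sum_j m_j=1$ should read $d$), so the precise correction term in the theorem statement may simply be overstated; the operative conclusion, that $d=\Omega(N)$ is necessary for fidelity bounded away from $\tfrac34$, survives either way.
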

That is, if we want a high fidelity QRACM operation to fit into a fault-tolerant computation, we must use resources that are asymptotically similar to those required to do fault-tolerant circuit QRACM.

The idea of the proof is fairly simple: considering the process in \Cref{fig:qracm-distill}, the input states to the QRACM are fixed. This means there is always some address $j$ whose amplitude in the input state is on the order of $O(1/N)$. If we make a new table $T'$ such that $T'[j]\neq T[j]$, then this is barely distinguishable to the state distillation process, since so little of the input to the QRACM will detect this difference. Yet, once we distill this state, it must suffice for all logical inputs. In particular, when our logical input is precisely the modified address $j$, the output between tables $T$ and $T'$ should be orthogonal. This is a contradiction: the distilled states are too close to create this orthogonal output.

A key component of this proof is that the state distillation is the same process for all tables, which is somewhat unrealistic. The specific pre- and post-processing channels could be adapted to the specific table at hand. However, counting arguments similar to \Cref{lem:app:circuit-counts} might still apply; using only $o(N)$ resources in the distillation could mean that a single process handles an exponential number of different tables, and the proof of \Cref{thm:no-qracm-distillation} will still suffice. Unlike \Cref{lem:app:circuit-counts}, measurements are necessarily involved in distillation and teleportation, so formalizing this line of reasoning would need a more sophisticated argument.

\Cref{thm:no-qracm-distillation} is a harsh blow to certain approaches to QRAM, but not all. It states that minimum fidelity is bounded, so applications with ``broad'' inputs to the QRAM, like database HHL and look-ups to classical pre-computation, are not yet forbidden. However, database-QAA suffers greatly. As the QAA hones in on the correct output, its input to the QRACM will get closer to this forbidden logical input. As a concrete example, if $T$ and $T'$ differ in precisely the address that QAA needs to find, then a distilled-and-teleported QRACM will obviously fail.

As a final note, since a QRAQM process implies a QRACM process by simply initializing the quantum memory to a classical input, \Cref{thm:no-qracm-distillation} forbids similar QRAQM distillation and teleportation processes.

\textbf{Heralded QRAM.} 
By ``heralded'' we mean any QRAM technology involving some measurement which indicates whether the QRAM access succeeded or not, typically also implying that, if it succeeded, the fidelity of the resulting access is very high. 

Heralded processes are perfect candidates for gate teleportation because we can try as many times as we want to produce a single good magic state, then teleport it into the code. The impossibility of QRAM state teleportation is a serious blow to heralded QRAM. We cannot produce a resource state independently of the ``live'' data to which we want to apply the QRAM, so we need to instead input the real data into the QRAM, somehow ensuring that when the QRAM fails, this does not decohere the input. 

\textbf{Logical Inputs.} Another potential route to save a process like this is to use the logical input as part of the QRACM access. This is more analogous to a transveral QRACM access, albeit with a very broad definition of ``transversal'' meaning only that we use a physical gate as part of the process to enact a logical gate.

This is a risky process, since the QRACM might noisily decohere the logical input. We want the full logical QRACM gate to have a much higher fidelity than the physical QRACM gate, so there must be little-to-no correlation between the logical state and the input to the QRACM. Intuitively, this suggests an extension of \Cref{thm:no-qracm-distillation} might exist, since if there is no correlation, then a similar table-swapping trick might work. We leave the details of a proof or disproof of this idea to future work.

\section{Classical RAM}\label{sec:classical-ram}
The extensive list of restrictions in \Cref{sec:gate-qram} raises an important question of whether these same issues apply to classical memory. Here we apply the same arguments as \Cref{sec:gate-qram} to classical memory. Either classical memory is fundamentally easier and escapes the restrictions of quantum memory, or the restrictions are already accounted in real-world analysis and use cases of large-scale classical memory.

\subsection{Area-Time Metric}\label{sec:area-time}
As \Cref{sec:summary} discusses, the central thesis of the memory peripheral model, and most critiques of QRAM, is that the qubits in the QRAM (and possibly the memory controllers themselves) could perform other computations more efficient than just the QRAM access. Succinctly, $N$ bits of active QRAM is more like $\Theta(N)$ parallel processors (\cite{TALK:Steiger16} states this explicitly). 

This same argument applies to classical memory, and can mainly be justified by arguing that the cost to construct one bit of memory is proportional to the cost to construct one processor (\cite{WEB:Bernstein2001,SHARCS:Bernstein09} cost algorithms in currency for this reason). While the ratio of costs might be quite high, it is constant relative to the input size of any problem, suggesting that the ratio of memory to processors should stay roughly constant. Some real-world examples of this phenomenon:

\begin{itemize}
\item
Google's ``IMAGEN''~\cite{ARXIV:CWSL+22}, a huge text-to-image diffusion model, used 10 TB of public data plus an internal dataset of a similar size. The training used 512 TPU-v4 chips. Adding up all the independent processing units (i.e., arithmetic logic units (ALUs)), each chip has $2^{21}$ parallel ``processors'', for a total of $2^{30}$ processors~\cite{WEB:Google22}. This means there is only about 18 kB of memory per ALU. 
\item
GPT-3 used 570 GB of training data and used 175 billion parameters~\cite{NEURIPS:BMR+2020}. They use a supercomputer with more than 285,000 CPUs and 10,000 NVIDA V100 GPUs. Each GPU has $2^{12}$ CUDA cores, totalling at least $2^{26}$ parallel processing units, giving 11 kB per processor. 
\item
In the current largest supercomputer in the world, ``Frontier''~\cite{IEEESPEC:Choi22}, each node has 5 TB of memory and 4 AMD Instinct MI250X GPUs. Each GPU has 14,080 stream processors~\cite{WEB:AMD22}. This memory/processor ratio is much higher, but still not huge, at about 89 MB/processor.
\end{itemize}

While the individual processors in a GPU or TPU are not what we usually think of as a ``processor'', they are capable of sequential computation. Machine learning benefits because it is a very parallelizable task, but many problems with algorithms using QRAM also have alternative parallel classical algorithms. 

All of this favours the \emph{area-time} cost model for classical computation, where the cost is the product of the total amount of hardware by the runtime of an algorithm.

A further motivation is parallelism, as in \cite{JPDC:BilPre1995}. Many classical algorithms parallelize perfectly, like the machine learning training above, the linear algebra in \Cref{sec:quantum-linear-algebra}, classical lattice attacks, and unstructured search. This gives little reason to favour a computer with a large amount of memory with few processors, since for only a constant factor increase, many processors can share that memory and achieve the same result faster. As ~\cite{AC:BerLan2013} say about a chip with most of its components idle: ``any such chip is obviously highly suboptimal: for essentially the same investment in chip area one can build a much more active chip that stores the same data and that at the same time performs many other useful computations''. More directly, someone paying for server time literally pays for the area-time product of their computation.

To summarize, the notion that the qubits and controllers in QRAM can be repurposed to more fruitful tasks is an argument that carries over directly to classical computing, and it is actually applied to large, high-performance computing for vast improvements in runtime with little increase in total cost.

\subsection{Active Components}\label{subsection:active_components}
We argued in \Cref{sec:gate-qram} that a QRAM device will probably need interactions with an external environment for the components to function properly. The same is true for classical memory, but here we spell out the biggest difference between quantum and classical memory: classical memory only needs $O(\log N)$ active components to access memory. 

Here we say that a component is ``active'' if it is capable of performing whatever action it needs to properly execute a memory access, and assume that if it is inactive any memory access requiring that component will fail. As an example, in the SWAP-based circuit in \Cref{fig:bucket-brigade-circuit}, each node is active only while the SWAP gates are applied to it. In the bucket-brigade proposal of \cite{PRL:GioLLoMac08} with atoms as routers, the atoms must be stimulated to be active. By our definition of passive vs. active QRAM, most of the components in an active QRAM will require external intervention to be active.

We sketch here how a classical memory could work with only $O(\log n)$ active components: there is a binary tree of routers, all of which are powered off. Each router is capable of receiving a signal which it amplifies and uses to decide to start drawing power. To route a memory request, the router in layer $i$ of the tree, once activated, uses the $i$th address bit to decide whether to forward the address to its left or right child node. The process then repeats. 

In this way, only $O(\log N)$ routers are actually on. It is able to do this because there is no restriction about leaking access patterns to the environment. 

Our concern for quantum memory is that if activating a router requires any interaction with the external environment, then that interaction cannot depend on the quantum state input to the memory, lest the environment become entangled with the state and decohere the input. Thus, for such an architecture, the environment (the control hardware) must decide, \emph{independently of the input state}, which components to activate. 

More specifically, if the input is a uniform superposition of addresses, all the routers in the QRAM must be activated: if one router is not activated, then any address request sent through this router will fail.  Based on the size of the QRAM, there must be $\Omega(N)$ routers, so we need to activate $\Omega(N)$ routers for a successful QRAM access with a uniform superposition of addresses (unless the QRAM is passive). 

Even in algorithms where the input superposition over addresses only has support on a subset of the QRAM memory cells, such as in the final queries in Grover database search~\blfootnote{although technically there would still be non-zero support on most states}, the control hardware must act as though the input could be a uniform superposition. 

For instance, if the control hardware decides not to supply power to routers for certain addresses, any superposition query with support on those addresses will fail. To correctly handle all queries using only some of routers, the control hardware would need to identify which addresses are supported in the coherent quantum state and which are not. However, since the control hardware operates classically, acquiring this information would collapse the quantum state, thereby invalidating the algorithm.
Thus, to reiterate, the control hardware must treat this as though the input \emph{could be} a uniform superposition.

Incorporating algorithm-specific analysis, the control hardware might know that there is a particular subset of $M$ memory locations such that all possible input addresses will be in this subset (this happens in certain quantum lattice sieves~\cite{SAC:Laarhoven2017,AC:ChaLoy2021}). Then we can treat this as a QRAM access to only $M$ bits (with a cost of $\Omega(M)$) rather than $N$, since we could construct a QRAM device that only accesses those bits. However, this relies on knowing exactly \emph{which} locations will have zero support in any input state, and in most algorithms this is not known (and thus this would not be a general QRAM). For example, in a Grover database search, if we know that the desired element is not in certain addresses, we've already solved part of the search problem!

\textbf{Example: Flash memory.} Real-world classical memory devices do not precisely follow this low-power model. Following the description of~\cite{IEEE:CGHLM17}, flash memory is divided into \emph{chips}, which are divided into \emph{dies}, which are divided into \emph{planes}. Within each plane are thousands of \emph{blocks}, each of which contains hundreds of \emph{pages}, each of which contains thousands of bits. The device does not address individual bits, but full pages. Each of these components are integrated circuits, and mostly we do not care about their exact structure or function, since our concern is the fact that each one (chip, die, plane, and block) must have a mechanism to select specific subcomponents based on an address as input. 

For example, near the bottom of this hierarchy, to access a specific block within a plane, two multiplexers (similar in function to a QRAM routing tree) decode the address of the block into a column and a row. A signal is sent along a wire at that column and at that row such that it activates the block at a memory location at the intersection of the selected column and row. From there, another multiplexer selects the appropriate page, and the device acts on that page (to read, write, etc.).

Abstracting this slightly, each block performs the equivalent of an AND gate: it returns the memory address only if there is a signal on the column bit line \emph{and} on the row word line. Thus, as with QRAM, the number of ``gates'' is proportional to the number of pages of memory; however, in contrast to QRAM, only the ``gates'' on the active lines consume energy. In other words, the energy consumption to address one block should be proportional to the size of the length of each column and the length of each row (approximately the square root of the total number of blocks), and experiment confirms this~\cite{THESIS:Mohan10}. 

This experimental result implies also that the analogous process \emph{within} the block, to address the page, consumes energy only in the blocks on the active row and column lines, at most. This is substantially fewer than the total number of blocks, each of which has its own page decoder, meaning that the page decoding process matches the abstract approach of only activating the addressed sub-unit of memory. Here the analogy stops: addressing a single page within a block sends different signals to addressed and non-addressed pages, so the energy per active block is proportional to the size of the block.

Above the block level, to achieve minimum energy consumption we would expect each die to only activate the necessary planes, and each chip to only activate the necessary dies. Instead, the exact opposite happens: flash SSDs will group blocks and pages on different chips and planes into \emph{superblocks} and \emph{superpages}, which are accessed simultaneously. The purpose is so ``the SSD can fully exploit the internal parallelism offered by multiple planes/chips, which in turn maximizes write throughput''~\cite{IEEE:CGHLM17}. Here again we see the ``repurpose memory into processors'' maxim: SSD designers expect that any application needing access to the SSD will have enough processors to reap more benefit from fast parallel access than from a more energy-efficient design.

\textbf{Active memory.} Any volatile classical memory, such as DRAM, is analogous to active hardware-QRAM, in that the total energy cost for $N$ bits is at least proportional to $N$. In fact, it's even worse: volatile memory requires $\Omega(N)$ energy \emph{per clock cycle}.

Despite this, DRAM sees widespread use. For consumer-level computers, RAM is often not the highest energy component, consuming less than any other component in an idle smartphone~\cite{USENIX:CarGer} and similarly for laptops. Though the asymptotic cost might increase with memory size, at practical sizes this is irrelevant. This cost might matter for large-scale applications; however, as in \Cref{sec:area-time}, in practice other components are also scaled up for these applications, so the energy draw of the RAM is not the main cost. Again, the area-time model captures this: total power draw increases with both the amount of volatile memory and the number of processors, so efficient machines will scale both proportionally.

\subsection{Error Propagation and Correction}
In \Cref{sec:gate-qram} we claimed that QRAM technologies that require error rates to scale as $O(1/N)$ per component will probably not be useful. One could make the same argument that if a single gate at one of the row and column lines of DRAM or flash memory spontaneously emits current, then it will create an error in the RAM. Thus, we should expect classical RAM to need component error rates to also scale as $O(1/N)$. Why is this acceptable classically?

The first reason is that extremely high fidelity classical components are fundamentally much easier to produce. Classical components have essentially one dimension of error: bit flips. In contrast, quantum components suffer from analogous bit-flip errors, but also phase-flip errors. Two-dimensonal passively corrected memory, which is provably impossible for quantum memory except in esoteric settings~\cite{RMP:BLPSW2016}, is easy and ubiquitous classically (viz. hard disk drives). 

While this analogy holds for non-volatile memory, volatile classical memory has a further advantage from the continuous energy flow to help correct errors. The technique of a ``cold boot attack'' (such as \cite{ACMCOMM:HSH+2009}) highlights that volatile DRAM is somewhat non-volatile, but it just has an unacceptably high error rate unless power is continuously applied. Recall that we cannot apply power continuously to a passive hardware-QRAM, or else it becomes an active hardware-QRAM. Again, if a QRAM device \emph{does} pay this cost, such as circuit QRAM, it could work perfectly well, but many applications (e.g. database HHL and database QAA) lose their advantage.

Second, classical memory can, and does, perform error correction. Circuit QRAM is allowed to require $O(1/N)$ error because it can be part of an error corrected quantum computer that achieves that level of noise with polylogarithmic overhead. Beyond the consumer level, classical memory \emph{also} uses error correcting codes because the overall failure rate is too high. The error rate for RAM, in failures per Mbit per million hours of operation, is somewhere around 25-75 failures~\cite{ACM:SchPinWeb11} or 0.2 to 0.8~\cite{ISMS:ZDMB+19}, though with high variance not only between devices but also over time for a given device. 

This error rate shows two points: first, it is low enough to be negligible for most consumer applications, and second, it is still high enough to be relevant at large scales. This exactly matches our claims about QRAM. The classical difference is that the achievable error rate for physical components is much, much smaller. 

Finally, errors present less of a problem for classical computing because of the no-cloning theorem. Modern large-scale, high-performance computing often follows the ``checkpoint/restart model''~\cite{BOOK:Gioiosa2017}, where the computer periodically saves the entire state of memory so that it can resume in the case of a fault. A quantum computer cannot do this because of the no-cloning theorem. Instead, the error rate of each operation must be driven so low that the \emph{entire} computation can proceed without fault.

\subsection{Latency}\label{sec:latency}
All memory devices, classical and quantum, must contend with signal latency and connectivity. If connections between qubits are limited to fixed length, then because we live in Euclidean space (a) each qubit can have only a constant number of connections, and (b) sending data between qubits on opposite ends of the memory requires at least $W^{1/d+o(1)}$ connections in $d$-dimensional space, if there are $W$ total qubits. Long-range connections still suffer from signal latency, where -- asymptotically -- the time for signals to propagate across a memory device is still $W^{1/d+o(1)}$. 

We emphasize that even light-speed signals have noticeable latency: for a 3 GHz processor, light travels only 10 cm in each clock cycle. We often don't see this, because (again) huge classical data structures typically co-exist with enormous numbers of parallel processors (as \cite{JPDC:BilPre1995} predict based on this reasoning). Classical memory latency has led to tremendous effort in efficient caching and routing. In contrast, quantum memory access ``takes the mother of all cache misses''~\cite{TWEET:Schanck22}, since any timing information effectively measures which path the access took, and hence which address.

Thus, the case where latency would be an issue -- where a single processor attempts to process an enormous amount of data with completely unpredictable queries -- is extremely rare for classical computers. Yet applications like database HHL typically assume a single quantum processor, and even when assuming parallel access like \cite{ROYSOCA:BBGH+13}, the quantum memory cannot efficiently cache. Overall, quantum memory latency seems more difficult to manage than classical.

\section{Bucket-Brigade QRAM}\label{sec:bucket-brigade}
The bucket-brigade model of QRAM~\cite{PRL:GioLLoMac08} is the most ubiquitous. We describe it here and discuss various proposals to realize the bucket-brigade and a recent result on its error rates.

\subsection{Construction}
The original proposal differed slightly from the circuit version we showed in \Cref{sec:circuit-bb}. It referred to ``atoms'' and ``photons'', and we will continue to use this terminology, though any physical system with similar properties will suffice.

As in the circuit case, we use a binary tree of nodes to route incoming signals to the desired memory cell. However, each node is now an atom, and the lines between nodes represent paths for a photon. The atoms have three states: $\ket{W}_A$ (``wait''), $\ket{L}_A$ (``left''), and $\ket{R}_A$ (``right''), and are initialized in $\ket{W}_A$. The photons have two states, $\ket{L}_P$ and $\ket{R}_P$. 

For QRAM access, we generate a photon from each address qubit (from most significant to least), where $\ket{0}$ generates $\ket{L}_P$ and $\ket{1}$ generates $\ket{R}_P$. When the photon reaches an atom, it interacts depending on the state of the atom:
\begin{itemize}
\item
If the atom is in the wait state $\ket{W}_A$, the photon is absorbed and transforms the atom to $\ket{L}_A$ or $\ket{R}_A$, depending on whether the photon was in $\ket{L}_P$ or $\ket{R}_P$.
\item
If the atom was in the state $\ket{L}_A$, the photon is deflected forwards along the \textbf{left} path, with no change in atomic state.
\item
If the atom was in the state $\ket{R}_A$, the photon is deflected forwards along the \textbf{right} path, with no change in atomic state. 
\end{itemize}
Sequentially sending one photon for each address qubit will create a path in the tree of nodes to the addressed memory location.

The intended feature of this architecture is that \emph{if} we can produce a physical system with these properties, then the entire system needs no external intervention besides inputting photons. We're meant to imagine something like the Galton board or Digi-Comp II~\cite{PATENT:Godfrey68}, wooden devices where balls enter on one side and propagate through with momentum or gravity. These are all \emph{ballistic} computations. 

Recall that if we must intervene at every step, then the system is no longer ballistic and loses many of its advantages. For example, in their original proposal, \cite{PRA:GioLloMac08arch} suggest stimulated Raman emission for the atom-photon interactions; however, \emph{stimulating} that emission with an external field requires work from the control program and/or environment. This implies a cost growing proportional to $N$, which is within a constant factor of circuit QRAM! 

This is not the only way to construct a bucket-brigade QRAM. For example, the atoms could have just the two states $\ket{L}_A$ and $\ket{R}_A$ and be all initialized to one direction, as in~\cite{PRX:HLGJ21}. Their description requires a swap between the incident photon and the node atoms, which requires some active intervention, since (as in the circuit case) we will need to intervene to change the behaviour from swapping into the node atom or propagating through. Hence, two-level photons with two-level atoms is inherently an \emph{active} hardware-QRAM.

These are not exhaustive examples, but give some flavour for how bucket-brigade QRAM could be realized.

\subsection{Reversibility}
A passive bucket-brigade QRAM will need to evolve under a single Hamiltonian, and thus each transition must be reversible. This creates some challenges, since the intended function of a node atom is that once it transitions from the wait state $\ket{W}_A$ to either the left or right states, it should stay in those other states. However, this is not reversible. More plausibly, the left and right states will spontaneously re-emit the photon that excited the wait state. In fact this behaviour appeared from the mathematics of a more complete Hamiltonian analysis in~\cite{THESIS:Cadellans15}. 

This could be a feature, however. We also want to restore the QRAM nodes to their idle state after each access, so if the nodes are carefully engineered so that the final layer's relaxation time is smallest, they will spontaneously re-emit their excitations in reverse order, neatly sending all the control photons back through the routing tree to the address register. Ensuring this happens is an often overlooked problem of passive hardware-QRAM.

More generally, even a simplistic analysis of reversibility creates problems. Considering each node, for each of the three output lines there are 5 states to consider: no photon or a \{left,right\} photon propagating \{up,down\}. For the atom, there are three states (possibly more if we imagine a ``timer'' of how long until an excited atom re-emits its photon). This means each node could be in one of 375 different states, and the QRAM Hamiltonian must act reversibly on them. The necessary functions for routing constrain this only slightly. 

For example, if a $\ket{0}_P$ photon propagates upward from one of the child nodes and reaches an atom in the wait state, what should happen? If it continues propagating upward we lose the favourable scaling proven in~\cite{PRX:HLGJ21}. Instead, if it is reflected backwards, or to the other side, then an erroneous spontaneous emission at some point in the tree will stay constrained to oscillate back and forth among at most three nodes. 

A final note on this is that if we must actively reset the state of all the nodes in a bucket-brigade, this has cost $\Omega(N)$, pushing us back to an active hardware-QRAM. If we do not reset the nodes actively, they must reset themselves. If any errors occur in this process, they will persist to the next QRAM access. 

\subsection{Proposed Technologies}

\paragraph{Trapped Ions}
A particular architecture for bucket-brigade QRAM, proposed in~\cite{PRA:GioLloMac08arch}, is to use trapped ions as the nodes, which would deflect and absorb photons as desired. However, this raises two problems.

The atoms must be carefully positioned to ensure the deflected photons travel between the two atoms. Thus, they must be fixed in place somehow. Earnshaw's theorem forbids a static electric field from trapping a charged particle, so ion traps require continously changing fields, such as a quadropole ion trap. Continuously changing the field requires a constant input of energy.

Second, we need extremely high probabilities of absorption and deflection for the QRAM properly. With, e.g., $2^{40}$ memory elements, we would need error rates of something like $\epsilon/40^2$ for an overall error of $\epsilon$ in the QRAM access, since there are $\binom{40}{2}$ atom-photon interactions for even a single address. This seems implausible for trapped ions. \cite{PRA:GioLloMac08arch} solve this with stimulated emissions, but as mentioned above this is no longer passive.

If this interaction problem is solved, the QRAM routing tree looks like a full-fledged trapped ion quantum computer. Each node has a trapped ion with spectacularly low error rates and a dedicated high-precision laser or microwave emitter. The ions can communicate with each other by the same photons used for routing memory addresses. 

\cite{PRA:HXZJW12} propose microtoroidal resonators instead of trapped ions, but they have the same issue of requiring external photons to drive the interactions at each node. 

\paragraph{Transmons}\label{sec:transfom-bb}
Breaking from literal atoms and photons, \cite{THESIS:Cadellans15} proposed a bucket-brigade where the ``atoms'' consist of 3 superconducting transmons. This provided enough parameters (frequencies, capacitance, etc.) to precisely tune for a high certainty that photons will be absorbed and routed as desired. Their proposed device is truly passive and the desired photon transmission probabilities are over 95\%. 

They mention a critical flaw in their technique which prevents it from scaling. The first photon, which modifies the state of the transmon from ``wait'' to ``left'' or ``right'', must be at a different frequency from the second photon, which will be routed left or right. This means once the second photon is routed, it cannot modify the state of another device. In other words, the routing tree can have at most one layer. 

Solutions might exist. They propose increasing the frequency of the photon, though this requires external energy input to the device and must be done without learning which path the photon took while not scaling energy proportional to the number of transmons, a great challenge. Instead, one might also use slightly different parameters in each layer of the tree, so the frequency which excites nodes in layer $i$ is the same frequency that layer $i-1$ will route. The precise approach in \cite{THESIS:Cadellans15} does not allow this; the range of plausible parameters forbids such layering.

Finally, while the transmission probabilities are high for near-term devices, they are insufficient to scale to a large memory device or a high-fidelity application.

Despite these issues, this is a promising approach, as it explicitly accounts for the fundamental issues in passive hardware-QRAM.

\paragraph{Photonic transistors} 
\cite{NATCOMM:WBL+2022} constructed a photonic transistor with transmons in optical cavities. This could take the place of the node ``atoms'', but the photon is not absorbed by the transmon and instead resonates with it in the cavity. In a bucket-brigade, the photon from one bit of the address would need to resonate with all the nodes in one level of the tree, which a single photon could not do coherently.

\paragraph{Heralded Routers}
\cite{CLEO:CDELE21} propose a ``heralded'' version of bucket-brigade QRAM. The heralded aspect is that they use a photon-atom interaction wherein the photonic state is teleported into the atom, so the device must measure the photon. If the measurement detects no photon, then the photon was lost due to, e.g., propogation error, and the QRAM access can be restarted. Crucially, this avoids the issue of heralding described in \Cref{sec:gate-teleportation}, since measuring photon loss reveals only that a photon was lost, not \emph{which} photon was lost. 

A crucial design choice is that all routers in a single layer of the routing scheme must share their measurement of the photon. Measuring at each detector would immediately reveal the routing path and decohere the address state, so instead we must replace the detectors each each router with waveguides that all converge at a single detector. Similarly, they require a method to simultaneously measure the spins of $\Omega(N)$ atoms; they propose an ancillary photon for this. Altogether this will lead to one measurement per layer of the routing tree, so the controller only needs to perform $O(\log^2 N)$ operations.

As written, their design requires a Hadamard gate on the atom in each router to perform the oblivious teleportation described above, which means $\Omega(N)$ active operations to perform a memory access. However, they also need the atoms initialized in the $\ket{+}$ state to obliviously teleport the photonic input states onto the routers' atoms. Thus, they could skip both operations by somehow simultaneously measuring the spins of all the atoms in the $\{\ket{+},\ket{-}\}$ basis. Since only one atom will not be in the $\ket{+}$ state, measuring the parity of all the spins will perform the necessary teleportation if no errors have occurred. 

If this measurement detects an even parity, then no correction must be done and all atoms are initialized for a subsequent read; however, odd parity means one atom, at an unknown position which is correlated with the address register, is in the $\ket{-}$ state and must be corrected. This is problematic: we now need a second process or device to correctly route a signal to correct that atom, without ever classically learning where the signal is going. Perhaps this could be accomplished with another photon sent through the system (which is already set to route photons to the atom in the $\ket{-}$ state).

Thus, there are some missing components for this to fully realize a passive hardware-QRACM. Moreoever, it suffers from the same ion trap QRACM issues of needing high-probability atom-photon interactions and passively trapped ions.

\cite{CLEO:CDELE21} also propose teleporting the address into the nodes of the tree before the routing process, which requires $\Theta(N)$ gates applied to the device and thus fails to realize passive hardware-QRAM.

\paragraph{Photonic CNOT.}
Another heralded QRAM is \cite{NSR:YMHC+15}, who use a post-selected photonic CNOT for their quantum router such that each routing operation has, at best, a $0.0727$ chance of failure. This means we can apply at most 9 routers sequentially before we have only a 50\% chance of success. Bucket-brigade QRAM with $n$ layers requires $\binom{n}{2}$ routing operations, meaning 9 routings is only about 5 layers (i.e., a 32 bit memory). However, because their scheme requires the control and signal photons to enter the router simultaneously, we may need several routers in each layer to simultaneously route more address qubits. Ultimately, this approach doesn't scale. 

\subsection{Error Rates}\label{sec:bb-errors}
An advantage of the bucket-brigade approach -- even as a circuit-based QRAM -- is favourable error scaling, shown by \cite{PRX:HLGJ21}. To give an intuition of their argument, in any circuit-based QRAM, there are $\Omega(N)$ gates and so we might expect each gate to need error rates less than $O(1/N)$. This is true for unary QRAM, which one can see with a simple analysis. However, with the bucket brigade, errors can get stuck. That is, if an error occurs on some node deep in the tree, this does not necessarily affect any other nodes.

The only way for errors to affect the final result is if they propagate to the root node somehow. At each node, the routing circuit only swaps from one of its two child nodes, so the error will only propagate upward if it is on the correct side. As there is only one path which will get routed all the way to the top of the tree -- the path representing the actual address in the query -- an error propagates to the root only if it occurs on this path. But only $O(\log N)$ qubits are on this path. 

Errors could propagate in other ways; for example, if the control qubit has an error \emph{and} the routing qubits have errors, then this will erroneously create a path that propagates further up. But for the error to keep propagating upward, a very precise sequence of errors must occur to create a path for it, and this is unlikely for even modestly low probability errors. Overall, if qubit undergoes an error channel with probability $p$, the error of the final readout is only $O(p\cdot \log^2(N))$~\cite{PRX:HLGJ21}.

The bucket-brigade requires all nodes to be in the ``wait'' state at the start of an access. However, errors in a previous access that change this state could persist to later accesses, depending on the device's Hamiltonian. If the probability of a persistent error is $p$ for each node in each access, then after $Q$ accesses the proportion of error-ridden nodes grows proportional to $Q^2p$. 

Generally, the per-component error rate $p$ must still be exponentially small for applications like database QAA, but for other applications, like quantum linear algebra (see \cref{subsec:linear_algebra_with_noisy_qram}), higher error rates could be manageable. 

Moreover, we propose that a bucket-brigade QRAM system could be \textit{partially passive}. Following the observation that errors in the root nodes of are more destructive, we can run full error-correction on the first $k$ layers of the bucket-brigade tree. If $k$ is selected to be a constant, or at least $o(\log N)$, then this circumvents the issues with active QRAM that we have extensively highlighted. 
The remaining $\log(N) - k$ layers would be a collection of smaller, passive QRAMs joined to the error-corrected root nodes. This encounters the same problem of interfacing noisy components with quantum data encoded into a logical state that we discuss in \Cref{sec:errors}.
This proposal requires $2^k$ error-corrected nodes, and we expect that the overall probability of error would scale something like $O(p(\log(N)-k)^2)$. This wouldn't change the asymptotic error rate of bucket-brigade, but with the same physical error rate, this method could allow larger memories to become practical than a fully passive bucket-brigade architecture would allow.

\textbf{Example.}
To show why the small bucket-brigade error rates are somewhat unique, we describe here an alternative readout circuit that \emph{does} require $O(1/N)$ gate error, as shown in \Cref{fig:bad-bb-readout}. To explain this circuit, we add an extra layer $L$ of qubits at the leaves of the routing tree (shown as boxes in \Cref{fig:bad-bb-readout}). We then add a single extra qubit prepared in the $\ket{1}$ state, and propagate it through the tree by applying the routing circuits at each node. This routes the $\ket{1}$ to the location indexed by the address regiter. From there, for every $i$ such that $T[i]=1$, we apply a CNOT from the leaf at the $i$th address to a qubit in $L$. This means this final layer of qubits is all zeros if $T[i]=0$, and has precisely one qubit in the $\ket{1}$ state otherwise. Thus, we can read out the result by computing the parity of this final layer. 

Parity is an appealing operation in the surface code, since it can be done by conjugating a multi-target CNOT (itself depth-1 in a surface code) by Hadamard gates. 

The drawback of this method is that if an error occurs anywhere on this final layer of ancilla qubits, it will propagate immediately to the output. Thus, the qubits and gates in $L$ would need to have error rates of $O(1/N)$.

\begin{figure}
\begin{subfigure}{0.45\textwidth}
\begin{tikzpicture}

\draw (5,5) -- (4,4.5);
\draw (5,5) -- (6,4.5);
\draw (4,4.5) -- (3.5,4);
\draw (4,4.5) -- (4.5,4);
\draw (6,4.5) -- (5.5,4);
\draw (6,4.5) -- (6.5,4);
\draw (3.5,4) -- (3.25,3.5);
\draw (3.5,4) -- (3.75,3.5);
\draw (4.5,4) -- (4.25,3.5);
\draw (4.5,4) -- (4.75,3.5);
\draw (5.5,4) -- (5.25,3.5);
\draw (5.5,4) -- (5.75,3.5);
\draw (6.5,4) -- (6.25,3.5);
\draw (6.5,4) -- (6.75,3.5);

\draw[fill=blue!20!white] (5,5) circle (0.1);
\draw[fill=blue!20!white] (4,4.5) circle (0.1);
\draw[fill=white] (6,4.5) circle (0.1);
\draw[fill=white] (3.5,4) circle (0.1);
\draw[fill=blue!20!white] (4.5,4) circle (0.1);
\draw[fill=white] (5.5,4) circle (0.1);
\draw[fill=white] (6.5,4) circle (0.1);
\draw[fill=white] (3.25,3.5) circle (0.1);
\draw[fill=white] (3.75,3.5) circle (0.1);
\draw[fill=blue!20!white] (4.25,3.5) circle (0.1);
\draw[fill=white] (4.75,3.5) circle (0.1);
\draw[fill=white] (5.25,3.5) circle (0.1);
\draw[fill=white] (5.75,3.5) circle (0.1);
\draw[fill=white] (6.25,3.5) circle (0.1);
\draw[fill=white] (6.75,3.5) circle (0.1);

\node at (2.5,2.5) {$T$};
\node at (3.25,2.5) {$\mathsf{0}$};
\node at (3.75,2.5) {$\mathsf{1}$};
\node at (4.25,2.5) {$\mathsf{1}$};
\node at (4.75,2.5) {$\mathsf{0}$};
\node at (5.25,2.5) {$\mathsf{0}$};
\node at (5.75,2.5) {$\mathsf{1}$};
\node at (6.25,2.5) {$\mathsf{0}$};
\node at (6.75,2.5) {$\mathsf{0}$};

\foreach \i in {0,...,7}{
	\draw (3.15+0.5*\i,2.9) rectangle (3.35+0.5*\i,3.1);
}

\draw[fill=black] (3.75,3.5) circle (0.05);
\node at (3.75, 3) {$\oplus$};
\draw (3.75,3) -- (3.75,3.5);

\draw[fill=black] (4.25,3.5) circle (0.05);
\node at (4.25, 3) {$\oplus$};
\draw (4.25,3) -- (4.25,3.5);

\draw[fill=black] (5.75,3.5) circle (0.05);
\node at (5.75, 3) {$\oplus$};
\draw (5.75,3) -- (5.75,3.5);
\end{tikzpicture}\caption{}
\end{subfigure}
\begin{subfigure}{0.45\textwidth}
\begin{tikzpicture}

\draw (5,5) -- (4,4.5);
\draw (5,5) -- (6,4.5);
\draw (4,4.5) -- (3.5,4);
\draw (4,4.5) -- (4.5,4);
\draw (6,4.5) -- (5.5,4);
\draw (6,4.5) -- (6.5,4);
\draw (3.5,4) -- (3.25,3.5);
\draw (3.5,4) -- (3.75,3.5);
\draw (4.5,4) -- (4.25,3.5);
\draw (4.5,4) -- (4.75,3.5);
\draw (5.5,4) -- (5.25,3.5);
\draw (5.5,4) -- (5.75,3.5);
\draw (6.5,4) -- (6.25,3.5);
\draw (6.5,4) -- (6.75,3.5);

\draw[fill=blue!20!white] (5,5) circle (0.1);
\draw[fill=blue!20!white] (4,4.5) circle (0.1);
\draw[fill=white] (6,4.5) circle (0.1);
\draw[fill=white] (3.5,4) circle (0.1);
\draw[fill=blue!20!white] (4.5,4) circle (0.1);
\draw[fill=white] (5.5,4) circle (0.1);
\draw[fill=white] (6.5,4) circle (0.1);
\draw[fill=white] (3.25,3.5) circle (0.1);
\draw[fill=white] (3.75,3.5) circle (0.1);
\draw[fill=blue!20!white] (4.25,3.5) circle (0.1);
\draw[fill=white] (4.75,3.5) circle (0.1);
\draw[fill=white] (5.25,3.5) circle (0.1);
\draw[fill=white] (5.75,3.5) circle (0.1);
\draw[fill=white] (6.25,3.5) circle (0.1);
\draw[fill=white] (6.75,3.5) circle (0.1);

\draw[fill=blue!20!white] (4.15,2.9) rectangle (4.35,3.1);
\foreach \i in {0,...,7}{
	\draw (3.15+0.5*\i,2.9) rectangle (3.35+0.5*\i,3.1);
	\draw[fill=black] (3.25+0.5*\i,3) circle (0.05);
	\draw (3.25+0.5*\i,3) -- (3.25+0.5*\i,2.5);
	\node at (3.25+0.5*\i,2.5) {$\oplus$};
}

\draw (3.25,2.5) -- (7.5,2.5);
\node[right] at (7.5,2.5) {Output};

\end{tikzpicture}\caption{}
\end{subfigure}
\caption{Alternative, suboptimal bucket-brigade readout forcing the readout gates to have error rates of $O(1/N)$. The blue nodes are those with $\ket{1}$ during the memory access for a specific address (in this case, $i=2$). Notice that the parity in the second step only needs to use leaves for $i$ with $T[i]=1$, but we include all of them for clarity.}\label{fig:bad-bb-readout}
\end{figure}
\section{Other Proposals}\label{sec:other-qram}
Here we summarize some other proposals which do not follow the bucket-brigade approach, and detail the ways in which they fail to provide passive hardware-QRAM, as they are currently described.

\begin{figure*}
\centering
\includegraphics{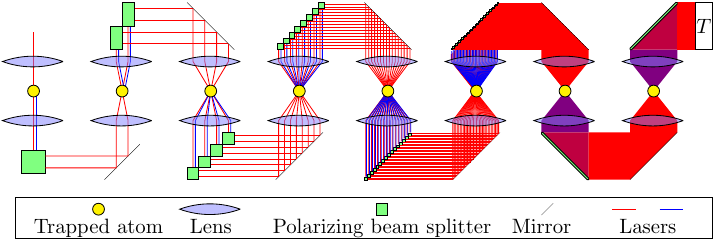}
\caption{Schematic of quantum optical fanout QRAM, almost exactly as shown in \cite{PRA:GioLloMac08arch}.}\label{fig:qopt-fanout}
\end{figure*}

\subsection{Time-bins}
Many applications and methods to store quantum states with photon echoes exist (e.g.~\cite{LPR:TACC+10}), but \cite{JMO:MoiMoi16} propose an additional addressing component to turn this storage into QRAM. To summarize their method, there is a memory cavity with $M\gg N$ atoms and a control cavity. To store data, the device would send quantum states encoded as photons into the memory cavity, where they would excite the memory atoms. The memory atoms can replay this input via a two-pulse echo technique; sending a second pulse will prompt the atoms to re-emit the photons that were sent in. Importantly, if photons are sent in one-at-a-time, they will be re-emitted in a time-reversed order. 

Thus, to read the memory, the device sends a pulse which prompts the memory cavity to re-emit its stored state. With just the memory cavity, this would re-emit all input states at classically determined times, giving us CRAQM but not QRAQM. For quantum addressing they use a second \emph{control} cavity. The control cavity contains an atom such that, when this atom is in its ground state, the memory cavity cannot emit any photons. However, when the control cavity's atom is excited, it allows the transfer of photons out of the memory cavity.

An address state $\sum_i \alpha_i \ket{i}$ must then be translated so that each address $\ket{i}$ is entangled with a photonic state $\ket{\psi_i(t)}$ which reaches the control cavity at precisely the right time to allow the memory cavity's photons to pass through. In this way, other addresses, which will not be entangled with a photon pulse at that time, will not allow any emission from the memory cavity.

The fundamental nature of the time-bin addressing means that this requires $\Omega(N)$ time for each memory access. The purported feature of this method is the minimal hardware requirements (though it still requires $\Omega(N)$ atoms in the memory cavity), but unary circuit QRAM already provides a near-optimal circuit on the $\Omega(N)$ side of the time-hardware tradeoff of QRAM circuits. 

In \cite{PRX:OKD+2022}, they use a second type of pulse to decohere a memory atom so that it will only re-emit once an identical pulse is sent to it. Using different frequencies of pulses to store different memory bits, this could reduce the total time requirement, at the expense of needing higher precision in pulse frequencies.

There is still an issue of how the time-bin photonic address state is created. As an example of how this could be created, consider the unary circuit QRAM. It classically iterates through addresses $j$, and for each address $j$, flips the state of a control qubit if the address in superposition equals $j$. Suppose that flipping the control qubit puts it into an excited state which can be stimulated to emit a photon. If we do this sequentially, then at time $t_j$, only the address state $\ket{j}$ will cause a photon to be emitted, thus the presence of photons at that time is in superposition and entangled with the state of the address. In this way, applying this stimulation instead of a CNOT turns the unary circuit QRAM into a circuit that translates an address superposition into a superposition of time-bin photons. 

Unfortunately, this has solved the addressing problem by using another QRAM circuit to create the address! While superpositions of photons at different points in time \emph{may} be a native quantum state in some architecture, for other architectures, creating such a state is essentially a quantum addressing problem, which is the difficult problem of QRAM anyway. The decohering pulses of \cite{PRX:OKD+2022} face a similar issue, where they must be addressed and sent in a coherent superposition to create a QRAM device.

As a final note, many aspects of the scheme in \cite{JMO:MoiMoi16} require $\Omega(N)$ active interventions. For example, with the photon echo technique, stimualating the memory cavity to emit its states requires a phase-flipping pulse on all $M$ atoms for each time bin, meaning $\Omega(N)$ pulses that each require $\Omega(N)$ energy. As in other proposals with photon-atom interactions, the control cavity requires a stimulating pulse for each interaction (thus, another $\Omega(N)$ stimulating pulses). 

\begin{figure}
\resizebox{\columnwidth}{!}{
\includegraphics[scale=1.0]{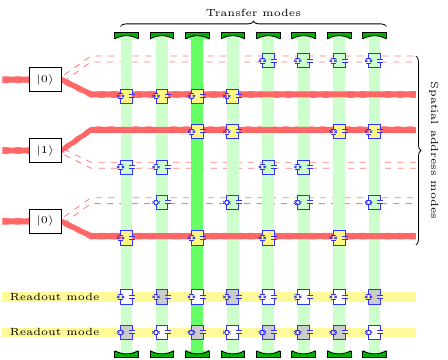}
}
\caption{Phase gate fanout QRACM, from \cite{PRA:GioLloMac08arch}. The address qubits route the address modes (red), which excites the phase-control qubits (blue transmons). The excited phase-control qubits are highlighted yellow. This will cause exactly one cavity's transfer mode to resonate (dark green), exciting the 2 memory qubits (bottom, blue) for that address. The memory qubits change the phase of the readout modes (yellow) which can be detected. The arrangement of the two types of memory qubits (shown as white and grey) represents the data: in this case, the table is $\mathsf{01010001}$. }\label{fig:phase-gate}
\end{figure}

\subsection{Quantum Optical Fanout}
\cite{PRA:GioLloMac08arch} describe an architecture using only $\log N$ trapped atoms, and instead encode the QRAM access in $N$ spatial modes of light. The basic principle is this: each bit of the address becomes one trapped atom, and then each beam interacts with the atom, and if the atom is in $\ket{1}$, the photon is polarized differently than if the atom is $\ket{0}$. Then the photon beam is sent through a half-wave plate which spatially separates different polarizations, thus doubling the number of photonic modes and ensuring that an input beam is directed according to the state of the atom.

Figure~\ref{fig:qopt-fanout} duplicates Figure 4 from~\cite{PRA:GioLloMac08arch}, with a crucial difference: we show the schematic for an 8-bit address, rather than only 3. This immediately highlights the problem: the number of spatial modes grows exponentially with the number of address bits, but they must be focused back onto a single point.

If the photons are not aimed precisely, then after interacting with the atom they will drift into another spatial mode. This causes an error. The precision required for this aiming is $O(1/N)$ for $N$ bits of memory. This seems to be the only physical mechanism for errors to propagate from one sequence of spatial modes to another, suggesting that the overall error scaling has similar robustness to bucket-brigade QRAM, but the aiming argument above means that each component requires a precision scaling as $O(1/N)$ or it will have an error. Intuitively, it doesn't matter whether errors propagate across ``paths'' of access, if all paths have an error.

Trying to guide the photons with a waveguide does not fix the problem, regardless of the miniaturization we achieve. Taking the perspective of the atom, the sources and receptors of the photons and must occupy non-overlapping sections of all radial paths out of the atom. If either source or receptor has a smaller cross-sectional area, then we can move it closer, but it occupies the same ``arc-area'' from the atom's perspective. This means the precision of the beam must be the same -- $O(1/N)$ -- no matter the absolute area of the photonic components.

Thus, in the exact formulation given, errors in this QRAM must scale infeasibly. We can wonder whether a different technology, inspired by this approach, could succeed. A core function of this approach is that the $i$th address bit interacts with $2^i$ spatial modes in one ``operation''. This is the crux: finding a mechanism to enact such a large number of interactions passively, with error scaling much less than $2^i$, seems difficult if not impossible.

\subsection{Phase Gate Fanout}\label{sec:phase-gate}
\cite{PRA:GioLloMac08arch} describe another quantum fanout system, shown in \Cref{fig:phase-gate}. This system translates each address qubit into a photon in one of two spatial modes (based on $\ket{0}$ or $\ket{1}$ in the address qubit, shown as red horizontal beams in \Cref{fig:phase-gate}). Each of these $2\lg N$ spatial address modes couples to $N$ phase-control qubits (the upper blue transmons in \Cref{fig:phase-gate}), so that the phase-control qubits change state when there is a photon in this spatial address mode.

Each phase-control qubit also couples to a another optical mode, which we'll call a transfer mode (the vertical green beams in \Cref{fig:phase-gate}), such that there are $N$ transfer modes, each coupled to $\lg N$ phase-control qubits. The coupling is such that if the phase-control qubit is in its $\ket{1}$ state, it adds a phase to the transfer mode. This two-mode coupling is similar to the recently experimentally-realized photon transistor in~\cite{NATCOMM:WBL+2022}.

The net effect of all this is that if the address is $\ket{j}$, then the $i$th transfer mode gains a phase of $e^{ij\varphi_0}$ for some constant $\varphi_0$. 

For readout, we have another collection of $N$ memory qubits (the bottom two rows of transmons in \Cref{fig:phase-gate}), each coupled to a different transfer mode, so that the $j$th memory qubit interacts with the transfer mode only when the transfer mode has a phase of $e^{ij\varphi_0}$. Thus, the memory qubit changes state depending on whether the transfer mode is in resonance. For the final readout, we use another optical readout mode (the yellow beams in \Cref{fig:phase-gate}), which interacts with all memory qubits, and picks up a phase for each one that is active. This means we will actually need $2N$ memory qubits and 2 readout modes, so that \emph{which} readout mode gains a phase will encode the binary result of the memory lookup.

There are a number of practical issues with this scheme.

The foremost issue is the narrowness of the resonance of the memory qubits. \cite{PRA:GioLloMac08arch} note that there will need to be $N$ different resonant phases, and analyze the width of a Fabry-Perot cavity's resonance and conclude that if the transmissitivity of the beamsplitters in the cavity is $O(1/N)$, the width of resonant phases is also $O(1/N)$. However, this does not address the need to fabricate each cavity, where the resonant phase must be constructed with $O(1/N)$ precision as well. For a Fabry-Perot cavity, this would mean $O(1/N)$ precision in the physical distance between the mirrors.

Second, each address mode must interact with $N$ qubits. Modelled with a Jaynes-Cummings Hamiltonian, every excited qubit means a reduced photon number in the address mode. If not all qubits are excited, we will end up with some superposition where only a small portion of the total amplitude corresponds to states where the \emph{right} qubits are excited to properly perform the memory access. Thus, we need all qubits to be excited, meaning the photon number of each address mode must be $\Omega(N)$. 

In addition to forcing us to use $\Omega(N\log N)$ energy per QRAM access, we must also provide this energy to the address modes in such a way that, for each address qubit, we do not learn \emph{which} of the two address modes took this energy. If we learned this, it effectively measures the address qubit. The quantum router of \cite{THESIS:Cadellans15} might suffice.

This QRAM may not even be passive, in fact. To excite the memory qubits, we do not want to activate a new pulse -- which requires $\Omega(N)$ interventions and $\Omega(N)$ energy -- but rather we want the memory qubits to reside in an optical cavity that maintains some electric field. Consider a back-of-the-envelope analysis of the losses of the optical cavities. If each cavity  has quality factor $Q$ and resonant frequency $f$, the fraction of energy lost per unit time is proportional to $Nf/Q$, since there are $N$ cavities. Modern ``ultra-high'' quality factors are on the order of $10^9$~\cite{NATCOMM:LJGK16}, about the same scale as the frequency (in hertz) of microwaves. Thus, to maintain the QRAM for more than a few seconds requires either enormous optical cavities (to increase the resonant frequency) or spectacular improvements in optical cavity quality factors. In either case, if these improvements do not keep pace with the increase in the number of memory bits, the \emph{total} energy loss quickly becomes enormous. 

Finally, each readout mode interacts with $N$ qubits. If our error model permits some probability $\epsilon$ of adding the wrong phase from each memory qubit (e.g., the probability from measuring after interacting with that memory qubit), then we need $\epsilon$ to be at most $O(1/N)$, or the memory readout will fail. 

Unlike \cite{PRA:GioLloMac08arch}, the arrangment of phase-control qubits in \Cref{fig:phase-gate} means that for each address, only one cavity has all $n$ phase-control qubits excited. Potentially one could engineer each phase-control qubit to contribute $1/n$ of the necessary phase to bring the transfer mode into resonance. If so, the precision of the resonator might only need to be $O(1/\log N)$. However, given the readout method, any leakage from the resonators will contribute directly to readout error (if leakage excites the incorrect memory qubit, the readout detects this). Hence, the transfer modes may still need $O(1/N)$ precision.

\subsection{Derangement Codes}
\cite{THESIS:Hann21} proposes error correction based on code switching, which we summarize and slightly tweak here. 

We start by imagining a no-op state $\ket{\text{nop}}$, such that not only is $\ket{\text{nop}}$ invariant under $U_{QRAM}$, the internal quantum state of the QRAM nodes are also invariant when we input $\ket{\text{nop}}$. As an example for bucket-brigade QRAM, we could use an ancilla flag qubit which controls the production of the input photons from the address qubits. If this flag is set to $0$, no input photons are generated and nothing enters the QRAM device. 

Then we consider the circuit shown in Figure~\ref{fig:qracm-swap-error-mit}. This has $m$ devices to access the same table $T$. It generates a superposition of $m$ different states, indexing which QRAM device to shuffle our real query into (the query is $\ket{\phi}=\sum_i \alpha_i\ket{i}\ket{0}$), while sending $\ket{\text{nop}}$ states to all the other devices. We then apply all the QRACM devices, un-shuffle the state, and measure the shuffle controls in the Hadamard basis.

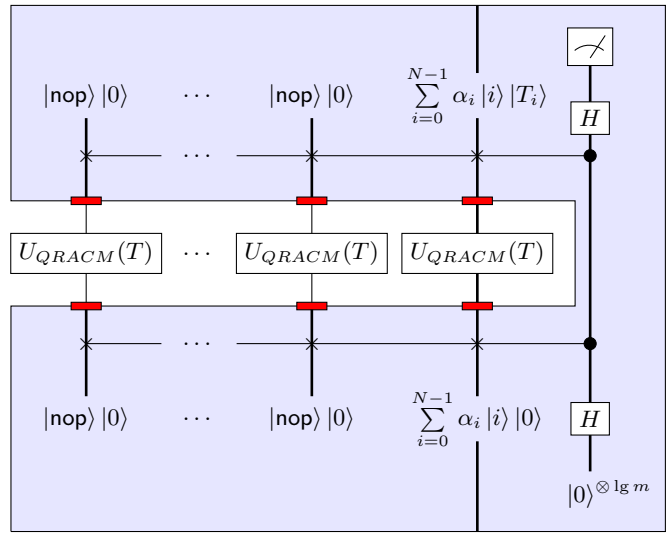
\begin{figure*}
\centering
\begin{tikzpicture}[scale=1.2]
\draw[fill = white!90!blue] (-1,-0.5) -- (-1,2.5) -- (6.5,2.5) -- (6.5,3.9) -- (-1,3.9)
	-- (-1,6.5) -- (7.7,6.5) -- (7.7,-0.5) -- (-1,-0.5);

\draw[line width=1] (6.7,0.3) -- (6.7,6);

\draw[line width=1] (5.2,-0.5) -- (5.2,0.7);
\draw[line width=1] (5.2,1.3) -- (5.2,5.0);
\draw[line width=1] (5.2,5.6) -- (5.2,6.5);

\node at (0,1) {$\ket{\mathsf{nop}}\ket{0}$};
\node at (0,2) {$\times$};
\node at (1.5,1) {$\dots$};
\node at (1.5,2) {$\dots$};
\node at (3,1) {$\ket{\mathsf{nop}}\ket{0}$};
\node at (3,2) {$\times$};
\node at (5.2,1) {$\sum\limits_{i=0}^{N-1}\alpha_i\ket{i}\ket{0}$};
\node at (5.2,2) {$\times$};
\node at (6.7,0) {$\phantom{^{\otimes m}}\ket{0}^{\otimes \lg m}$};
\draw[fill=black] (6.7,2) circle (0.08);
\draw (6.7,2) -- (2,2);
\draw (1,2) -- (0,2);
\draw[line width = 1] (0,1.3) -- (0,2.5);
\draw[line width = 1] (3,1.3) -- (3,2.5);
\draw (3,2.5) -- (3,3.9);
\draw (0,2.5) -- (0,3.9);
\draw[line width = 1] (0,3.9) -- (0,5);
\draw[line width = 1] (3,3.9) -- (3,5);

\node[draw=black,rectangle,fill=white] at (6.7,1) {$H$};
\node[draw=black,rectangle,fill=white] at (0,3.2) {$U_{QRACM}(T)$};
\node[draw=black,rectangle,fill=white] at (3,3.2) {$U_{QRACM}(T)$};
\node at (1.5,3.2) {$\dots$};
\node[draw=black,rectangle,fill=white] at (5.2,3.2) {$U_{QRACM}(T)$};
\draw[fill=red,draw=black] (-0.2,2.45) rectangle (0.2,2.55);
\draw[fill=red,draw=black] (2.8,2.45) rectangle (3.2,2.55);
\draw[fill=red,draw=black] (-0.2,3.85) rectangle (0.2,3.95);
\draw[fill=red,draw=black] (2.8,3.85) rectangle (3.2,3.95);
\draw[fill=red,draw=black] (5.0,3.85) rectangle (5.4,3.95);
\draw[fill=red,draw=black] (5.0,2.45) rectangle (5.4,2.55);

\node at (0,4.5) {$\times$};
\node at (3,4.5) {$\times$};
\node at (5.2,4.5) {$\times$};
\draw[fill=black] (6.7,4.5) circle (0.08);
\draw (0,4.5) -- (1,4.5);
\node at (1.5,4.5) {$\dots$};
\draw (2,4.5) -- (6.7,4.5);

\node[draw=black,rectangle,fill=white] at (6.7,5) {$H$};
\draw[draw=black,fill=white] (6.4,5.7) rectangle (7.0,6.2);
\draw (6.7,5.8) -- (6.9,6.1);
\draw  (6.88,5.9) arc (50:130:0.3);
\node at (0,5.3) {$\ket{\mathsf{nop}}\ket{0}$};
\node at (3,5.3) {$\ket{\mathsf{nop}}\ket{0}$};
\node at (1.5,5.3) {$\dots$};
\node at (5.2,5.3) {$\sum\limits_{i=0}^{N-1}\alpha_i\ket{i}\ket{T_i}$};

\end{tikzpicture}
\caption{QRACM with heralded error mitigation, adapted from \cite{THESIS:Hann21}. Shaded regions and thicker wires indicate error corrected qubits and operations, and red rectangles on the border indicate decoding and re-encoding. The controlled swaps indicate that, based on the $\lg m$ control qubits, the last register is swapped to one of the intermediate registers. The states shown at the top are the states if no errors occur, in which case the measurement result is all $0$.}\label{fig:qracm-swap-error-mit}
\end{figure*}

Measuring all zeros in the shuffle controls means the states that went out came back to the registers they started in, enforcing a ``sameness'' among the queries between different QRAM devices. If the errors are independent, then this suppresses errors, since it's unlikely for the same error to occur in all QRAM devices at the same time. 

\cite{THESIS:Hann21} proposes compressing this to use only one QRAM device by sequentially controlling swaps, so that only one of $m$ applications of the QRAM applies to the real query state. Since this error correction method only works for uncorrelated errors, it fails for bucket-brigade QRAM. Specifically, if an error occurs at some point in the routing nodes, that error will persist between queries (unless the routing nodes are reset, which creates an $\Omega(N)$ cost as discussed previously). Thus, on average half of all errors will be identical between queries, rendering the error suppression ineffective. 

Expanding their idea to a wide, parallel repetition increases hardware overhead by a factor of $m$, but we imagine $m\ll N$, otherwise the error suppression scheme -- which requires $\Omega(m\log N)$ gates for the shuffles alone -- will start to approach the cost of a circuit QRAM access.

The overall effect of this is a ``heralded'' QRAM access, where if we measure all zeros in the control qubits, we know that the infidelity has been suppressed to $1/m$ of the physical error rate (see~\cite[Chapter 4]{THESIS:Hann21} for proof), \emph{but} the probability $p$ of measuring all-zero is approximately equal to the original physical infidelity. This means that for an algorithm to have a reasonable probability of success, it must be limited to $O(\frac{1}{1-p})$ QRAM queries. Recalling the applications from \Cref{sec:applications}, this forbids database QAA.

Finally, the code switching process raises some issues. Imagine \Cref{fig:qracm-swap-error-mit} in the surface code. We can attempt to limit the time when a qubit is outside of the surface code to the smallest possible, but even a brief period without error correction could spell disaster for a resource-intensive algorithm. 

\subsection{Two-Level Quantum Walkers}
A recent proposal for ballistic quantum computation imagines ``walkers'' (e.g., fermions) propagating along a waveguide \cite{asaka2021quantum,asaka2022two}. Each walker is a two-level quantum system, but they also use a dual-rail encoding. A special ``roundabout gate'' can switch between these encodings by rotating $\ket{0}$ states of the walker clockwise, and $\ket{1}$ states counterclockwise. Combined with single qubit rotations on each walker (such as magnetic fields if the walkers are fermions), they construct a universal, ballistic gate set. In \cite{asaka2022two} they use these gates to construct QRAM. They route each address walker through a binary tree, with CNOT gates between each edge in the tree to perform the routing. 

If the underlying ballistic computation were built, then this would work as QRAM; however, such a technology faces all the issues we discuss in \Cref{sec:gate-qram}. In fact, their proposal can construct a ballistic version of \emph{any} circuit, which would be a powerful and revolutionary quantum technology. 

To function properly, the walkers must transition correctly at the roundabout gates, and \cite{asaka2021quantum} computes the probability of this transition in terms of the momentum of the wave packets. If the momentum differs from the target momentum by $\epsilon$, the transition probability scales as $1-\Omega(\epsilon^2)$. Keeping the momentum of a wavepacket constrained requires a broader wavepacket, and an exponentially precise momentum in their system requires an exponentially wide wavepacket. Thus, their proposed system needs improvement to accommodate exponential-sized circuits. 

However, it may be that the error rates of their QRAM access scale like the bucket-brigade, and the momentum precision only needs to be logarithmic in the memory size. In this case the quantum walkers could act as a specialized QRAM module, but more work must be done to clarify the error rates and to avoid the issues we have discussed in the interfaces between the quantum walkers and the rest of a fault-tolerant computation.

\section{Discussion}
Large scale, high quality QRAM which is cheap to use is unlikely. It would require passively-corrected quantum memory and highly complex, high-fidelity ballistic computation. We argued in \Cref{sec:gate-qram} and \Cref{sec:errors} that these are much stronger assumptions about quantum computing hardware than general large-scale, fault-tolerant quantum computing. Even if those assumptions come true, QRAM may still be difficult: building a ballistic computation is complicated (e.g. \Cref{lem:hamiltonian-cost}) and it may not interact well with the rest of a fault-tolerant quantum computation (\Cref{thm:no-qracm-distillation-informal}). Indeed, we could find no proposed hardware-QRAM technology that was not implicitly active or required unreasonably low error rates (\Cref{sec:bucket-brigade},\Cref{sec:other-qram}). 

All of that said, we cannot \emph{prove} that cheap QRAM will not be possible. However, many powerful computing devices (say, a polynomial time SAT-solver) are not provably impossible, but we do not assume they will be built. Our claim is that cheap QRAM should be considered infeasible until shown otherwise, instead of the reverse. While classical RAM appears to set a precedent for such devices, we argued in \Cref{sec:classical-ram} that this is an inappropriate comparison: QRAM is more difficult relative to other quantum computations, than classical RAM is difficult relative to classical computation.

Despite these critiques, QRAM will likely be a crucial tool for quantum algorithms. In a circuit model, QRAM has a linear gate cost (regardless of depth), but optimized circuits allow it to help with quantum chemistry and some cryptographic attacks (\Cref{sec:circuit-qram}). We encourage readers to consider circuit QRAM for their algorithms, and use one of the optimized circuits we referenced.

We leave here a summary of some open questions we raised in this work:
\begin{enumerate}
    \item 
    With the favourable error scaling of bucket-brigade QRACM, can it be more effective than other methods in a surface code, especially for cryptographic attacks?
    \item If we correct only some of the nodes in a bucket-brigade QRAM, can this suppress enough noise for practical purposes?
    \item
    Can \Cref{thm:circuit-qram-lower-bound} and \Cref{thm:no-qracm-distillation-informal} extend to account for measurement feedback, and can the latter extend to data-specific distillation processes?
    \item
    How should we model the errors and the costs of construction and use of large-scale Hamiltonians for ballistic computations?
    \item
    Are there codes for which QRAM is transversal, or can otherwise be applied directly to logical states?
    \item
    Can we fix the scaling issues of the transmon-based bucket-brigade memory of \cite{THESIS:Cadellans15}?
    \item How much noise can fault-tolerant quantum linear algebra tolerate in the QRAM access, and could bucket-brigade QRAM reach that level?
\end{enumerate}

\vspace{1em}

{ 

\centering\textbf{\small ACKNOWLEDGEMENTS}
\vspace{1em}

}

We thank John Schanck for many fruitful discussions and providing many of these arguments; Simon C. Benjamin, Bal{\'i}nt Koczor, Sam McArdle, Shouvanik Chakrabarti, Dylan Herman, Yue Sun, Patrick Rebentrost, Matteo Votto, and an anonymous reviewer for helpful comments; and Romy Minko, Ryan Mann, Oliver Brown, and Christian Majenz for valiant attempts to generalize \Cref{thm:no-qracm-distillation}. S.J. was funded by the University of Oxford Clarendon fund, and acknowledges the support of the Natural Sciences and Engineering Research Council (NSERC), funding reference number RGPIN-2024-03996. A.G.R. was funded by a JPMorgan Chase Global Technology Applied Research PhD Fellowship and the SEEQA project (EP/Y004655/1).

\vspace{1em}

{

\centering\textbf{\small DISCLAIMER}

\vspace{1em}
}

This research was funded in part by JPMorgan Chase \& Co. Any views or opinions expressed herein are solely those of the authors listed, and may differ from the views and opinions expressed by JPMorgan Chase \& Co. or its affiliates. This material is not a product of the Research Department of J.P. Morgan Securities LLC. This material should not be construed as an individual recommendation for any particular client and is not intended as a recommendation of particular securities, financial instruments or strategies for a particular client. This material does not constitute a solicitation or offer in any jurisdiction.

\bibliographystyle{plainnat}
\bibliography{qram}

\appendix

\section{Linear Algebra Proofs}

\begin{lemma}[Optimality of Quantum Eigenvalue Transform]\label{lemma:quantum_polynomial_eigenvalue_transformation_lower_bound}
Any quantum algorithm implementing a degree-$k$ polynomial of a matrix $H$ (with the polynomial defined on the eigenvalues) requires $\Omega(k)$ queries to $O_H$, in general. 
\end{lemma}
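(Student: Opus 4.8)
The plan is to prove the lower bound by a reduction from unstructured (Grover) search on $N$ elements to a single, carefully chosen instance of \Cref{problem:polynomial_eigenvalue_transform}: if some quantum algorithm could implement a degree-$k$ eigenvalue transform using $o(k)$ queries to $O_H$, then on a family of instances with $k = \Theta(\sqrt{N})$ it would solve search with $o(\sqrt{N})$ oracle queries, contradicting the $\Omega(\sqrt{N})$ query lower bound. First I would build the hard matrix. Let $O_x$ be a search oracle marking an unknown $w \in \{0,\dots,N-1\}$, let $\ket{s} = \tfrac{1}{\sqrt{N}}\sum_i \ket{i}$, and set $A := (\ket{w}\bra{w})(\ket{s}\bra{s}) = \tfrac{1}{\sqrt{N}}\ket{w}\bra{s}$. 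Since $2\ket{w}\bra{w} - I$ is the reflection implemented by one call to $O_x$ and $2\ket{s}\bra{s} - I$ is the query-free Grover diffusion, each of $\ket{w}\bra{w}$ and $\ket{s}\bra{s}$ is a block-encoding with subnormalization $1$ using $O(1)$ queries, and composing them gives an $O(1)$-query block-encoding of $A$, whose only nonzero singular value is $\tfrac{1}{\sqrt{N}}$.

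Next I would take the Hermitian dilation $H := \begin{pmatrix} 0 & A \\ A^\dagger & 0\end{pmatrix}$, which has $\lnorm{H}_2 = \tfrac{1}{\sqrt{N}} \le 1$, spectrum $\{+\tfrac{1}{\sqrt{N}}, -\tfrac{1}{\sqrt{N}}, 0\}$, and $\pm\tfrac{1}{\sqrt{N}}$-eigenvectors $\tfrac{1}{\sqrt{2}}(\ket{0}\ket{w} \pm \ket{1}\ket{s})$. Its entry oracle $O_H$ can be implemented with $O(1)$ queries to $O_x$, since each entry of $H$ depends only on whether a row or column index equals $w$; the matrix is dense, so the sparsity hypothesis of \Cref{problem:polynomial_eigenvalue_transform} is vacuous and no location oracle is needed. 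Thus any algorithm making $q$ queries to $O_H$ is simulated with $O(q)$ queries to $O_x$.

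Then I would pick $f$ to be a fixed degree-$k$ real polynomial with $\lnorm{f}_\infty \le 1$ on $[-1,1]$ that $\epsilon$-approximates $\mathrm{sign}(x)$ on $[-1,-\tfrac{1}{\sqrt{N}}] \cup [\tfrac{1}{\sqrt{N}},1]$ with $|f(0)| \le \epsilon$; standard constructions give such an $f$ of degree $k = \Theta(\sqrt{N}\log(1/\epsilon))$. Then $f(H)$ is $O(\epsilon)$-close in operator norm to $\Pi_+ - \Pi_-$, the difference of the two eigenprojectors, and a one-line computation gives $(\Pi_+ - \Pi_-)\ket{1}\ket{s} = \ket{0}\ket{w}$. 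Taking the input vector $\bm{v} := \ket{1}\ket{s}$ (unit norm, query-free to prepare), the normalized output $f(H)\bm{v}/\lnorm{f(H)\bm{v}}_2$ is $O(\epsilon)$-close to $\ket{0}\ket{w}$, so measuring its second register returns $w$ with probability $1 - O(\epsilon)$. Hence a $q$-query solver for \Cref{problem:polynomial_eigenvalue_transform} on this instance yields an $O(q)$-query search algorithm succeeding with probability $\ge 2/3$ once $\epsilon$ is a small constant, so $q = \Omega(\sqrt{N}) = \Omega(k)$; since $k$ and $N$ are coupled only by $k = \Theta(\sqrt{N})$, this produces an $\Omega(k)$-query instance for every $k$.

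I expect the main obstacle to be the first step: pinning down the $O(1)$-query block-encoding so that $A$ genuinely has singular value $\tfrac{1}{\sqrt{N}}$ (up to a controlled factor), and checking that $H$, $\bm{v}$, and the sign-approximant $f$ simultaneously satisfy all of the normalization and degree hypotheses of \Cref{problem:polynomial_eigenvalue_transform}, together with the bookkeeping that turns the $O(\epsilon)$-approximate output state into reliable recovery of $w$. The remaining ingredients — the $\Theta(\sqrt{N}\log(1/\epsilon))$ degree of a sign-function polynomial and the $\Omega(\sqrt{N})$ search lower bound — are standard and can be quoted off the shelf; as the main text notes, the statement is essentially the optimality of QSVT from \cite{gilyen2019quantum}, reproduced here only for clarity.
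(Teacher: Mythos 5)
Your proposal is correct and follows essentially the same route as the paper's proof: an $O(1)$-query block-encoding of the rank-one operator $\tfrac{1}{\sqrt{N}}\op{w}{s}$ built from the search oracle, its Hermitian dilation with eigenvalues $\pm\tfrac{1}{\sqrt{N}}$, and a degree-$\tilde{O}(\sqrt{N})$ sign-function polynomial whose implementability in $o(k)$ queries would violate the $\Omega(\sqrt{N})$ unstructured-search lower bound. The only differences are cosmetic (you assemble the block-encoding by composing the two reflections rather than via explicit sum-and-product lemmas, and you are slightly more explicit about simulating the entry oracle $O_H$ and the vacuity of the sparsity hypothesis), so no further comparison is needed.
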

\begin{proof}
We prove this lower-bound by a reduction to unstructured search, in the QSVT framework (as per \cite{gilyen2019quantum}), from which the bound immediately follows. We note that the result of \cite{gilyen2019quantum} almost immediately implies this bound, but we present the following derivation in a format to enhance the clarity of our discussion. First we formally define a block-encoding. As per \cite{gilyen2019quantum}, an $n+a$-qubit unitary matrix $U_A$ is called an $(\alpha, a, \epsilon)$-block encoding of the $n$-qubit matrix $A$ if $\lnorm{A  - \alpha (\bra{0}^{\otimes a}\otimes I_n) U_A (\ket{0}^{\otimes a}\otimes I_n)}_2 \le \epsilon$.

Given an unstructured search function $f(x) = 0$ for all $x\neq m$, and $f(m) = 1$ (with m being unique), specified by an $O_f$, oracle defined as $O_f\ket{x} = (-1)^{f(x) \oplus 1}\ket{x}$ (where we negate the sign for ease of analysis), we can construct a $(1, 9, 0)$-block-encoding  $U_A$ of $A := \frac{1}{\sqrt{N}}\op{m}{+^n}$, where $\ket{+^n} = H^{\otimes n}\ket{0}$. We give a very inefficient block encoding (in terms of the number of ancillas) for ease of exposition.
First, we construct a $(1, 4, 0)$-block-encoding of $H^{\otimes n}\op{0}{0}H^{\otimes n}$. This can first be done by constructing a $(1, 2, 0)$-block-encoding of $\op{0}{0} = \frac{1}{2}(I_n + (2\op{0}{0}  - I_n))$ (which can be done by applying the standard sum of block encoding lemma \cite{gilyen2019quantum}) of an $n$ qubit identity matrix with the standard $2\op{0}{0}  - I_n$ unitary from Grover search. We can then use the product of block encodings lemma with $H^{\otimes n}$, $\op{0}{0}$, and $H^{\otimes n}$ again to get a $(1, 4, 0)$-block encoding of $\op{+^n}{+^n}$. We can then use the product of block encoding with $I_1\otimes O_f$ and $\op{+^n}{+^n}$ to get a $(1, 5, 0)$-block encoding of $O_f\op{+^n}{+^n}$. We can then use the sum of block encodings with $\op{+^n}{+^n}$ and $O_f\op{+^n}{+^n}$ to get a $(1, 9, 0)$-block-encoding of $\frac{1}{2}((O_f + I)\op{+^n}{+^n}) = \frac{1}{\sqrt{N}}\op{m}{+^n}$, as desired.

We can apply the polynomial approximation of the sign function as per \cite{gilyen2019quantum} to our block encoding unitary $U_A$, giving a constant probability of success of measuring the marked state. They give the degree of this (odd degree) polynomial as $k = O(\log(1/\epsilon)/\delta)$, where $\epsilon$ is an upper-bound on the maximum deviation of the polynomial approximation to the sign function in the $x \in [-2, 2] \backslash (-\delta, \delta)$. If we set $\delta = \frac{1}{\sqrt{N}}$, then the sign function maps our non-zero singular value to $1$, while leaving the zero-valued singular values unchanged. As a result, we can measure the ancilla register in the $\ket{0}$ state with $\propto 1 - \epsilon$ probability, learning the result of the unstructured search problem. If it were possible to implement this degree $\tilde{O}(1/\delta)$ degree-polynomial with fewer than $\Omega(1/\delta)$ queries to $U_A$, we could solve unstructured search in general with fewer than $\Omega(\sqrt{N})$ queries to the unstructured search oracle, violating well-known lower-bounds for unstructured search. As a consequence, any general algorithm implementing a SVT of a degree-$k$ polynomial of some matrix $H$ must use at least $\Omega(k)$ queries to that matrix. 

A polynomial implementing the sign function applied to the eigenvalues of a block encoding of $\tilde{A}:= \begin{pmatrix} 0 & A\\A^{\dagger} & 0 \end{pmatrix}$ would  also solve unstructured search in the same way (noting that the non-zero eigenvalues of $\tilde{A}$ correspond to $\pm \frac{1}{\sqrt{N}}$ and have associated eigenvectors $\frac{1}{\sqrt{2}}(\ket{0}_1\ket{m} \pm \ket{1}_1\ket{+^n})$. Thus, applying $\text{Sign}(\tilde{A})$ leaves the $0$ eigenvalues unchanged, and maps the $\pm\frac{1}{\sqrt{N}}$ eigenvalues to $\pm 1$ (within $\epsilon$ distance). Consequently, $\text{Sign}(\tilde{A}) \approx \begin{pmatrix}0 & \op{m}{+^n}\\ \op{+^n}{m} & 0 \end{pmatrix}$. Applying $\text{Sign}(\tilde{A})$ to an initial state $\ket{1}\ket{+^n}$ then gives the solution to the unstructured search problem, and so the same bound holds for \cref{problem:polynomial_eigenvalue_transform}.
\end{proof}

\begin{lemma}\label{lemma:classical_matrix_vector_multiplication}
Given $P$ classical processors sharing a common memory of access time $O(1)$, a $d$-sparse matrix $A \in \mathbb{C}^{N\times N}$ and a vector $\bm{v} \in \mathbb{C}^N$, we can compute $A\bm{v}$ with $\tilde{O}(Nd/P)$ time complexity.
\end{lemma}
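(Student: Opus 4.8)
The plan is to reduce the computation of $A\bm{v}$ to $Nd$ independent scalar multiplications followed by a collection of $N$ segmented sums, and to balance both phases across the $P$ processors. Since $A$ is $d$-sparse, the location oracle lets any processor retrieve, for a given row $i$ and index $k\in\{1,\dots,d\}$, the column $c(i,k)$ of the $k$-th nonzero entry of row $i$ in $O(1)$ time, after which $O_A$ returns $A_{i,c(i,k)}$ and the shared memory returns $v_{c(i,k)}$, also in $O(1)$ time. So the ``product phase'' amounts to computing, in parallel, the $Nd$ scalars $p_{i,k}:=A_{i,c(i,k)}\,v_{c(i,k)}$, and the ``reduction phase'' amounts to forming, for each $i$, the sum $(A\bm v)_i=\sum_{k=1}^{d}p_{i,k}$ (treating absent entries as zero). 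We may assume $P=\tilde{O}(Nd)$, since extra processors are useless.

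First I would handle the regime $P\le N$, which needs no communication. Assign each processor a contiguous block of $\lceil N/P\rceil$ output rows; the processor loops over its rows and, for each, accumulates the at most $d$ products $p_{i,k}$ sequentially into $(A\bm v)_i$. Each processor performs $O\!\left(\lceil N/P\rceil d\right)=O(Nd/P)$ arithmetic operations and $O(1)$-time accesses, and since $P\le N$ each output coordinate is owned by exactly one processor, so there is no cross-processor interaction. This already gives $O(Nd/P)$ with no logarithmic overhead.

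Next I would handle $P>N$, where several processors must cooperate on a single row. Partition the processors into $N$ teams of size $s:=\lfloor P/N\rfloor\ge 1$ (discarding $O(N)$ leftover processors), one team per output coordinate. Within the team for row $i$, split the $d$ indices $k$ into $s$ blocks of size $O(\lceil d/s\rceil)=O(Nd/P)$; each team member computes the partial sum of its block, and then the $s$ partial sums are combined by a binary-tree reduction in $O(\log s)=O(\log P)$ rounds through the shared memory. The per-processor cost is $O(Nd/P)+O(\log P)=\tilde O(Nd/P)$. Concatenating the two regimes, and noting $\log P=O(\log(Nd))$, yields the claimed $\tilde O(Nd/P)$ bound for all $P$. (Equivalently, one may avoid the case split entirely by running a standard parallel segmented prefix-sum over the $Nd$ products keyed by row index, which costs $O(\log(Nd)+Nd/P)$ on a CREW PRAM; I would mention this as the clean alternative.)

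The main obstacle I expect is bookkeeping around the machine model and load balancing rather than anything deep. One must make explicit that concurrent reads of a common entry $v_j$ (and of the oracles $O_A$ and the location oracle) by many processors are permitted --- the CREW-style assumption implicit in ``common memory of access time $O(1)$'' --- or else absorb an extra $O(\log P)$ broadcast factor, harmless for $\tilde O$. One also has to be mildly careful when $s\nmid d$ or $N\nmid P$, but rounding each block size up changes things by only a constant factor, exactly as in the $P\le N$ estimate where $\lceil N/P\rceil d\le 2Nd/P$. Everything else is routine, and the logarithmic slack in $\tilde O$ is precisely what absorbs the reduction depth.
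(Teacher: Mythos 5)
Your proof is correct and follows essentially the same route as the paper's: for $P\le N$ each processor owns a block of rows and accumulates its at most $d$ products sequentially, and for $P>N$ a team of processors shares a row and combines partial sums by a logarithmic-depth tree reduction. The extra care you take with the CREW assumption and rounding is fine but not needed beyond what the paper states.
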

\begin{proof}
This is a standard result; see \cite{BOOK:GGKK2003}.
Assume we represent $A$ as $N$ lists of elements $(j,A_{ij})$, where $A_{ij}$ are the non-zero components in row $i$, and $\bm{v}$ as an $N$-element array. With fewer than $N$ processors, each processor will iterate through a row of $A$. For an entry $(j,A_{ij})$, it will look up the $j$th element of $\bm{v}$, multiply it by $A_{ij}$ and add that to a running total. This produces one element of $A\bm{v}$. Each element can be computed independently, so this parallelizes perfectly. With more than $N$ processors, rows will have multiple processors. The processors for row $i$ can divide the entries of $A$ in that row and compute sub-totals. To compute the full sum, they add their sub-totals together in a tree structure; this tree has depth $O(\log(P/N))$. 
\end{proof}

Following the techniques of \cite{WEB:Bernstein2001}, we use sorting to build matrix multiplication.

\begin{lemma}\label{lemma:matrix_multiplication_via_hypercube_sort}
    Given $P$ parallel processors forming a sorting network that can sort in time $\mathsf{S}$, multiplying an $N$-dimensional vector $\bm{v}$ by a $d$-sparse $N\times N$ matrix $A$ takes time $O(\mathsf{S}+Nd/P+\lg(d))$.
\end{lemma}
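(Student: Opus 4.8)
```latex
\textbf{Proof proposal.} The plan is to reduce sparse matrix-vector multiplication to a constant number of sorting passes plus perfectly-parallelizable local work, following the template of \cite{WEB:Bernstein2001} that was already used in \Cref{lemma:matrix_multiplication_via_hypercube_sort}'s statement. First I would set up the data layout: store the matrix $A$ as a list of $Nd$ triples $(i,j,A_{ij})$ for the non-zero entries, and the vector $\bm{v}$ as $N$ pairs $(j,v_j)$, all distributed across the $P$ processors so each holds $O((Nd)/P)$ items. The key observation is that to compute $(A\bm{v})_i = \sum_j A_{ij} v_j$, each matrix entry $(i,j,A_{ij})$ needs exactly one piece of data it does not already have: the value $v_j$.

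The main steps would be: (1) \emph{Distribute $\bm{v}$ to the matrix entries.} Sort the combined list of matrix triples $(i,j,A_{ij})$ and vector pairs $(j,v_j)$ by the second coordinate $j$ (breaking ties so each $v_j$ lands just before the block of matrix entries in column $j$). This costs one sort, time $\mathsf{S}$. Now each $v_j$ is adjacent to the (up to $d$ per row, but up to $N$ total) entries needing it; a cascade of local copy operations propagates $v_j$ forward through its column block. Since a column can have up to $N$ entries, a naive sequential cascade is too slow, but a segmented-scan / pointer-jumping broadcast within each segment, or simply re-sorting so that copies fan out logarithmically, handles this in time $O(\mathsf{S} + (Nd)/P + \lg N)$ — and since $d \le N$ we have $\lg N = O(\lg(Nd))$, absorbed into the $\tilde{O}$ or the stated $\lg d$ term after a more careful accounting (each of the $Nd$ products is formed after at most $\lg d$ fan-out steps if we instead index by $(i,j)$ and exploit that each processor-local segment has bounded size). (2) \emph{Multiply locally.} Each processor forms the products $A_{ij} v_j$ for its $O((Nd)/P)$ triples: time $O((Nd)/P)$, no communication. (3) \emph{Sum by row.} Sort the resulting list of $(i, A_{ij}v_j)$ pairs by the first coordinate $i$: one more sort, time $\mathsf{S}$. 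Within each now-contiguous block of $\le d$ summands for row $i$, compute the sum by a local tree reduction (or segmented scan), time $O((Nd)/P + \lg d)$, producing $(A\bm{v})_i$. Optionally sort once more to return the output vector to a canonical layout.

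Adding the costs gives $O(\mathsf{S} + (Nd)/P + \lg d)$ as claimed, where $\mathsf{S}$ accounts for a constant number ($3$ or $4$) of sorting passes.

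The step I expect to be the main obstacle is the broadcast in step (1): when a single column $j$ contains many matrix entries (up to $N$, even though each \emph{row} has only $d$), the value $v_j$ must be replicated to all of them, and doing this within the sorting-network model without blowing the time past $\mathsf{S} + \lg d$ requires either (a) arguing that high-degree columns can be split into $O(\lg d)$-deep fan-out trees interleaved with the sort, or (b) observing that the total number of copy operations is still $O(Nd)$ so it parallelizes to $O((Nd)/P)$ plus a $\lg d$-depth local reduction per segment once the segments are distributed one-per-processor-group. The cleanest route is probably to note that after the sort by $j$, consecutive-duplicate broadcasting of $O(Nd)$ items across a sorting network is itself a standard primitive (a segmented copy-scan) computable in time $O(\mathsf{S})$, so the $\lg d$ in the bound comes purely from the final within-row summation where each row has at most $d$ terms. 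Getting the bookkeeping of these segmented-scan primitives to line up with the abstract "sorting network in time $\mathsf{S}$" hypothesis — rather than assuming a specific topology — is the delicate part; everything else is routine.
```
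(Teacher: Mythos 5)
Your proposal follows the same sort-based template as the paper's proof (both are in the spirit of \cite{WEB:Bernstein2001}): sort to align vector entries with matrix entries by column, multiply locally, re-sort by row, and reduce each row in depth $O(\lg d)$. The one place you diverge -- and the step you correctly flag as the main obstacle -- is the distribution of $v_j$ to the matrix entries in column $j$. Two remarks. First, your worry that a column may contain up to $N$ entries is moot under the paper's definition of $d$-sparse, which bounds the non-zeros in any row \emph{or column} by $d$ (the paper's proof explicitly invokes the column bound); so even your naive fan-out is only $O(\lg d)$ deep. Second, the paper sidesteps the broadcast entirely with a cleaner device: it creates $d$ indexed copies of $\bm{v}$ up front and tags the $k$-th non-zero entry of column $j$ with the pair $(k,j)$ via an injective enumeration $f_j$, so that after sorting on $(j,k)$ each matrix entry sits adjacent to its own private copy of $v_j$ and the pairing is one-to-one -- no segmented copy-scan or fan-out tree is needed, and the argument stays entirely within the ``sorting network in time $\mathsf{S}$'' hypothesis rather than assuming an additional scan primitive. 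Your route works, but it either imports a segmented-scan primitive whose cost you would still have to justify in the abstract sorting-network model, or it needs the column-sparsity observation to bound the fan-out depth; the paper's $d$-copies trick buys you a proof where the sort itself does all the routing.
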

\begin{proof}
First the processors create $d$ copies of $\bm{v}$. Each processor will store a block of $A$ or $\bm{v}$ in local memory, so by treating $A$ as a block matrix, we suppose each processor has one entry $A_{ij}$ of $A$ or $\bm{v}_j$ of $\bm{v}$. We assume there is an efficiently-computable function $f_j$ for each column $j$, which injectively maps indices $k$ from $1$ to $d$ to the non-zero elements of that column. Each processor creates a tuple $(k,j,0,A_{f_j(k)j})$ for components of $A$, and $(k,j,\bm{v}_j)$ for components of $\bm{v}$ (each copy of $\bm{v}$ has a distinct value of $k$). They will then sort this data by the second component, then the first. Because each column of $A$ contains at most $d$ non-zero entries, this ensures each processor with a tuple $(k,j,0,A_{f_j(k)j})$ is adjacent to a tuple $(k,j,\bm{v}_j)$, so these two processors can compute $A_{f_j(k)j}\bm{v}_j$ and store the result in the tuple for $A$. The processors can then discard the tuples for $\bm{v}$, and change the first component from $k$ to $f_j(k)$. Then they re-sort so that the rows of $A$ are physically close (e.g., in a square of length $\sqrt{d}$ in 2 dimensions). They then add the entries of $A_{ij}\bm{v}_j$ for each row. Cascading these together will take at least $\lg(d)$ sequential additions, but it will be asymptotically less than the sorts of the full matrix and vector. This produces the $i^{th}$ element of $A\bm{v}$. Another sort can send these to whatever memory location represents the output.
\end{proof}

\begin{lemma}[Dense Matrix-Vector Multiplication with Local Memory]\label{lemma:dense_matrix_vector_multiplication_with_local_memory}
A set of $O(N^2 \log(N))$ processors arranged in a $3D$ grid and with nearly local connectivity, leading to a total wire-length of $\tilde{O}(N^2)$, can implement a matrix-vector multiplication with an $N\times N$ matrix $A$ and an $N$-dimensional vector $\bm{v}$ with $O(\log (N))$ complexity. 
\end{lemma}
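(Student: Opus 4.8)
The plan is to use a mesh-of-trees layout. First I would place the $N^2$ entries $A_{ij}$ one per processor in an $N\times N$ array, with the vector entry $v_j$ loaded into the processor at the head of column $j$. The computation then splits into three phases: (i) a \emph{broadcast} phase, in which each $v_j$ is fanned out down the $N$ processors of column $j$ along a binary tree of depth $\lceil \lg N\rceil$; (ii) a \emph{multiply} phase, in which processor $(i,j)$ computes the product $A_{ij}v_j$ locally in $O(1)$ time; and (iii) a \emph{reduce} phase, in which, for each row $i$, the $N$ partial products $A_{ij}v_j$ are summed along a binary adder tree of depth $\lceil\lg N\rceil$, yielding $(A\bm v)_i$. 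Each phase has depth $O(\log N)$, so the total time complexity is $O(\log N)$. If a normalized output $A\bm v/\lnorm{A\bm v}_2$ is wanted, one computes $\lnorm{A\bm v}_2$ by a further $O(\log N)$-depth reduction and then divides each coordinate in $O(1)$ depth; this is dominated by the multiplication.

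Next I would account for the hardware. The broadcast tree of a column uses $O(N)$ internal fanout processors, and the adder tree of a row uses $O(N)$ internal processors; summed over the $N$ columns and $N$ rows this gives $O(N^2)$ tree-node processors on top of the $N^2$ base processors. To embed this mesh of trees into $3$-dimensional Euclidean space with only nearly-local connections, I would keep the base array together with the $N$ row-trees in a plane and route the $N$ column-trees through the third dimension, so that the two families of trees do not interfere. A standard recursive (H-tree-style) layout of a binary tree over $N$ collinear leaves has total edge length $O(N\log N)$, so the $N$ column-trees contribute $O(N^2\log N)$ to the wire length and likewise the row-trees, giving a total of $\tilde O(N^2)$. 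Spreading the layout out enough for the embedding to be collision-free and for every individual wire to be short costs at most a further logarithmic factor in the number of processors, which is where the stated $O(N^2\log N)$ bound comes from.

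The correctness check is routine: after the broadcast phase processor $(i,j)$ holds $(A_{ij},v_j)$; after the multiply phase it holds $A_{ij}v_j$; and the row-$i$ adder tree outputs $\sum_{j=0}^{N-1} A_{ij}v_j=(A\bm v)_i$. The main obstacle is the geometric embedding step rather than anything about linear algebra: one must verify that the $N$ row-trees and $N$ column-trees can be realized \emph{simultaneously} in three dimensions with wires whose total length is $\tilde O(N^2)$ and whose individual lengths are small enough to count as ``nearly local''. This is essentially a VLSI-layout argument, and it is precisely where the extra logarithmic factors in both the processor count and the wire-length budget are spent; everything else (the depths, the per-processor work, the correctness) is immediate.
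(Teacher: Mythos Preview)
Your proposal is correct and takes essentially the same approach as the paper: broadcast each $v_j$ through a depth-$O(\log N)$ tree to all processors in column $j$, multiply locally, then reduce along each row through a depth-$O(\log N)$ adder tree, with the 3D embedding providing the $\tilde O(N^2)$ wire budget. The paper phrases the broadcast and reduction as ``stacks'' of layers of sizes $2\times N,\,4\times N,\dots,N\times N$ (and the reverse for the sum), which is just your column/row binary trees unrolled level by level into horizontal slabs rather than routed as an explicit mesh-of-trees; the computational structure and the resource counts match.
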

\begin{proof}
 Consider a grid of $N\times N$ processors, each with local memory, and \textit{no} connections to their nearest neighbors. 
 Assume that each element in $A$ is assigned to a corresponding processor in the grid, i.e. the processor in the $i^{th}$ row and $j^{th}$ column stores element $A_{ij}$.
 If our initial vector $\bm{v}$ is stored in a set of $N$ data cells, using a stack of $\log(N)$ local processors, of dimension $2 \times N, 4\times N, ..., N\times N$ we can recursively spread the elements in the vector to construct an $N\times N$ matrix $V$ such that $V_{ij} = \bm{v}_j$. The first layer in this stack has 2 wires per layer, each of length $N / 2$, for a total wire-length of $O(N^2)$. The final layer in this stack, the one that connects element $V_{ij}$ to element $A_{ij}$ in the main $N\times N$ grid, has $O(N^2)$ wires, each of length $O(1)$ for a total wire length of $O(N^2)$. The middle layers in the stack similarly have a total wire length of $O(N^2)$ (summing progressively more shorter wires). Thus, the main $N\times N$ processor grid can store $A_{ij}\bm{v}_j$ with a total of $O(N^2 \log N)$ ancillary local processors with a total wire length of $O(N^2 \log N)$. We can then repeat this process of spreading out the elements of $\bm{v}$ in reverse, adding another stack of $\log N$ processors of size $N\times N, N \times N/2, ..., N\times 2, N\times 1$. We then add elements in adjacent cells, sending them to the cell in the layer above,  building up the sum of products of the appropriate matrix and vector element. This produces the output vector $A\bm{v}$, where the $i^{th}$ row stores $\sum_{j=1}^N A_{ij}\bm{v}_j$, with a total wire-length of $O(N^2 \log N)$, and with a total of $O(N^2 \log N)$ processors. 
\end{proof}

\section{Proof of \Cref{lem:num-operations}}
We start by defining the set of circuits $\mathcal{C}(W,D,G,g,k)$ to be all circuits on $W$ qubits, of circuit depth $D$, using $G$ gates from a set of size $g$ with fanin at most $k$. Here we assume all gates have depth 1 and we count circuits based on the arrangement of gates, not by function. Define a ``circuit'' as a function that takes as input a pair of (qubit,time step) and returns as output the particular gate applied to that qubit at that time step.

If two circuits have a different effect on quantum states, they must use different gates, and thus will be distinct circuits by this definition. This definition overcounts circuits, since equivalent circuits are counted twice, but we want to upper bound the number of circuits so this is fine.

\begin{proposition}\label{prop:high-fanin-reduce-many-single-fanin}
Under this notation, $\vert \mathcal{C}(W,D,G,g,k)\vert \leq \vert \mathcal{C}(W,D,\min\{kG,DW\},Wg,1)\vert$.
\end{proposition}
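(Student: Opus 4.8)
The plan is to prove the inequality by exhibiting an \emph{injection}
$\phi:\mathcal{C}(W,D,G,g,k)\hookrightarrow \mathcal{C}(W,D,\min\{kG,DW\},Wg,1)$,
which immediately gives the cardinality bound. Fix the natural total order $1,\dots,W$ on the qubits. Given a circuit $C$, look at each of its gates $\gamma$, applied at some time step $t$ to a qubit set $S=\{q_1<q_2<\dots<q_m\}$ with $m\le k$ and carrying a label $\ell(\gamma)\in\{1,\dots,g\}$. For each $i$ I would place, at the slot $(q_i,t)$, the fanin-$1$ gate whose label is the pair $\bigl(\ell(\gamma),\,q_{(i\bmod m)+1}\bigr)$ --- a copy of the original label together with a pointer to the next qubit of $S$ in cyclic order (when $m=1$ this is simply a self-pointer $(\ell(\gamma),q_1)$). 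All remaining slots are left empty. Since at each time step a given qubit is acted on by at most one gate of $C$, this is well defined, and $\phi(C)$ is a fanin-$1$ circuit on $W$ qubits of depth $D$ whose gates are drawn from the set $\{1,\dots,g\}\times\{1,\dots,W\}$, which has exactly $Wg$ elements. (The only slightly delicate point in the gate-set bookkeeping is to get $Wg$ rather than something like $(W{+}1)g$: using a cyclic ``next'' pointer, so that a singleton support points to itself, avoids the need for a separate end-of-list symbol.)

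Next I would verify that $\phi(C)$ really lands in the claimed set. The number of gates of $\phi(C)$ is just the number of occupied slots, namely $\sum_{\gamma}|S_\gamma|\le kG$; it is also at most the total number of slots, $DW$. Hence $\phi(C)\in\mathcal{C}(W,D,\min\{kG,DW\},Wg,1)$, where ``uses $\min\{kG,DW\}$ gates'' is read as ``uses at most that many'', which is all one needs for an upper bound on a cardinality. The depth is unchanged because the time steps are untouched.

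It then remains to show $\phi$ is injective, and I would do this by writing down an explicit left inverse $\psi$. Given $\phi(C)$ and a time step $t$, restrict to the occupied slots at that step; the pointers define a map from the set of active qubits to itself. By construction this map is a disjoint union of cycles (self-loops included), with one cycle per gate of $C$ active at time $t$; the qubits of a cycle are exactly the support $S_\gamma$ of the corresponding gate, and every slot on that cycle carries the same first coordinate $\ell(\gamma)$. So, cycle by cycle, $\psi$ recovers each gate's support and label, hence recovers $C$, giving $\psi\circ\phi=\mathrm{id}$ on $\mathcal{C}(W,D,G,g,k)$ and therefore injectivity.

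The only genuine content is that injectivity step, i.e.\ checking that the next-pointers from distinct simultaneous gates of $C$ cannot merge or cross so as to corrupt the cycle decomposition. This is forced by two facts: the supports of distinct gates acting at the same time step are disjoint, so the functional graph at time $t$ splits cleanly as a disjoint union indexed by those gates; and within a single gate the ``next in cyclic order'' rule produces exactly one $m$-cycle through precisely $S_\gamma$, with no ambiguity about where it starts or which qubits it contains. Everything else --- the depth being preserved, the gate-set size being exactly $Wg$, the gate count being at most $\min\{kG,DW\}$ --- is immediate from the definition of $\phi$, so I expect the write-up to be short.
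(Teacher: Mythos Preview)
Your proof is correct and follows the same high-level strategy as the paper --- build an injection by replacing each multi-qubit gate with single-qubit labels, one per occupied slot --- but with a genuinely different encoding. The paper tags slot $(q,t)$ with a triple (gate type, input-position $j$, copy-number $i$ among gates of that type at time $t$), bounding the label set by $Wg$ via $\ell\cdot\lfloor W/\ell\rfloor\le W$; you instead tag with (gate type, cyclic next-qubit) and recover each gate's support by cycle decomposition of the pointer graph. Your version is arguably cleaner, since it avoids the paper's implicit choice of an enumeration of the simultaneous copies of each gate type.

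One caveat worth flagging: because your next-pointer is defined from the \emph{natural} order on $S$, your inverse $\psi$ recovers only the support set $S_\gamma$ and the label $\ell(\gamma)$, not which qubit is wired to which input of a multi-qubit gate. Under the paper's literal definition of a circuit (a map from $(\text{qubit},\text{time})$ to gate label) that information is not recorded anyway, so your injection is fine as stated. But the paper's encoding, via the position index $j$, does preserve it --- this is exactly what their CNOT ``control vs.\ target'' remark is about. If one instead works in the richer model where input-wiring matters, your map would send a CNOT with control $1$, target $2$ and a CNOT with control $2$, target $1$ to the same image, and injectivity would fail there.
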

\begin{proof}
Let $U$ be an $\ell$-qubit gate in the gate set of $\mathcal{C}(W,D,g,k)$. We can define $\ell$ distinct single-qubit gates $U_1,\dots, U_\ell$ from $U$, expanding the total number of gates from $g$ to $gk$, and and then further define $W/\ell$ multiples of each of these single qubit gates, i.e., $U_{1}^{(1)},\dots, U_{\ell}^{(1)},\dots, U_{1}^{(m)},\dots, U_{\ell}^{(m)}$ where $m=\floor{W/\ell}$. This gives at most $W$ single-qubit gates from $U$, so the total number of single-qubit gates defined in this way is at most $Wg$.

We now define an injective function from $\mathcal{C}(W,D,G,g,k)$ to $\mathcal{C}(W,D,\min\{kG,DW\},Wg,1)$. For each circuit, at each timestep we take all $\ell$-qubit gates $U^{(1)},\dots U^{(m)}$ (we know there are fewer tham $m=\floor{W/\ell}$ gates of fanin $\ell$ in one timestep, because there aren't enough qubits for more!) and decompose each into $U_1^{(i)},\dots, U_\ell^{(i)}$, such that the qubit mapped to the $j$th input of the $i$th gate is given gate $U_j^{(i)}$. This increases the total number of gates by at most a factor of $k$, and cannot increase the number of gates above $DW$.

For example, a CNOT would split into two gates, a ``target'' gate and a ``control'' gate, and if these are reversed, the circuit is different (by our counting).

To show injectivity, suppose two circuits $C_1$ and $C_2$ in $\mathcal{C}(W,D,G,g,k)$ map to the same circuit in $\mathcal{C}(W,D,kG,gW,1)$. Let $U$ be any gate in $C_1$; suppose it applies to $\ell$ qubits $q_1,\dots, q_\ell$ in time step $t$. This means there is some $i$ such that the image circuit has $U_1^{(i)},\dots, U_\ell^{(i)}$ applied to $q_1,\dots, q_\ell$ in time step $t$. By construction of our function, this means there must be some gate in $C_2$ of the same type as $G$ which is also applied to those same qubits, in the same order, in time step $t$. Repeating for all gates shows that $C_1=C_2$.
\end{proof}

\begin{proposition}\label{prop:simple-circuit-count}
In the same notation, $\vert \mathcal{C}(W,D,G,g,1)\vert \leq \binom{DW}{G} g^G$.
\end{proposition}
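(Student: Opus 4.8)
The plan is to show that every circuit in $\mathcal{C}(W,D,G,g,1)$ is uniquely specified by a small amount of combinatorial data and then count the possibilities for that data. Recall that by the paper's convention a ``circuit'' is a function sending each (qubit, time step) pair to the gate applied at that location (with ``no gate'' as one of the allowed values). On $W$ qubits with depth $D$ there are exactly $DW$ such slots. First I would observe that a circuit using exactly $G$ single-qubit gates is determined by two pieces of data: (i) the set $S$ of slots at which a gate is placed, which satisfies $|S| = G$, and (ii) for each slot in $S$, a choice of which of the $g$ available gates occupies it. Two circuits that agree on both pieces of data agree as functions, hence are the same circuit, so the map from a circuit to the pair $(S,\text{assignment})$ is injective.

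Next I would simply count the targets of this injection. There are $\binom{DW}{G}$ ways to choose a $G$-element subset $S \subseteq \{1,\dots,W\}\times\{1,\dots,D\}$, and for each such $S$ there are at most $g^{G}$ ways to assign one of the $g$ gates to each of the $G$ chosen slots. Composing with the injectivity of the previous step gives $|\mathcal{C}(W,D,G,g,1)| \le \binom{DW}{G}\, g^{G}$, which is the claim. The use of an inequality rather than an equality is harmless and in fact expected: not every choice of $(S,\text{assignment})$ yields a legal circuit of depth exactly $D$ (some layers might end up empty, etc.), but we only want an upper bound.

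There is no genuine obstacle here — this is a direct counting argument — but the one point that warrants a sentence of care is the injectivity step, namely that under the paper's definitions a ``circuit'' literally \emph{is} the slot-to-gate assignment function, so distinct assignments never collapse to the same circuit. This is precisely the deliberate overcounting already flagged in the discussion preceding \Cref{prop:high-fanin-reduce-many-single-fanin} (circuits equal as unitaries are counted separately), which is exactly what we want since the lemma concerns the number of syntactic circuits.
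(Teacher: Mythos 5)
Your proof is correct and follows the same route as the paper: choose $G$ of the $DW$ (qubit, time step) slots and assign one of the $g$ gates to each, which the paper states in two sentences. Your added remarks on injectivity and deliberate overcounting are consistent with the paper's conventions and do not change the argument.
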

\begin{proof}
We can straightforwardly count: There are $DW$ possible ``slots'' for each gate, depending on which qubit and which time step. We choose $G$ slots, and for each one, we select one of the $g$ gates.
\end{proof}

\begin{lemma}\label{lem:app:circuit-counts}
 $\vert \mathcal{C}(W,D,G,g,k)\vert \leq \binom{DW}{\min\{kG,DW\}} (Wg)^G$.
\end{lemma}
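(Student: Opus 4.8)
The plan is to obtain this bound by chaining the two preceding propositions, with essentially no extra work. First I would invoke \Cref{prop:high-fanin-reduce-many-single-fanin} to pass from the bounded-fanin model to the single-fanin model: it gives $\vert \mathcal{C}(W,D,G,g,k)\vert \le \vert \mathcal{C}(W,D,\min\{kG,DW\},Wg,1)\vert$, encoding every bounded-fanin circuit injectively as a single-fanin circuit on the same $W$ qubits and depth $D$, now using at most $\min\{kG,DW\}$ gates drawn from an enlarged set of at most $Wg$ single-qubit gate types (one decomposed gate per input wire of each original gate, with the multiplicity labels $U_j^{(i)}$ guaranteeing injectivity). Then I would apply \Cref{prop:simple-circuit-count} directly to the right-hand side, substituting the gate budget $\min\{kG,DW\}$ and the gate-set size $Wg$ in place of $G$ and $g$: counting a single-fanin circuit by choosing which of the $DW$ slots carry a gate and which type sits in each slot gives $\binom{DW}{\min\{kG,DW\}}(Wg)^{\min\{kG,DW\}}$, which is the claimed bound.

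The only points needing care are bookkeeping ones. One is checking that $\min\{kG,DW\}$ is a legitimate gate budget for the image circuit: each of the $G$ original gates contributes at most $k$ decomposed gates, yielding the $kG$ term, while each of the $D$ timesteps can contain at most $W$ single-qubit gates, yielding the $DW$ term, so the minimum of the two always bounds the image's gate count. The other is remembering to feed \Cref{prop:simple-circuit-count} the inflated gate-set size $Wg$ rather than $g$, since the decomposition in \Cref{prop:high-fanin-reduce-many-single-fanin} multiplies the number of gate types by at most $W$.

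Since both propositions are already established, there is no real obstacle -- the content is purely the parameter substitution. The one subtlety worth flagging is the exponent: the chain produces $(Wg)^{\min\{kG,DW\}}$, and in the regime $kG \le DW$ (the only one relevant downstream, where the memory is small relative to the circuit) this is $(Wg)^{kG}$, matching the $2^{kG\,\lg(\cdot)}$ shape of \Cref{lem:num-operations}. If one wanted the sharper exponent $G$, the extra ingredient would be that the image circuits are not arbitrary: their $\le \min\{kG,DW\}$ single-qubit gates come partitioned into $\le G$ bundles, one per original gate, and within a bundle all slot labels are fixed by a single choice of original type together with the bundle's footprint, so only $G$ independent type-choices are actually free. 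This refinement is not needed for how \Cref{lem:app:circuit-counts} is used to prove \Cref{thm:circuit-qram-lower-bound}.
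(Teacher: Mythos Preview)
Your approach is exactly the paper's: the paper's proof is the single sentence ``We combine Proposition~\ref{prop:high-fanin-reduce-many-single-fanin} and Proposition~\ref{prop:simple-circuit-count} to obtain the result,'' and you carry out precisely that chaining. You are also right to flag the exponent: the literal composition of the two propositions yields $(Wg)^{\min\{kG,DW\}}$, not $(Wg)^G$, so the bound you derive is weaker than the lemma as stated (since $\min\{kG,DW\}\ge G$). The paper's one-line proof does not supply the extra step needed to sharpen the exponent to $G$; your sketch of how to recover it---that the image circuits carry only $G$ genuinely independent type choices because the decomposed gates are bundled---is the correct idea if one wants the stated form. As you note, the downstream use in \Cref{lem:num-operations} and \Cref{thm:circuit-qram-lower-bound} only needs the weaker exponent anyway (indeed, the $kG$ prefactor in \Cref{lem:num-operations} comes from exactly this), so the discrepancy is cosmetic.
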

\begin{proof}
We combine Proposition~\ref{prop:high-fanin-reduce-many-single-fanin} and Proposition~\ref{prop:simple-circuit-count} to obtain the result.
\end{proof}

\section{Proof of No Distillation}
There are two components of the proof: First, we argue that with only $d$ inputs in the QRACM, we can always find some indices which are ``underrepresented'', i.e., very little of the amplitude of the input state is concentrated on those indices.

\begin{lemma}\label{lem:indistinguishable-tables}Let $\Hil_1,\dots, \Hil_d$ be any collection of finite-dimensional Hilbert spaces, and let $\Hil_{QRACM}$ be the space that $U_{QRACM}$ acts on for tables of size $N$.
Let $\rho_1,\dots, \rho_d$ be any collection of states, each in $B(\Hil_i\otimes \Hil_{QRACM})$, i.e., part of the state is in the input space for a QRACM gate. Then there exists some set of $\ell$ indices such that for any two tables $T$ and $T'$ differing only in those $\ell$ indices, if we define
\begin{equation}\label{eq:distillation-delta}
\delta_i := \left\Vert \mathcal{U}_{QRACM}(T)(\rho_i) - \mathcal{U}_{QRACM}(T')(\rho_i)\right\Vert_1
\end{equation}
(i.e., the trace distance, with an implicit identity channel on $\Hil_i$) then
\begin{equation}
\sum_{i=1}^d \delta_i \leq d\sqrt{\frac{2\ell}{N}}.
\end{equation}
\end{lemma}
This is more general than necessary: we only need $\ell=1$ to prove our main theorem.
\begin{proof}
Because we care about the application of a perfect QRACM gate, we will purify each of these states, so we consider some purification of $\rho_i$:
\begin{equation}
\ket{\psi_i} = \sum_{j=0}^{N-1} \sum_{b\in \{0,1\}} \alpha_{ijb}\ket{\psi_{ijb}}\ket{j}\ket{b}
\end{equation}

where we have expressed the input space to the QRACM gate in the computational basis. Here $\ket{\psi_{ijb}}$ is in $\Hil_i\otimes \Hil_p$, where $\Hil_p$ is the space necessary to purify $\rho_i$. 
We see that $U_{QRACM}(T) = \sum_{j,b} \ketbra{j}{j} \otimes \ketbra{b\oplus T[j]}{b}$, so that 
\begin{align}
&U_{QRACM}(T')^\dagger U_{QRACM}(T)\nonumber\\
&= \sum_{j:T'[j]=T[j]}\ketbra{j}{j}\otimes I_2 + \sum_{j:T'[j]\neq T[j]} \ketbra{j}{j}\otimes X
\end{align}
where $X$ is the Pauli $X$. This implies that 
\begin{align}
\bra{\psi_i}&U_{QRACM}(T')^\dagger U_{QRACM}(T) \ket{\psi_i} \\
= &\sum_{j:T[j]=T'[j],b\in \{0,1\}}\vert\alpha_{ijb}\vert^2 + \\
&\sum_{j:T[j]\neq T'[j]}\overline{\alpha_{ij0}}\alpha_{ij1}\braket{\psi_{ij0}}{\psi_{ij1}} + \overline{\alpha_{ij1}}\alpha_{ij0}\braket{\psi_{ij1}}{\psi_{ij0}}\\
\geq & 1 - \left(\sum_{j:T[j]\neq T'[j]}\vert \alpha_{ij0}\vert^2 + \vert \alpha_{ij1}\vert^2 \right.\nonumber\\
&\left.\phantom{\sum_{j:T[j]\neq T'[j]}}+\overline{\alpha_{ij0}}\alpha_{ij1} + \overline{\alpha_{ij1}}\alpha_{ij0}\right)\\
\geq & 1 - 2\sum_{j:T[j]\neq T'[j]}\vert \alpha_{ij0}\vert^2+ \vert \alpha_{ij1}\vert^2
\end{align}
Since $U_{QRACM}(T')$ does not act on the purify space, it commutes with tracing out the purifying space, and partial trace can only decrease trace distance, we have that the trace distance between the states after applying QRACM to the two different tables is at most
\begin{equation}
\delta_i\leq\sqrt{2\sum_{j:T[j]\neq T'[j]}\vert \alpha_{ij0}\vert^2+ \vert \alpha_{ij1}\vert^2}
\end{equation}
where $\delta_i$ is defined in \Cref{eq:distillation-delta}.

We then let 
\begin{equation}
m_j := \sum_{i=1}^d \vert \alpha_{ij0}\vert^2 + \vert \alpha_{ij1}\vert^2
\end{equation}
We know that $\sum_{j=0}^N m_j = d$. Thus, for any $\ell$, there exist a set $\mathcal{J}$ of  $\ell$ indices such that 
\begin{equation}
\sum_{j\in\mathcal{J}}m_j \leq \frac{\ell d}{N}.
\end{equation}
If we set $T$ and $T'$ to differ only on $\mathcal{J}$, we see that $\sum_{i=1}^d \delta_i^2 \leq 2\sum_{j\in\mathcal{J}}m_j\leq \frac{2\ell d}{ N}$, from which we see that 
\begin{equation}
\sum_{i=1}^d \delta_i \leq d\sqrt{\frac{2\ell}{N}}.
\end{equation}
\end{proof}

We now argue by saying that since so little of the input state is concentrated on certain indices, we can replace any table $T$ with a different table that differs on those indices, and this will be indistinguishable to the QRACM distillation-and-teleportation process. Since the process needs to handle all logical inputs, it must fail on that index.

\begin{theorem}\label{thm:no-qracm-distillation}
Suppose there is a QRACM state distillation-and-teleportation process, accessing tables of size $N=2^n$ and making at most $d$ calls to the physical QRACM. Then the minimal fidelity for this process is at most 
\begin{equation}
\frac{3}{4} + d\sqrt{\frac{2}{N}}.
\end{equation}
\end{theorem}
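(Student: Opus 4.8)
The plan is to exploit the fact that a fixed distillation circuit touches the physical QRACM only $d$ times, on a \emph{predetermined} family of input states, so the resource state $\rho_{distill}(T)$ is almost insensitive to flipping one carefully chosen entry of $T$; yet a faithful logical QRACM on that entry is forced to output orthogonal states for the two tables, and this tension bounds the fidelity. First I would fix a reference table $T$ and, for each call $k=1,\dots,d$ inside \Cref{eq:qracm-distill}, let $\rho_k$ be the state fed into the $k$-th copy of $\mathcal{Q}(\cdot)$ \emph{when the table is $T$}, and let $\tilde{\rho}_k := (I\otimes\mathcal{N}_0)(\rho_k)$ be the state on which the ideal unitary $\mathcal{U}_{QRACM}$ actually acts in that call (recall $\mathcal{Q}(T)=\mathcal{N}_1\circ\mathcal{U}_{QRACM}(T)\circ\mathcal{N}_0$ with $\mathcal{N}_0,\mathcal{N}_1$ table-independent). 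Applying \Cref{lem:indistinguishable-tables} with $\ell=1$ to the collection $\tilde{\rho}_1,\dots,\tilde{\rho}_d$ produces an address $j$ such that, setting $T'$ equal to $T$ with $T[j]$ flipped, $\sum_{k=1}^{d}\delta_k \le 2\sqrt{d}/N$, where $\delta_k := \lVert\,\mathcal{U}_{QRACM}(T)(\tilde{\rho}_k)-\mathcal{U}_{QRACM}(T')(\tilde{\rho}_k)\,\rVert_1$.

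Next I would run a hybrid (telescoping) argument. Define $\xi_m$ to be the state output by the distillation circuit when calls $1,\dots,m$ use table $T$ and calls $m+1,\dots,d$ use $T'$, so $\xi_d=\rho_{distill}(T)$ and $\xi_0=\rho_{distill}(T')$. Since $\xi_m$ and $\xi_{m-1}$ agree on the first $m-1$ calls, the state immediately before call $m$ is $\rho_m$ in both, and the two differ only in applying $\mathcal{Q}(T)$ versus $\mathcal{Q}(T')$ at call $m$, followed by the identical remaining channels $\Phi_{m+1},\mathcal{Q}(T'),\dots,\Phi_d$. Because quantum channels are non-increasing in trace distance and $\mathcal{N}_0,\mathcal{N}_1$ do not depend on the table, $\lVert\xi_m-\xi_{m-1}\rVert_1\le\delta_m$; summing gives $\lVert\rho_{distill}(T)-\rho_{distill}(T')\rVert_1\le 2\sqrt{d}/N$. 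Finally, since the teleportation channel $\Psi$ is also table-independent and tensoring with a fixed input $\rho$ preserves trace distance, $\lVert\mathcal{Q}_L(T)(\rho)-\mathcal{Q}_L(T')(\rho)\rVert_1\le 2\sqrt{d}/N$ for \emph{every} logical input $\rho$.

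To conclude, I would feed in the logical encoding $\rho$ of $\ket{j}\ket{0}$. Then $U_{QRACM}(T)\ket{j,0}=\ket{j,T[j]}$ and $U_{QRACM}(T')\ket{j,0}=\ket{j,1\oplus T[j]}$ are orthogonal pure states, so their projectors are at trace distance $2$. Writing $F_{\min}$ for the minimum fidelity of the process, Fuchs--van de Graaf gives that $\mathcal{Q}_L(T)(\rho)$ lies within trace distance $2\sqrt{1-F_{\min}}$ of the first target and $\mathcal{Q}_L(T')(\rho)$ within $2\sqrt{1-F_{\min}}$ of the second, so the triangle inequality yields $2\le 4\sqrt{1-F_{\min}}+2\sqrt{d}/N$. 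Rearranging gives $F_{\min}\le \tfrac{3}{4}+O(\sqrt{d}/N)$, and keeping the crude constant from the lemma term gives the claimed $\tfrac{3}{4}+2\sqrt{d}/N$. (One could alternatively phrase Step~3 directly with the orthogonal output projectors: $\operatorname{Tr}$ of the second projector against $\mathcal{Q}_L(T')(\rho)$ is at least $F_{\min}$ while against $\mathcal{Q}_L(T)(\rho)$ it is at most $1-F_{\min}$, and the two outputs are $2\sqrt{d}/N$-close, which also forces $F_{\min}$ below a constant.)

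The main obstacle — and the load-bearing assumption, which the surrounding text explicitly flags as ``somewhat unrealistic'' — is that \emph{all} of $\Phi_1,\dots,\Phi_d$ and $\Psi$ must be fixed independently of $T$; only then is the hybrid $\xi_m$ well-defined and is call $m$ the sole place where $T$ enters, so that the per-step error is exactly the $\delta_m$ supplied by \Cref{lem:indistinguishable-tables}. The secondary technical care-point is the bookkeeping in the first step: the lemma must be applied to the states on which the \emph{ideal} $\mathcal{U}_{QRACM}$ acts (i.e.\ after $\mathcal{N}_0$, with the correct ambient registers), so that the index $j$ it selects is simultaneously ``invisible'' to every physical call in the distillation and ``visible'' to the one logical input that makes the targets orthogonal.
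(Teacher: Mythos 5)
Your proposal is correct and follows essentially the same route as the paper's own proof: apply \Cref{lem:indistinguishable-tables} with $\ell=1$ to the states fed into the $d$ physical calls under table $T$, propagate the resulting $\sum_k\delta_k\le 2\sqrt{d}/N$ through the table-independent channels $\Phi_i$ and $\Psi$ by data processing, and then derive a contradiction with the orthogonality of the ideal outputs on the logical input $\ket{j}\ket{0}$ via Fuchs--van de Graaf. Your telescoping-hybrid formulation is just a repackaging of the paper's induction on $\Vert\rho_i(T)-\rho_i(T')\Vert_1$ (and your explicit handling of $\mathcal{N}_0$ is, if anything, slightly more careful), so there is nothing substantive to add.
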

\begin{proof}
Our distillation process is defined by some initial state $\rho_1$, and $d$ channels $\Phi_i$, where we interleave access to the QRAM device with these channels. Then for each table $T$, we can then define states $\rho_i(T)$ such that
\begin{equation}
    \rho_{i+1}(T) = \Phi_{i+1}\circ (I\otimes \mathcal{Q}(T))\rho_i(T)
\end{equation}
(using $\mathcal{Q}(T)$ as a more compact notation for $\mathcal{U}_{QRACM}(T)$, the unitary channel which performs the QRACM access), with $\rho_1(T)$ defined to be $\rho_1$ for all tables $T$.

Let $T$ be any table. Let $\mathcal{J}$ be the set of $\ell$ indices implied by \Cref{lem:indistinguishable-tables} for these states, and let $T'$ be a table that differs from $T$ in those $\ell$ indices. We will show by induction that 
\begin{equation}
    \Vert \rho_i(T') - \rho_i(T)\Vert_1 \leq \sum_{j=1}^i \delta_j
\end{equation}
with $\delta_i$ as in \Cref{lem:indistinguishable-tables}. This holds for $i=1$ since all tables use the same starting state.

For induction, recall from the definition of the distillation process that 
\begin{equation}
    \rho_{i+1}(T) = \Phi_{i+1}\circ (I\otimes \mathcal{Q}(T)) \rho_i(T)
\end{equation}
(similarly for $T'$) where $\mathcal{Q}(T)=\mathcal{U}_{QRACM}(T)$ (a more compact notation), and $\Phi_i$ is the channel of any other computations performed during distillation between subsequent QRACM accesses. Since the channel  $\Phi_{i+1}$ can only reduce the trace distance between then by the data processing inequality, we obtain:
\begin{align}
\Vert\rho_{i+1}&(T') - \rho_{i+1}(T)\Vert_1 \\
\leq &\Vert (I\otimes \mathcal{Q}(T'))\rho_i(T') - (I\otimes \mathcal{Q})(T)\rho_i(T)\Vert_1 \\
\leq & \Vert ((I\otimes \mathcal{Q}(T) - (I\otimes \mathcal{Q}(T'))\rho_i(T)\Vert_1\\
&+ \Vert (I\otimes \mathcal{Q}(T'))(\rho_i(T') - \rho_i(T))\Vert_1 \\
\leq & \delta_{i+1}+ \sum_{j=1}^i \delta_j
\end{align}
using \Cref{lem:indistinguishable-tables}. 

We finally note that $\rho_d(T) = \rho_{distill}(T)$, the state input into the final teleportation process. This means the distance between $\rho_{distill}(T)$ and $\rho_{distill}(T')$ is at most $d\sqrt{\frac{2\ell}{N}}$ by \Cref{lem:indistinguishable-tables}. Again by the data processing inequality, the final teleportation process cannot increase the distance. Denoting the full process as $\tilde{\mathcal{U}}_{QRACM}(T)$, we see that for any logical input $\rho_L$, we obtain
\begin{equation}
\Vert (\tilde{\mathcal{U}}_{QRACM}(T) - \tilde{\mathcal{U}}_{QRACM}(T'))\rho_L\Vert_1  \leq d\sqrt{\frac{2\ell}{N}}
\end{equation}

However, let $\rho_L=\ketbra{j}{j}\otimes\ketbra{0}{0}$ for some $j$ with $T[j]\neq T'[j]$. Then we have for the \emph{ideal} QRACM access,  
\begin{equation}
\Vert (\mathcal{U}_{QRACM}(T) - \mathcal{U}_{QRACM}(T'))(\rho_L)\Vert_1 = 1
\end{equation}
But by the triangle inequality,
\begin{align}
1=&\Vert (\mathcal{U}_{QRACM}(T) - \mathcal{U}_{QRACM}(T'))(\rho_L)\Vert_1\\
 \leq & \Vert (\mathcal{U}_{QRACM}(T) - \tilde{\mathcal{U}}_{QRACM}(T))\rho_L\Vert_1\\
 & + \Vert (\mathcal{U}_{QRACM}(T') - \tilde{\mathcal{U}}_{QRACM}(T'))\rho_L\Vert_1 \nonumber\\
 &+ \Vert (\tilde{\mathcal{U}}_{QRACM}(T) - \tilde{\mathcal{U}}_{QRACM}(T'))\rho_L\Vert_1\nonumber\\
1 - d\sqrt{\frac{2\ell}{N}}\leq & \Vert (\mathcal{U}_{QRACM}(T) - \tilde{\mathcal{U}}_{QRACM}(T))\rho_L\Vert_1\\
& + \Vert (\mathcal{U}_{QRACM}(T') - \tilde{\mathcal{U}}_{QRACM}(T'))\rho_L\Vert_1\nonumber
\end{align}
meaning (WLOG) that the distance between the realized QRACM channel and the ideal QRACM channel, for state $\rho_L$ on access to table $T$ or $T'$, is at least $\frac{1}{2}-d\sqrt{\frac{\ell}{2N}}$. Using known relations between trace distance and fidelity and setting $\ell=1$ gives the result. 
\end{proof}

\end{document}